\documentclass[11pt]{article}
\usepackage{graphicx}
\usepackage{epstopdf}
\usepackage{amssymb,bm}
\usepackage{bbm}
\usepackage{amsmath}
\usepackage{amsfonts}
\usepackage[margin=1.0in]{geometry}
\usepackage{cite}
\usepackage[shortlabels]{enumitem}
\usepackage{comment}

\usepackage{color}

\DeclareMathOperator{\E}{\mathbbmss{E}}


\DeclareMathOperator{\Hrm}{\mathsf{H}}

\allowdisplaybreaks

\usepackage{amsthm}
\newtheorem{theorem}{Theorem}
\newtheorem*{theorem*}{Theorem}
\newtheorem{lemma}{Lemma}

\theoremstyle{definition}  
\newtheorem{definition}{Definition}
\newtheorem{remark}{Remark}
\newtheorem{corollary}{Corollary}

\usepackage{chngcntr}
\counterwithout{theorem}{section}
\counterwithout{definition}{section}
\counterwithout{lemma}{section}
\counterwithout{remark}{section}
\counterwithout{assumption}{section}
\counterwithout{proposition}{section}
\counterwithout{corollary}{section}
\counterwithout{claim}{section}

\begin{document}
\title{On the Optimality of Treating Inter-Cell Interference as Noise: Downlink Cellular Networks and Uplink-Downlink Duality
\footnotetext{Hamdi Joudeh and Giuseppe Caire are with the Faculty of Electrical Engineering and Computer Science, Technische Universität Berlin, 10587 Berlin, Germany (e-mail: h.joudeh@tu-berlin.de; caire@tu-berlin.de).
Bruno Clerckx is with the
Department of Electrical and Electronic Engineering, Imperial College London, London SW7 2AZ, U.K.
(e-mail: b.clerckx@imperial.ac.uk).
Xinping Yi is with Department of Electrical Engineering and Electronics,
University of Liverpool, Liverpool L69 3BX, U.K. (e-mail: xinping.yi@liverpool.ac.uk).
\\
This work was partially supported by the U.K. Engineering and Physical Sciences Research Council (EPSRC)
under grants EP/N015312/1 and EP/R511547/1, and the European Research Council (ERC) under the ERC Advanced Grant N. 789190, CARENET. Parts of this paper were presented at the 2019 IEEE ISIT \cite{Joudeh2019}.}}
\author{Hamdi~Joudeh, Xinping~Yi, Bruno~Clerckx and Giuseppe~Caire}
\date{}
\maketitle
\begin{abstract}
We consider the information-theoretic optimality of treating inter-cell interference as noise (multi-cell TIN) in downlink cellular networks.
We focus on scenarios modeled by the Gaussian interfering broadcast channel (IBC), comprising $K$ mutually interfering 
Gaussian broadcast channels (BCs), each formed by a base station communicating independent messages to an arbitrary number of users. 
We establish a new power allocation duality between the IBC and its dual interfering multiple access channel (IMAC),
which entails that the corresponding generalized degrees-of-freedom regions achieved through multi-cell TIN and power control 
(TINA regions) for both networks are identical.
As by-products of this duality, we obtain an explicit characterization of the IBC TINA region from a previously established 
characterization of the IMAC TINA region; and identify a multi-cell convex-TIN regime in which the IBC TINA region is a polyhedron (hence convex) without the need for time-sharing.
We then identify a smaller multi-cell TIN regime in which the IBC TINA region is optimal and multi-cell TIN achieves the entire 
capacity region of the IBC, up to a constant gap.
This is accomplished by deriving a new genie-aided outer bound for the IBC, that reveals a novel BC-type order that 
holds amongst users in each constituent BC (or cell) under inter-cell interference, which in turn is not implied by previously known BC-type orders (i.e. degraded, less noisy and more capable orders).
The multi-cell TIN regime that we identify  for the IBC  coincides with a corresponding multi-cell TIN regime previously identified for the IMAC, 
hence establishing a comprehensive uplink-downlink duality of multi-cell TIN in the GDoF (and approximate capacity) sense.
\end{abstract}
\newpage
\section{Introduction}
\label{sec:introduction}
Cellular networks have become an indispensable part of infrastructure for modern day societies. 
Against extensive research and progress, however, information-theoretic capacity limits of such networks remain largely elusive.
At the heart of any cellular network lies a Gaussian interference channel (IC), where each transmitter is paired with a single receiver. 
The capacity region of the simplest instance of this canonical network, i.e. the 2-user IC, remains unknown in general, constituting a long-standing open problem in network information theory \cite{ElGamal2011}.
Nevertheless, despite this lack of exact capacity results, significant progress has been made over the recent years in the understanding of the fundamental limits of interference networks, and specifically of cellular networks. Such progress was made possible by taking few steps away from exact capacity results, and instead pursuing approximate characterizations, mainly through studying degrees-of-freedom (DoF), generalized degrees-of-freedom (GDoF), and related measures and models \cite{Cadambe2008,Etkin2008,Bresler2010,Jafar2010,Jafar2011,Suh2011}.
For instance, going back to the 2-user IC, Etkin, Tse and Wang \cite{Etkin2008}  introduced the notion of GDoF, and characterized the capacity region of this network to within a constant gap of $1$ bit per channel use.

Most relevant to this paper are GDoF studies that focus on regimes of channel strength parameters where interfering links, i.e. links between unpaired transmitters and receivers, are sufficiently weak, such that simple schemes based on power control and treating interference as Gaussian noise (in short, TIN) are optimal \cite{Geng2015,Geng2015a,Geng2016,Sun2016,Yi2016,Gherekhloo2016,Gherekhloo2017,
Geng2018,Yi2019,Chan2019,Joudeh2019a}.
A  breakthrough in this direction is due to Geng et al.  \cite{Geng2015}, who identified a wide regime in which TIN achieves the entire GDoF region of the $K$-user IC,\footnote{This result leads to a characterization of the entire capacity region to within a constant gap in the TIN regime.} known as the TIN regime.
The practical significance of this result is considerable --- TIN is both simple to implement and robust against inaccuracies in channel state 
information at the transmitters (CSIT); moreover, inspired by this TIN result,
a number of high-performing, practical power control and link scheduling algorithms were proposed for device-to-device (D2D) networks \cite{Naderializadeh2014,Geng2016,Yi2016}.
The TIN optimality result of \cite{Geng2015}  is also very interesting from a theoretical perspective --- while the GDoF region of the $K$-user IC remains unknown in its full generality,  this TIN result offers a way forward, serving  as an intermediate step towards a comprehensive solution for general parameter regimes. 

Building upon the result in \cite{Geng2015}, a number of extensions and generalizations followed.
In \cite{Yi2016}, a broader regime for the $K$-user IC, called the convex-TIN (CTIN) regime, is identified, where the TIN achievable (TINA) GDoF region is shown to be convex without the need for time-sharing.\footnote{We recall that the GDoF region achieved through TIN, i.e. the TINA region, which implicitly incorporates power control, is non-convex in general when time-sharing is not used \cite{Geng2015}.}
This CTIN regime is especially relevant when CSIT is limited to finite precision,
a case where TIN turns out to be GDoF optimal for the  $K$-user IC in the CTIN regime, as recently shown in  \cite{Chan2019}.
Beyond the $K$-user IC, TIN GDoF results were derived for other settings, including: channels with general message sets (i.e. X channels) \cite{Geng2015a,Gherekhloo2017}, parallel channels \cite{Sun2016}, multi-state (compound) channels \cite{Geng2016}, multi-state channels with opportunistic decoding capabilities \cite{Yi2019},  and uplink cellular settings modelled by the  Gaussian interfering multiple access channel (IMAC) \cite{Gherekhloo2017,Joudeh2019a}.

In this paper, we are primarily interested in the multi-cell TIN framework of \cite{Joudeh2019a}. In the context of uplink cellular settings modelled by the IMAC, 
multi-cell TIN  is defined in \cite{Joudeh2019a} as the employment of a power-controlled, single-cell-type strategy in each cell, while treating inter-cell interference as Gaussian noise.\footnote{Taking this definition of multi-cell TIN to the extreme of having a single-user in each cell, we reduce to TIN in the  $K$-user IC, e.g. \cite{Geng2015}, with power control and point-to-point Gaussian codes employed in each cell.}
In light of this multi-cell TIN framework, the contributions of \cite{Joudeh2019a} are three-fold: 1) the IMAC TINA region is explicitly characterized as a finite union of polyhedra, each described in terms of channel strength parameters,  2) a multi-cell CTIN regime is identified for the IMAC, in which the TINA region is a polyhedron (hence convex) without the need for time-sharing, and 3) a smaller 
multi-cell TIN regime is identified for the IMAC, in which the TINA region is optimal, and hence multi-cell TIN achieves the entire GDoF region of the IMAC.
Given these multi-cell TIN results for uplink cellular settings, a natural question arises as to whether such results extend to counterpart downlink settings.
We answer this question with the affirmative in this paper.

We focus on downlink cellular settings modelled by the Gaussian interfering broadcast channel (IBC) \cite{Suh2011}, comprising  $K$ mutually interfering Gaussian broadcast channels (BCs), each with an arbitrary number of users.
For the IBC, multi-cell TIN as defined in \cite{Joudeh2019a} translates to the employment of power-controlled superposition coding and successive decoding in each cell, while treating inter-cell interference as Gaussian noise.
Power control naturally complements superposition coding in single-cell settings, i.e. the Gaussian BC, to achieve different  trade-offs among users. The employment of power control in multi-cell TIN settings, however, serves an additional purpose of managing inter-cell interference, hence achieving various trade-offs among users across cells. 
Similar to the definition of the IMAC TINA region in \cite{Joudeh2019a}, the set of GDoF tuples achieved through all feasible power allocations and successive decoding orders constitute  the IBC TINA region. 

In light of the above-described extension of the multi-cell TIN framework in \cite{Joudeh2019a} to the IBC, the contributions of this paper are three-fold, constituting downlink counterparts of the uplink results in \cite{Joudeh2019a}.
These counterparts, nevertheless, require new proof techniques compared to the ones in \cite{Joudeh2019a}, and give rise to fresh insights into the multi-cell TIN problem.
These are summarized as follows:
\begin{enumerate}
\item \emph{IBC TINA region characterization via uplink-downlink duality:} We show that the IBC TINA region is identical to the TINA region of its dual IMAC, obtained by reversing the roles of transmitters and receivers. 
We establish this GDoF uplink-downlink duality of multi-cell TIN by deriving an explicit relationship between power control variables in the IBC and their counterparts in the dual IMAC (see Lemma \ref{lemma:power_allocation_duality}).
This duality enables us to leverage the IMAC TINA region characterization in \cite{Joudeh2019a} to obtain an explicit characterization of the IBC TINA region as a finite union of polyhedra, each described in terms of channel strength parameters.
While uplink-downlink duality results are of interest in their own right, a key advantage of this duality approach is avoiding the potential graph approach of \cite{Geng2015}, which when used directly in multi-cell settings, requires a cumbersome and lengthy procedure of eliminating power control variables and redundant GDoF inequalities (see \cite{Joudeh2019a}). 
\item  \emph{Multi-cell CTIN Regime:} 
The GDoF uplink-downlink duality of multi-cell TIN implies that, similar to the IMAC TINA region, the IBC TINA region is a polyhedron (hence convex) in the multi-cell CTIN regime identified in \cite{Joudeh2019a}. 
Given this observation, we provide an interpretation of the multi-cell CTIN conditions of \cite{Joudeh2019a} in the context of the IBC, and further show that such conditions are insufficient for multi-cell TIN optimality in the IBC (see Section \ref{subsubsec:CTIN_insights}). In particular, we demonstrate that in a sub-regime of the CTIN regime, a scheme based on interference alignment (IA) achieves strict GDoF gains over TIN.
\item  \emph{Multi-cell TIN Regime:}  We establish the optimality of the IBC TINA region in the multi-cell TIN regime identified in \cite{Joudeh2019a}, which is contained in the multi-cell CTIN regime.
This is accomplished by deriving a new genie-aided outer bound for the IBC, which is tight in the GDoF sense in the multi-cell TIN regime.\footnote{This outer bound is also within a constant gap from the entire capacity region in the multi-cell TIN regime.}
In deriving this outer bound, we reveal a new order amongst users in each cell (or BC) of the IBC, which in the multi-cell TIN regime, remains unchanged regardless of the presence or absence of inter-cell interference. 
This new order, which we call the redundancy order, is less restrictive than known orders for the BC, as the degraded order and the less noisy order, which are not necessarily preserved under inter-cell interference in the multi-cell TIN regime (see Section \ref{subsubsec:TIN_insights}). 
Interestingly, in the sub-regime of the CTIN regime which does not overlap with the TIN regime, the redundancy order of users is not necessarily preserved in every cell under 
inter-cell interference.  
This collapse of order in some (or all) cells opens the door for GDoF gains through IA, as alluded to in the above point.
\end{enumerate}
The remainder of this paper is organized as follows:
In Section \ref{sec:system model}, we describe the system model for the IBC and its dual IMAC.
Multi-cell TIN, related definitions and a summary of prior results for the IMAC are presented in Section \ref{sec:TIN}.
The main results of this paper and key insights are given in Section \ref{sec:main_results}. The outer bound for the IBC, used to establish the TIN optimality result in Section \ref{sec:main_results}, is presented alongside its proof in Section   \ref{sec:proof_of_outerbound}.
Section \ref{sec:conclusion} concludes this paper.
In addition to the main sections, some technical details and proofs are relegated to the appendices. 
\subsection{Notation}
For positive integers $z_{1}$ and $z_{2}$, with $z_{1} \leq z_{2}$, the sets $\{1,2,\ldots,z_{1}\}$ and $\{z_{1},z_{1}+1,\ldots,z_{2}\}$ are denoted by
$\langle z_{1} \rangle$ and $\langle z_{1}:z_{2}\rangle$, respectively.
For a real number $a$, $(a)^{+} = \max\{0,a\}$.
The cardinality of set
$\mathcal{A}$ is denoted by $|\mathcal{A}|$.
For sets $\mathcal{A}$ and $\mathcal{B}$, $\mathcal{A} \setminus \mathcal{B}$ is the set of elements in $\mathcal{A}$ and not in $\mathcal{B}$.
The indicator function with condition $A$ is denoted by $\mathbbm{1}(A)$,
which is $1$ when $A$ holds and  $0$ otherwise. 
\section{System Model}
\label{sec:system model}
Consider a $K$-cell cellular network in which each cell $k$ comprises a base station 
BS-$k$ and $L_{k}$ user equipments each denoted by UE-$(l_{k},k)$, where $l_{k} \in \langle L_{k} \rangle$ and $k \in \langle K \rangle$.
The set of tuples corresponding to all UEs in the network is denoted by
$\mathcal{K} \triangleq \left\{(l_{k},k) : l_{k} \in \langle L_{k} \rangle, k \in \langle K \rangle  \right\}$.
\subsection{IBC and Dual IMAC}
When operating in the downlink mode, we assume that the above cellular network is modeled by a Gaussian IBC with private (or unicast) messages only, 
e.g. Fig. \ref{fig:IBC_IMAC}(left).
The input-output relationship at the $t$-th use of this channel, where $t \in \mathbb{N}$, is described as
\begin{equation}
\label{eq:IBC_system model}
Y_{k}^{[l_{k}]}(t)
 = \sum_{i = 1}^{K} \tilde{h}_{ki}^{[l_{k}]} \tilde{X}_{i}(t)
+ Z_{k}^{[l_{k}]}(t).
\end{equation}
In the above, $Y_{k}^{[l_{k}]}(t)$ is the signal received by UE-$(l_{k},k)$, $\tilde{h}_{ki}^{[l_{k}]}$ is the channel coefficient from BS-$i$ to UE-$(l_{k},k)$, $\tilde{X}_{i}(t)$ is the transmitted symbol of BS-$i$ and $Z_{k}^{[l_{k}]}(t) \sim \mathcal{N}_{\mathbb{C}}(0,1)$
is the additive white Gaussian noise (AWGN) at UE-$(l_{k},k)$.
All symbols are complex and the signal transmitted from each BS-$i$ is subject to the average power constraint
given by
\begin{equation}
\label{eq:power_constraint_IBC}
\frac{1}{n}\sum_{t=1}^{n}\E \big[|\tilde{X}_{i}(t)|^{2}\big] \leq P_{i},
\end{equation}
where $n$ is the duration of the communication in channel uses.

The dual IMAC for the above IBC is obtained by reversing the roles of transmitters and receivers in the IBC, e.g. Fig. \ref{fig:IBC_IMAC}(right).
The input-output relationship for the dual IMAC is given by
\begin{equation}
\label{eq:IMAC_system model}
\bar{Y}_{i}(t)  = \sum_{k = 1}^{K} \sum_{l_{k} = 1}^{L_{k}}  \tilde{h}_{ki}^{[l_{k}]} \tilde{\bar{X}}_{k}^{[l_{k}]}(t)
+ \bar{Z}_{i}(t),
\end{equation}
where $\bar{Y}_{i}(t)$ and $\bar{Z}_{i}(t) \sim \mathcal{N}_{\mathbb{C}}(0,1)$ are the received signal and the AWGN at BS-$i$,
respectively, and $\tilde{\bar{X}}_{k}^{[l_{k}]}(t)$ is the transmitted symbol of UE-$(l_{k},k)$.
Signals transmitted by the UEs of cell $k$ are subject to the average sum-power constraint
given by
\begin{equation}
\label{eq:power_constraint_IMAC}
\frac{1}{n}\sum_{t=1}^{n} \E \big[|\tilde{\bar{X}}_{k}^{[l_{k}]}(t)|^{2}\big] \leq \frac{1}{L_{k}} \cdot P_{k}.
\end{equation}
\begin{remark}
In BC-MAC dualities, a rather artificial sum-power constraint across non-cooperating UEs
in the dual MAC is commonly adopted so that that the BC and MAC capacity regions coincide, see for example \cite{Jindal2004}.
Following the same approach, the $L_{k}$ power constraints associated with the UEs in cell $k$ in \eqref{eq:power_constraint_IMAC}
should be replaced with a per-cell sum-power constraint given by 
\begin{equation}
\frac{1}{n}\sum_{t=1}^{n} \sum_{l_{k}=1}^{L_{k}} \E \big[|\tilde{\bar{X}}_{k}^{[l_{k}]}(t)|^{2}\big] \leq P_{k}.
\end{equation}
Nevertheless, for GDoF purposes, it suffices to consider
per-user power constraints as in \eqref{eq:power_constraint_IMAC}.
\hfill $\lozenge$
\end{remark}
\begin{figure}
\centering
\includegraphics[width = 0.7\textwidth,trim={0cm 0cm 0cm 0cm},clip]{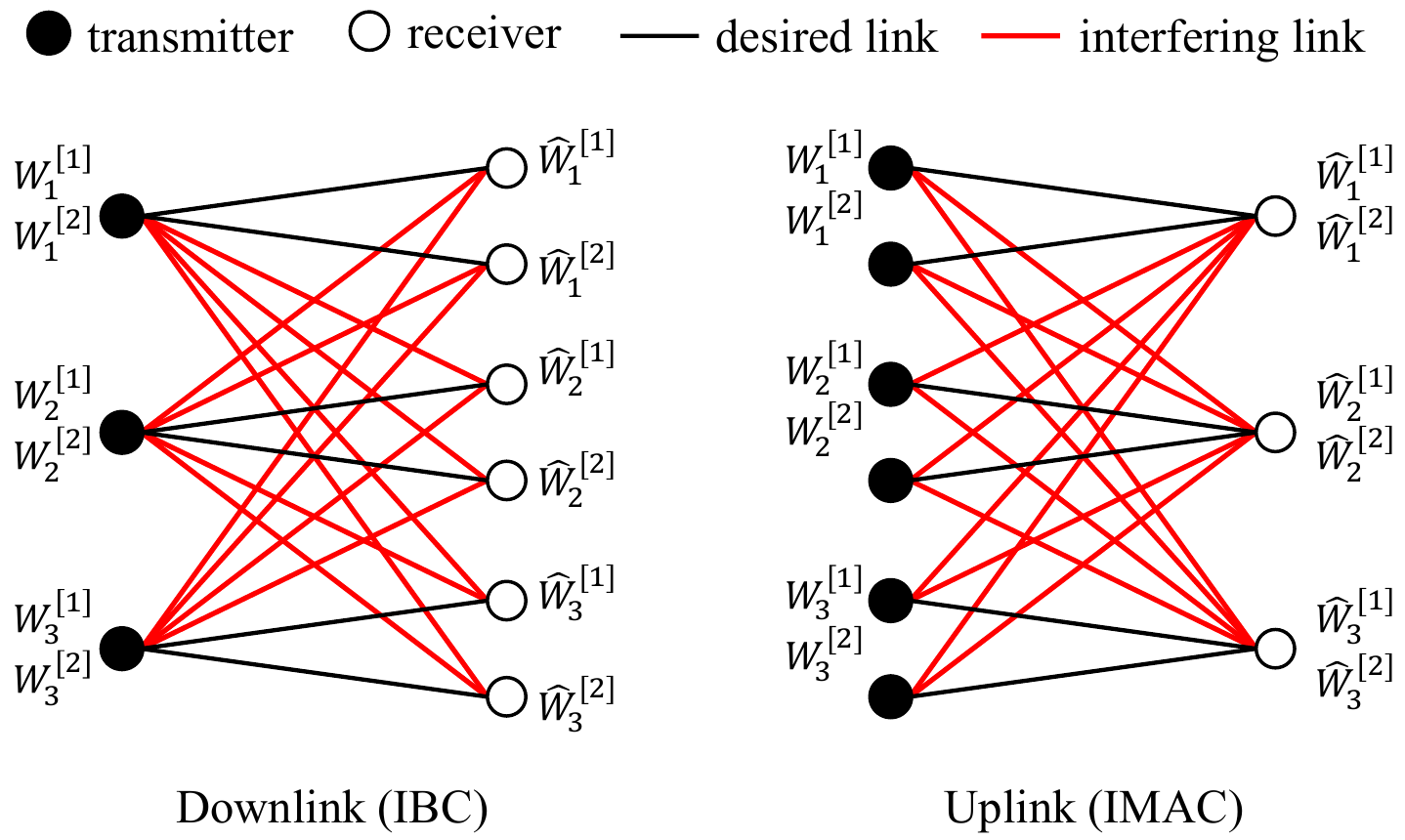}
\caption{\small
Cellular network examples: (left) 3-cell interfering broadcast channel (donwlink) with 2 users in each cell, and (right) the corresponding 
dual interfering multiple access channel (uplink).}
\label{fig:IBC_IMAC}
\end{figure}
\subsection{GDoF Framework}
\label{subsec:system_model_GDoF}
Following \cite{Geng2015}, the above channel models are translated into GDoF-friendly normalized models,
to facilitate GDoF and constant gap capacity studies.
To this end, we define the channel strength level of the link between BS-$i$ and UE-$(l_{k},k)$ as 
\begin{equation}
\alpha_{ki}^{[l_{k}]} \triangleq \frac{ \max \left\{ 0, \log \left( | \tilde{h}_{ki}^{[l_{k}]} |^{2} P_{i} \right) \right\} }{\log P}
, \ \forall (l_{k},k) \in \mathcal{K}, \
i\in \langle K \rangle,
\end{equation}
where $P >0$ is a nominal power value. 
The IBC input-output relationship in \eqref{eq:IBC_system model}  translates into
\begin{equation}
\label{eq:IBC_system model GDoF}
Y_{k}^{[l_{k}]}(t)  =  \sum_{i = 1}^{K} \sqrt{ P^{\alpha_{ki}^{[l_{k}]}} } e^{j \theta_{ki}^{[l_{k}]}} X_{i}(t)
+ Z_{k}^{[l_{k}]}(t),
\end{equation}
where $X_{i}(t) \triangleq {\tilde{X}_{i}(t)}/{\sqrt{P_{i}}}$ is the normalized transmit symbol of BS-$i$ with power constraint
\begin{equation}
\label{eq:power_constraint_IBC_GDoF}
\frac{1}{n}\sum_{t=1}^{n}\E \Big[\big|X_{i}(t)\big|^{2}\Big] \leq 1.
\end{equation}
In  \eqref{eq:IBC_system model GDoF},  $\sqrt{ P^{\alpha_{ki}^{[l_{k}]}} }$ and $\theta_{ki}^{[l_{k}]}$ are the magnitude and phase of the link between BS-$i$ and UE-$(l_{k},k)$, respectively, from which we define the corresponding coefficient as  $h_{ki}^{[l_{k}]} \triangleq \sqrt{ P^{\alpha_{ki}^{[l_{k}]}} } e^{j \theta_{ki}^{[l_{k}]}} $.
As shown in \cite{Geng2015}, avoiding negative channel strength levels has no impact on GDoF or constant gap results.
Therefore, we focus on the equivalent channel model in \eqref{eq:IBC_system model GDoF}
henceforth.

Moving on to the dual IMAC, the model in \eqref{eq:IMAC_system model} translates into
\begin{equation}
\label{eq:IMAC_system model_GDoF}
\bar{Y}_{i}(t)  = \sum_{k = 1}^{K} \sum_{l_{k} = 1}^{L_{k}}
\sqrt{ P^{\alpha_{ki}^{[l_{k}]}} } e^{j \theta_{ki}^{[l_{k}]}} \bar{X}_{k}^{[l_{k}]}(t)
+ \bar{Z}_{i}(t),
\end{equation}
where $\bar{X}_{k}^{[l_{k}]}(t) \triangleq \tilde{\bar{X}}_{k}^{[l_{k}]}(t)/\sqrt{P_{k}}$
is the normalized transmit symbol of UE-$(l_{k},k)$ with power constraint
\begin{equation}
\label{eq:power_constraint_IMAC_GDoF}
\frac{1}{n}\sum_{t=1}^{n}\E \Big[\big|\bar{X}_{k}^{[l_{k}]}(t)\big|^{2}\Big] \leq \frac{1}{L_{k}}.
\end{equation}
\begin{remark}
\label{remark:strength_order}
Without loss of generality, we assume that users in each cell are in an ascending order with respect to their direct link strength levels (or SNRs).
That is:
\begin{equation}
\label{eq:strength_order}
\alpha_{kk}^{[1]}  \leq  \alpha_{kk}^{[2]} \leq \cdots \leq \alpha_{kk}^{[L_{k}]}, \ \forall k \in \langle K \rangle.
\end{equation}
Moreover, for any pair of users UE-$(l_{k},k)$ and UE-$(l_{k}',k)$ in the same cell $k$ with $l_{k} > l_{k}'$,
and hence $\alpha_{kk}^{[l_{k}]}  \geq  \alpha_{kk}^{[l_{k}']}$,
we refer to the former as the \emph{stronger} BC user and to the latter as the \emph{weaker} BC user,
where the notions of stronger and weaker are defined with respect to SNRs.
\hfill $\lozenge$
\end{remark}
\subsection{Messages, Rates, Capacity and GDoF}
Each BS-$k$ in the IBC, where $k \in \langle K \rangle$,
has the independent messages $W_{k}^{[1]},\ldots,W_{k}^{[L_{k}]}$
intended to UE-$(1,k)$,$\ldots$,UE-$(L_{k},k)$, respectively.
Codes, error probabilities, achievable rates and the capacity region are all defined in the standard Shannon theoretic sense.
For fixed $P$, an achievable rate tuple is denoted by
$\mathbf{R}(P) = \big(R_{k}^{[l_{k}]}(P) : (l_{k},k) \in \mathcal{K} \big)$ and the capacity region is denoted by $\mathcal{C}^{\mathrm{IBC}}(P)$.
On the other hand, an achievable GDoF tuple is denoted by $\mathbf{d} = \big(d_{k}^{[l_{k}]} : (l_{k},k) \in \mathcal{K} \big)$
and the GDoF region is denoted by $\mathcal{D}^{\mathrm{IBC}}$, where the latter is defined as
\begin{equation}
\mathcal{D}^{\mathrm{IBC}} \triangleq \left\{ \mathbf{d} : \  d_{k}^{[l_{k}]} = \lim_{P \rightarrow \infty} \frac{R_{k}^{[l_{k}]}(P)}{\log(P)}, \
\forall (l_{k},k) \in \mathcal{K}, \ \mathbf{R}(P) \in \mathcal{C}^{\mathrm{IBC}}(P)  \right\}.
\end{equation}

The above definitions translate to the dual IMAC by reversing the roles of transmitters and receivers.
An achievable rate tuple is denoted by
$\bar{\mathbf{R}}(P) = \big(\bar{R}_{k}^{[l_{k}]}(P) : (l_{k},k) \in \mathcal{K} \big)$, the capacity region is denoted by $\mathcal{C}^{\mathrm{IMAC}}(P)$, an achievable GDoF tuple is denoted by $\bar{\mathbf{d}} =
\big(\bar{d}_{k}^{[l_{k}]} : (l_{k},k) \in \mathcal{K} \big)$,
and the GDoF region is denoted by $\mathcal{D}^{\mathrm{IMAC}}$, which in turn is defined as
\begin{equation}
\mathcal{D}^{\mathrm{IMAC}} \triangleq \left\{ \bar{\mathbf{d}} : \  \bar{d}_{k}^{[l_{k}]} = \lim_{P \rightarrow \infty} \frac{\bar{R}_{k}^{[l_{k}]}(P)}{\log(P)}, \
\forall (l_{k},k) \in \mathcal{K}, \ \bar{\mathbf{R}}(P) \in \mathcal{C}^{\mathrm{IMAC}}(P)  \right\}.
\end{equation}
\section{Treating (Inter-Cell) Interference as Noise}
\label{sec:TIN}
We consider TIN in the cellular sense as in \cite{Joudeh2019a}, i.e. multi-cell TIN,
where  a single-cell, capacity-achieving-type strategy is employed by each cell, while treating all inter-cell interference as noise.
This is done in tandem with power control to manage inter-cell interference.
When there is no confusion, we may drop the multi-cell attribute, and briefly refer to 
multi-cell TIN as TIN. 
\subsection{Multi-Cell TIN in the IBC}
\label{subsec:TIN_IBC}
Each BS-$k$, where $k \in \langle K \rangle$, in the IBC employs superposition coding, whilst UEs in cell $k$
employ successive decoding according to a decoding order $\pi_{k}(\cdot)$, while treating all inter-cell interference as noise.
In particular, the transmitted signal of BS-$k$  is composed as
\begin{equation}
X_{k}(t) =  \sum_{l_{k} = 1}^{L_{k}} X_{k}^{[l_{k}]}(t),
\end{equation}
where each message $W_{k}^{[l_{k}]}$ is encoded into a codeword
$X_{k}^{[l_{k}]n} \triangleq  \big( X_{k}^{[l_{k}]}(1),\ldots,X_{k}^{[l_{k}]}(n) \big)$,
drawn from a Gaussian codebook
with average power $\frac{1}{n}\sum_{t=1}^{n}\E \big[ | X_{k}^{[l_{k}]}(t) |^{2} \big] = q_{k}^{[l_{k}]}$.
The powers  $q_{k}^{[1]},\ldots,q_{k}^{[L_{k}]}$ allocated to different codewords satisfy the constraint in \eqref{eq:power_constraint_IBC_GDoF},
which translates into
\begin{equation}
\label{eq:power_constraint_SC_SIC}
\sum_{l_{k}=1}^{L_{k}}q_{k}^{[l_{k}]} \leq 1.
\end{equation}
At the other end of the channel, each UE-$\big(\pi_{k}(l_{k}), k\big)$ starts by successively decoding and cancelling
$X_{k}^{[\pi_{k}(1)]n}, X_{k}^{[\pi_{k}(2)]n},\ldots,X_{k}^{[\pi_{k}(l_{k}-1)]n}$, in this order,
before decoding its own signal $X_{k}^{[\pi_{k}(l_{k})]n}$, while treating all other signals (i.e. both intra-cell and inter-cell interference) as noise.

Using the above scheme, the signal $X_{k}^{[\pi_{k}(l_{k})]n}$, intended to UE-$\big(\pi_{k}(l_{k}), k\big)$,
is decoded by all UEs indexed by $\big(\pi_{k}(m_{k}), k\big)$, where $m_{k} \geq l_{k}$.
Therefore, the corresponding effective signal-to-interference-plus-noise ratio (SINR), denoted by $\mathrm{SINR}_{k}^{[\pi_{k}(l_{k})]}$,
is given by
\begin{equation}
\label{eq:SINR_IBC}
\mathrm{SINR}_{k}^{[\pi_{k}(l_{k})]} = \min_{m_{k}:m_{k} \geq l_{k}} \left\{
\frac{ P^{\alpha_{kk}^{[\pi_{k}(m_{k})]}} q_{k}^{[\pi_{k}(l_{k})]}  }
{1 + \sum_{l_{k}'' = l_{k} + 1}^{L_{k}} P^{\alpha_{kk}^{[\pi_{k}(m_{k})]}} q_{k}^{[\pi_{k}(l_{k}'')]} + \sum_{j:j\neq k} P^{\alpha_{kj}^{[\pi_{k}(m_{k})]}} \sum_{l_{j} = 1}^{L_{j}} q_{j}^{[l_{j}]} } \right\}.
\end{equation}
It follows that message $W_{k}^{[\pi_{k}(l_{k})]}$ is reliably communicated to
UE-$\big(\pi_{k}(l_{k}),k\big)$ at any rate $R_{k}^{[\pi_{k}(l_{k})]} \geq 0$ satisfying
\begin{equation}
\label{eq:IBC_rate per user}
R_{k}^{[\pi_{k}(l_{k})]} \leq   \log
\Big(  1+  \mathrm{SINR}_{k}^{[\pi_{k}(l_{k})]}   \Big),
\end{equation}
where the right-hand-side in \eqref{eq:IBC_rate per user} is the achievable rate when treating inter-cell, and remaining intra-cell, interference as Gaussian noise. 

Next, we translate the above into the GDoF framework.
For GDoF purposes, we may further restrict the power allocation in each cell $k$ such that
$q_{k}^{[l_{k}]} \leq {1}/{L_{k}}$, for all $l_{k} \in \langle L_{k} \rangle$, which clearly does not violate the power constraint
in \eqref{eq:power_constraint_SC_SIC}.
More importantly, this allows us to write
\begin{equation}
\label{eq:qk_IBC_GDoF}
q_{k}^{[l_{k}]} = \frac{1}{L_{k}} \cdot P^{r_{k}^{[l_{k}]}}, \ \text{for some} \ r_{k}^{[l_{k}]} \leq 0,
\end{equation}
where $r_{k}^{[l_{k}]}$ is the corresponding transmit power exponent, or power control variable.
Using the power allocation in \eqref{eq:qk_IBC_GDoF}, it follows that
UE-$\big(\pi_{k}(l_{k}),k\big)$  achieves any GDoF $d_{k}^{[\pi_{k}(l_{k})]} \geq 0$ that satisfies
\begin{multline}
\label{eq:IBC_GDoF per user}
d_{k}^{[\pi_{k}(l_{k})]}  \leq
\biggl( \min_{m_{k}:m_{k} \geq l_{k} }    \Bigl\{  \alpha_{kk}^{[\pi_{k}(m_{k})]}   +   r_{k}^{[\pi_{k}(l_{k})]}
\\
 -   \max \bigl\{0, \alpha_{kk}^{[\pi_{k}(m_{k})]}  +  \max_{l_{k}'':l_{k}''>l_{k}} \{ r_{k}^{[\pi_{k}(l_{k}'')]} \} ,
 \max_{(l_{j},j):j \neq k} \{ \alpha_{kj}^{[\pi_{k}(m_{k})]}
 +   r_{j}^{[l_{j}]} \} \bigr\}    \Bigr\}  \biggr)^{+}.
\end{multline}

The tuple of power control variables is denoted by
$\mathbf{r}  =  \big(r_{k}^{[l_{k}]} : (l_{k},k) \in \mathcal{K} \big)$, and is also referred to as a
power allocation for the IBC.
On the other hand, a network decoding order tuple is given by $\bm{\pi} \triangleq \left( \pi_{1},\ldots,  \pi_{K} \right) \in \Pi$,
where $\Pi$ is the set of all possible network decoding orders.
For fixed $(\bm{\pi},\mathbf{r})$, the set of all TIN achievable (TINA) GDoF tuples is denoted by $\mathcal{D}_{\mathrm{TINA}}^{\mathrm{IBC}}(\bm{\pi},\mathbf{r})$, which is given by all
GDoF tuples $\mathbf{d} \in \mathbb{R}_{+}^{|\mathcal{K}|} $ with components satisfying \eqref{eq:IBC_GDoF per user}.
By maintaining a fixed network decoding order $\bm{\pi}$ while considering all possible power allocations $\mathbf{r}$,
we obtain the TINA$(\bm{\pi})$ region given by
\begin{equation}
\label{eq:IBC_TINA_pi_GDoF_region}
\mathcal{D}_{\mathrm{TINA}}^{\mathrm{IBC}}(\bm{\pi}) \triangleq  \bigcup_{\mathbf{r} \leq \mathbf{0}}
\mathcal{D}_{\mathrm{TINA}}^{\mathrm{IBC}}(\bm{\pi},\mathbf{r}).
\end{equation}
By further considering all possible network decoding orders, we obtain the TINA region defined as
\begin{equation}
\label{eq:IBC_TINA_GDoF_region}
\mathcal{D}_{\mathrm{TINA}}^{\mathrm{IBC}} \triangleq \bigcup_{\bm{\pi} \in \Pi} \bigcup_{\mathbf{r} \leq \mathbf{0}}
\mathcal{D}_{\mathrm{TINA}}^{\mathrm{IBC}}(\bm{\pi},\mathbf{r}).
\end{equation}
It is worthwhile highlighting that the use of time-sharing is disallowed in the above multi-cell TIN scheme,
hence keeping to a widely adopted tradition in previous GDoF works 
\cite{Etkin2008,Bresler2010,Geng2015,Geng2015a,Geng2016,Sun2016,Yi2016,Geng2018,Yi2019,Chan2019,Joudeh2019a}.
As a result, the TINA region $\mathcal{D}_{\mathrm{TINA}}^{\mathrm{IBC}}$ is non-convex in general, 
and each GDoF tuple $\mathbf{d} \in \mathcal{D}_{\mathrm{TINA}}^{\mathrm{IBC}}$ is achieved through a strategy
identified by a fixed  $(\bm{\pi},\mathbf{r})$.
Remarkably, while prohibiting time-sharing is mainly motivated by tractability, 
this restriction turns out to have no influence on the results in the regimes of interest, as shown further on in Section \ref{sec:main_results}.
\begin{remark}
The IBC multi-cell TIN setting considered here is related to  the compound IC TIN setting in \cite{Geng2016} and the multi-state IC opportunistic TIN setting in \cite{Yi2019}.
In particular, all three settings share the same input-output signal model in \eqref{eq:IBC_system model GDoF}, and may be interpreted as scenarios of downlink multi-cell transmission.\footnote{Note that in both \cite{Geng2016} and \cite{Yi2019}, each cell is interpreted as a single user with multiple states.}
The difference  is in the message sets.
The compound setting in \cite{Geng2016} captures scenarios where each BS has a single degraded (i.e. multicast) message, intended to all its corresponding users.
The opportunistic setting in \cite{Yi2019} is more general than \cite{Geng2016}; 
in addition to the \emph{basic} multicast message, each BS 
transmits additional, lower priority, \emph{opportunistic} messages,
where each such message is opportunistically decoded by a subset of intended users in a degraded message sets fashion. 
In the IBC considered in this paper, each BS transmits unicast messages only, and we do not consider degraded message sets in the model.
Nevertheless, by construction of the multi-cell TIN scheme in Section \ref{subsec:TIN_IBC}, a degraded structure is enforced on messages, which allows us to retrieve the  TINA regions in \cite{Geng2016} and \cite{Yi2019} from the IBC TINA region in \eqref{eq:IBC_TINA_GDoF_region}.

For instance, the compound TINA region in \cite{Geng2016} is retrieved from the IBC TINA  region in \eqref{eq:IBC_TINA_GDoF_region}  by eliminating all messages, e.g. by setting their GDoF to zero, except for message 
 $W_{k}^{[\pi_{k}(1)]}$  in each cell $k$, which is decoded by all users in such cell.
Similarly, the opportunistic TINA region in \cite{Yi2019} is retrieved by fixing a decoding order $\bm{\pi}$ a priori, and (possibly) eliminating some messages from each cell according to the criteria in \cite{Yi2019}.
Looking through this multi-cell TIN lens, it can be seen that the message sets in both the compound IC and the opportunistic IC settings are less restrictive than that in the multi-cell TIN IBC setting considered here. Hence, it is not surprising that the multi-cell TIN conditions for the IBC, presented in the following section, are more restrictive than the compound TIN conditions in \cite{Geng2016} and the opportunistic TIN conditions in  \cite{Yi2019}.
\hfill $\lozenge$
\end{remark}
\subsection{TIN in the Dual IMAC}
\label{subsec:TIN_IMAC}
For the dual IMAC, we adopt the TIN scheme given in \cite{Joudeh2019a}.
In particular, each UE-$(l_{k},k)$, where $(l_{k},k) \in \mathcal{K}$,
employs an independent Gaussian codebook with average power that satisfies the constraint in
\eqref{eq:power_constraint_IMAC_GDoF}.
Similar to the IBC, we may write
\begin{equation}
\label{eq:qk_IMAC_GDoF}
\frac{1}{n}\sum_{t=1}^{n}\E \big[ | \bar{X}_{k}^{[l_{k}]}(t) |^{2} \big]  = \frac{1}{L_{k}} \cdot P^{\bar{r}_{k}^{[l_{k}]}}, \ \text{for some} \ \bar{r}_{k}^{[l_{k}]} \leq 0,
\end{equation}
where $\bar{r}_{k}^{[l_{k}]} \leq 0$ is the corresponding power control variable.
On the other end, each BS-$k$ successively decodes and cancels its in-cell signals
$\bar{X}_{k}^{[\pi_{k}(L_{k})]},\bar{X}_{k}^{[\pi_{k}(L_{k}-1)]},\ldots,\bar{X}_{k}^{[\pi_{k}(1)]}$, in this (descending) order,
while treating all inter-cell interference as noise.

A power control tuple (or power allocation) for the IMAC is denoted by
$\bar{\mathbf{r}} =  \big(\bar{r}_{k}^{[l_{k}]} : (l_{k},k) \in \mathcal{K} \big)$, while a
network decoding order tuple is given by $\bm{\pi} \in \Pi$.
For a fixed strategy given by $(\bm{\pi},\bar{\mathbf{r}})$, UE-$\big(\pi_{k}(l_{k}),k\big)$ achieves any GDoF $\bar{d}_{k}^{[\pi_{k}(l_{k})]} \geq 0$ that
satisfies
\begin{equation}
\label{eq:IMAC_GDoF per user}
\bar{d}_{k}^{[\pi_{k}(l_{k})]}  \leq
\biggl(  \alpha_{kk}^{[\pi_{k}(l_{k})]} + \bar{r}_{k}^{[\pi_{k}(l_{k})]}
  -  \max \bigl\{0, \max_{l_{k}':l_{k}' < l_{k}} \{ \alpha_{kk}^{[\pi_{k}(l_{k}')]} + \bar{r}_{k}^{[\pi_{k}(l_{k}')]} \}  ,\max_{(l_{j},j):j \neq k}
 \{ \alpha_{jk}^{[l_{j}]} +  \bar{r}_{j}^{[l_{j}]} \}   \bigr\} \biggr)^{+}.
\end{equation}
For a fixed strategy $(\bm{\pi},\bar{\mathbf{r}})$, the set of GDoF tuples $\bar{\mathbf{d}} \in \mathbb{R}_{+}^{|\mathcal{K}|} $ with components satisfying
\eqref{eq:IMAC_GDoF per user} is denoted by $\mathcal{D}_{\mathrm{TINA}}^{\mathrm{IMAC}}(\bm{\pi},\bar{\mathbf{r}})$,
while the TINA$(\bm{\pi})$ region for the dual IMAC is defined as
\begin{equation}
\label{eq:IMAC_TINA_pi_GDoF_region}
\mathcal{D}_{\mathrm{TINA}}^{\mathrm{IMAC}}(\bm{\pi}) \triangleq  \bigcup_{\bar{\mathbf{r}} \leq \mathbf{0}}
\mathcal{D}_{\mathrm{TINA}}^{\mathrm{IMAC}}(\bm{\pi},\bar{\mathbf{r}}).
\end{equation}
Finally, the TINA region for the dual IMAC is defined as
\begin{equation}
\label{eq:IMAC_TINA_GDoF_region}
\mathcal{D}_{\mathrm{TINA}}^{\mathrm{IMAC}} \triangleq \bigcup_{\bm{\pi} \in \Pi} \bigcup_{\bar{\mathbf{r}} \leq \mathbf{0}}
\mathcal{D}_{\mathrm{TINA}}^{\mathrm{IMAC}}(\bm{\pi},\bar{\mathbf{r}}).
\end{equation}
\begin{remark}
For any given cell $k$ and permutation $\pi_{k}$, the uplink decoding order
is the reverse of its counterpart downlink decoding order. This reverse relationship in decoding orders is commonly
exhibited in uplink-downlink dualities, see for example  \cite[Ch. 10.3.4]{Tse2005}.
\hfill $\lozenge$
\end{remark}
\subsection{Definitions and Prior Results}
\label{subsec:definitions}
We conclude this section with some definitions and a summary of the IMAC TIN results in \cite{Joudeh2019a},
 which are instrumental to the formulation of the IBC TIN results presented in the following section.
\begin{definition}
\textbf{(Subnetwork).}
A subnetwork is a subset of UEs and their corresponding subset of serving BSs, e.g. $\mathcal{S} \subseteq \mathcal{K}$ and $\mathcal{M} \subseteq \langle K \rangle$,
where $\mathcal{S} = \cup_{i \in \mathcal{M}} \mathcal{S}_{i}$ is the subset of UEs and $\mathcal{S}_{i} \subseteq \mathcal{K}_{i}$ comprises the UEs participating from cell 
$i \in \mathcal{M}$.
For brevity, we refer to $\mathcal{S}$
as a subnetwork, especially that the corresponding $\mathcal{M}$ is automatically identified by $\mathcal{S}$.
\end{definition}
\begin{definition}
\textbf{(Natural Decoding Order).}
The natural (or identity) decoding order is given by $\bm{\pi} = \bm{\mathrm{id}}$, where $\bm{\mathrm{id}} \triangleq (\mathrm{id}_{1},\ldots, \mathrm{id}_{K})$
and $ \mathrm{id}_{i}(l_{i}) = l_{i}$, for all $(l_{i},i) \in \mathcal{K}$.
From a GDoF region viewpoint, this is the optimal decoding order for both the IBC and IMAC in the absence of inter-cell interference, i.e. whenever 
$\alpha^{[l_{i}]}_{ij} = 0$ for all $(l_{i},i)$ and $j$, where $i \neq j$.
\end{definition}
\begin{definition}
\textbf{(Subnetwork Decoding Order).}
For a subnetwork $\mathcal{S} = \cup_{i \in \mathcal{M}} \mathcal{S}_{i}$,
a subnetwork decoding order tuple is given by $\bm{\pi} \triangleq (\pi_{i}: i \in \mathcal{M})$,
where $\pi_{i} : \langle |\mathcal{S}_{i}| \rangle \rightarrow \mathcal{S}_{i}$ maps the decoding
order $s_{i} \in \langle |\mathcal{S}_{i}| \rangle$ to user $\pi_{i}(s_{i}) \in \mathcal{S}_{i}$ in cell $i$.
The set comprising all subnetwork decoding orders over $\mathcal{S}$ is denoted by $\Pi(\mathcal{S})$.
From this definition, it is evident that $\Pi = \Pi(\mathcal{K})$.
\end{definition}
\begin{definition}
\textbf{(Cyclic Sequences).}
For a subset of cells $\mathcal{M} \triangleq \{k_{1},k_{2},\ldots,k_{|\mathcal{M}|}\} \subseteq \langle K \rangle$,
$\Sigma(\mathcal{M})$ denotes the set of all cyclically ordered sequences formed by 
any number of elements in $\mathcal{M}$ without repetitions.
For example, for $\mathcal{M} = \{ k_{1},k_{2},k_{3} \}$, we have
\begin{equation}
\nonumber
\Sigma(\mathcal{M} ) =
\big\{(k_{1}),(k_{2}),(k_{3}),(k_{1},k_{2}),(k_{1},k_{3}),(k_{2},k_{3}),
(k_{1},k_{2},k_{3}),(k_{1},k_{3},k_{2})  \big\}.
\end{equation}
\end{definition}
Next, we define a polyhedral GDoF regions and two regimes of channel parameters, whose operational significance is given in a following theorem, summarizing the main results of \cite{Joudeh2019a}.
\begin{definition}
\textbf{(Polyhedral-TIN Region).}
\label{def:polyhedral_region}
For a subnetwork $\mathcal{S} = \cup_{i \in \mathcal{M}} \mathcal{S}_{i}$ and subnetwork decoding order $\bm{\pi} \in \Pi(\mathcal{S})$,
the corresponding polyhedral-TIN region, denoted by $\mathcal{P}(\bm{\pi},\mathcal{S})$, is given by
all GDoF tuples $\mathbf{d} \in \mathbb{R}_{+}^{|\mathcal{K}|}$ that satisfy
\begin{align}
d_{j}^{[l_{j}]}   &  = 0,   \ \forall (l_{j},j) \in \mathcal{K} \setminus \mathcal{S} \\
\sum_{s_{i} = 1}^{ l_{i}}
d_{i}^{[\pi_{i}(s_{i})]} & \leq \alpha_{ii}^{[\pi_{i}(l_{i})]}, \
\forall l_{i}\in \langle |\mathcal{S}_{i}| \rangle, \; i \in \mathcal{M} \\
\nonumber
\sum_{j =1 }^{m} \sum_{s_{i_{j}} = 1}^{l_{i_{j}} } d_{i_{j}}^{[\pi_{i_{j}}(s_{i_{j}})]} & \leq
\sum_{j =1 }^{m} \alpha_{i_{j}i_{j}}^{[\pi_{i_{j}}(l_{i_{j}})]} - \alpha_{i_{j}i_{j-1}}^{[\pi_{i_{j}}(l_{i_{j}})]}, \\
\label{eq:polyhedral_TINA_subset_2}
 \forall l_{i_{j}} \in \langle |\mathcal{S}_{i}| \rangle ,
 (i_{1} & ,\ldots  ,i_{m}) \in \Sigma\big(\mathcal{M}\big), m \in \langle 2:|\mathcal{M}| \rangle.
\end{align}
In \eqref{eq:polyhedral_TINA_subset_2}, a modulo-$m$ operation is implicitly used on cell indices such that $i_{0} = i_{m}$.
\end{definition}
Polyhedral-TIN  regions, obtained by varying $\bm{\pi}$ and $\mathcal{S}$, are the main building blocks of the the IMAC TINA region \cite{Joudeh2019a}.
In the following section, we see that such polyhedral-TIN regions play a similar role for the IBC.
Next, we define the multi-cell TIN and multi-cell CTIN regimes, which were identified in the IMAC context in \cite{Joudeh2019a}.
Since it clear from the context that we are considering multi-cell scenarios, i.e. the IBC and IMAC, as opposed  to the regular IC scenarios in \cite{Geng2015,Yi2016,Chan2019},
we often drop the multi-cell attribute when referring the TIN and CTIN regimes. 
\begin{definition}
\label{def:CTIN_regime}
\textbf{(CTIN Regime).}
In this regime, channel strengths satisfy
\begin{align}
\label{eq:CTIN_cond_1}
\alpha_{ii}^{[l_{i}]}  & \geq  \alpha_{ij}^{[l_{i}]} +  \alpha_{ii}^{[l_{i}']}  - \alpha_{ij}^{[l_{i}']}, \
\forall l_{i}',l_{i} \in \langle L_{i} \rangle, \; l_{i}' < l_{i} \\
\label{eq:CTIN_cond_2}
\alpha_{ii}^{[1]} & \geq  \alpha_{ij}^{[1]} + \alpha_{ki}^{[l_{k}]} -
\alpha_{kj}^{[l_{k}]} \mathbbm{1}\big( k \neq j\big), \  \forall l_{k} \in \langle L_k \rangle,
\end{align}
for all cells $i,j,k \in \langle K \rangle$, such that  $i \notin \{j,k\}$.
\end{definition}
\begin{definition}
\label{def:TIN_regime}
\textbf{(TIN Regime).}
In this regime, channel strengths satisfy
\begin{align}
\label{eq:TIN_cond_1}
\alpha_{ii}^{[l_{i}]}  & \geq  \alpha_{ij}^{[l_{i}]} +  \alpha_{ii}^{[l_{i}']} 
\ \text{ or } \ 
\alpha_{ii}^{[l_{i}]}   \geq  2\alpha_{ij}^{[l_{i}]} +  \alpha_{ii}^{[l_{i}']}  - \alpha_{ij}^{[l_{i}']}, \
\forall l_{i}',l_{i} \in \langle L_{i} \rangle, \; l_{i}' < l_{i} \\
\label{eq:TIN_cond_2}
\alpha_{ii}^{[1]} & \geq  \alpha_{ij}^{[1]} + \alpha_{ki}^{[l_{k}]}, \  \forall l_{k} \in \langle L_k \rangle,
\end{align}
for all cells $i,j,k \in \langle K \rangle$, such that  $i \notin \{j,k\}$.
\end{definition}
\begin{remark}
\label{remark:CTIN_TIN_other_users}
The CTIN condition in \eqref{eq:CTIN_cond_2} and TIN condition in  \eqref{eq:TIN_cond_2}, which are reminiscent of their $K$-user IC counterparts in \cite[Th. 4]{Yi2016} and \cite[Th. 1]{Geng2015}, respectively,
are made to hold for the \emph{weakest} BC user in each cell $i$, i.e. UE-$(1,i)$, against all 
users in other cells, i.e. UE-$(l_{k},k)$ for all $l_{k} \in \langle L_{k} \rangle$ and $k \in \langle K \rangle \setminus \{i\}$.
It can be verified that in the CTIN and TIN regimes, respectively,  
\eqref{eq:CTIN_cond_2} and  \eqref{eq:TIN_cond_2} implicitly hold for all remaining users in cell $i$, i.e. UE-$(l_{i},i)$ for all $l_{i} > 1$.

That is, for all cells  $i,j,k \in \langle K \rangle$ where $i \notin \{j,k\}$, 
the CTIN conditions \eqref{eq:CTIN_cond_1} and \eqref{eq:CTIN_cond_2} imply that
\begin{align}
\label{eq:CTIN_cond_other_users}
\alpha_{ii}^{[l_{i}]}  \geq  \alpha_{ij}^{[l_{i}]} + \alpha_{ki}^{[l_{k}]} -
\alpha_{kj}^{[l_{k}]} \mathbbm{1}\big( k \neq j\big), \; \forall l_{i} \in \langle L_i \rangle, l_{k} \in \langle L_k \rangle,
\end{align}
while the TIN conditions \eqref{eq:TIN_cond_1} and \eqref{eq:TIN_cond_2} imply that
\begin{align}
\label{eq:TIN_cond_other_users}
\alpha_{ii}^{[l_{i}]}  \geq  \alpha_{ij}^{[l_{i}]} + \alpha_{ki}^{[l_{k}]}, \; \forall l_{i} \in \langle L_i \rangle, l_{k} \in \langle L_k \rangle.
\end{align}
We recall that the redundant inequalities in \eqref{eq:CTIN_cond_other_users} and \eqref{eq:TIN_cond_other_users} 
are expressed explicitly in the descriptions of the CTIN and TIN regimes in \cite[Th. 3]{Joudeh2019a} and \cite[Th. 4]{Joudeh2019a}, respectively.
\hfill $\lozenge$
\end{remark}
By inspection, it can be verified that the TIN regime is included in the CTIN regime. 
The significance of the CTIN and TIN regimes, in addition to the rest of the above definitions,
is epitomized through the following theorem, which summarizes the main results in \cite{Joudeh2019a}.
\begin{theorem}
\label{theorem:TIN_IMAC}
\textnormal{\cite[Th. 2, Th.3, Th. 4]{Joudeh2019a}}.
For the IMAC described in Section \ref{sec:system model}, we have the following:
\begin{enumerate}
\item $\mathcal{D}^{\mathrm{IMAC}}_{{\mathrm{TINA}}}$
is characterized by a union of polyhedral-TIN regions as
\begin{equation}
\label{eq:GDoF_region_IMAC}
\mathcal{D}^{\mathrm{IMAC}}_{{\mathrm{TINA}}} = \bigcup_{\mathcal{S} \subseteq \mathcal{K}} \bigcup_{\bm{\pi} \in \Pi(\mathcal{S})}
\mathcal{P}(\bm{\pi}, \mathcal{S}).
\end{equation}
\item In the CTIN regime (Definition \ref{def:CTIN_regime}), $\mathcal{D}^{\mathrm{IMAC}}_{{\mathrm{TINA}}}$
is a convex polyhedron given by
\begin{equation}
\label{eq:GDoF_region_IMAC_CTIN}
\mathcal{D}^{\mathrm{IMAC}}_{{\mathrm{TINA}}} = \mathcal{P}(\bm{\mathrm{id}}, \mathcal{K}).
\end{equation}
\item In the TIN regime (Definition \ref{def:TIN_regime}),  $\mathcal{D}^{\mathrm{IMAC}}_{{\mathrm{TINA}}}$ is
optimal such that
\begin{equation}
\label{eq:GDoF_region_IMAC_TIN}
\mathcal{D}^{\mathrm{IMAC}} = \mathcal{D}^{\mathrm{IMAC}}_{{\mathrm{TINA}}} = \mathcal{P}(\bm{\mathrm{id}}, \mathcal{K}).
\end{equation}
\end{enumerate}
\end{theorem}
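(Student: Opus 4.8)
The plan is to reconstruct the three parts from the achievability definitions of Section \ref{subsec:TIN_IMAC} together with a matching converse, since each part is proved separately. For part~1, I would fix a decoding order $\bm\pi$ and an active subnetwork $\mathcal S\subseteq\mathcal K$ (the users assigned positive GDoF), and characterize $\bigcup_{\bar{\mathbf r}\le\mathbf 0}\mathcal D_{\mathrm{TINA}}^{\mathrm{IMAC}}(\bm\pi,\bar{\mathbf r})$ by eliminating the power control variables $\bar{\mathbf r}$ from the per-user bounds \eqref{eq:IMAC_GDoF per user}. Following the potential-graph approach of \cite{Geng2015}, I would treat each $\bar r_k^{[l_k]}$ as a node potential and each GDoF inequality (with the outer $(\cdot)^+$ dropped on active users) as a constraint on potential differences, then eliminate the potentials by summing along paths and cycles. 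Within a single cell this yields the successive-decoding sum bounds of Definition \ref{def:polyhedral_region}; summing across a cyclic sequence of cells $(i_1,\dots,i_m)\in\Sigma(\mathcal M)$ cancels the cross-cell power terms and produces the cyclic bounds \eqref{eq:polyhedral_TINA_subset_2}. The reverse inclusion — that every tuple in $\mathcal P(\bm\pi,\mathcal S)$ is achievable — follows by exhibiting a feasible potential assignment, e.g.\ via shortest paths in the potential graph. Taking the union over all $\mathcal S$ and all $\bm\pi\in\Pi(\mathcal S)$ then gives \eqref{eq:GDoF_region_IMAC}.

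For part~2, I would show that the union in \eqref{eq:GDoF_region_IMAC} collapses to $\mathcal P(\bm{\mathrm{id}},\mathcal K)$ under the CTIN conditions. The inclusion $\mathcal P(\bm{\mathrm{id}},\mathcal K)\subseteq\mathcal D_{\mathrm{TINA}}^{\mathrm{IMAC}}$ is immediate, so the content is the reverse, namely $\mathcal P(\bm\pi,\mathcal S)\subseteq\mathcal P(\bm{\mathrm{id}},\mathcal K)$ for every $\mathcal S$ and $\bm\pi$. I would verify this by checking that each defining inequality of $\mathcal P(\bm{\mathrm{id}},\mathcal K)$ is implied on $\mathcal P(\bm\pi,\mathcal S)$: condition \eqref{eq:CTIN_cond_1} guarantees that the natural (SNR-ascending) order is GDoF-optimal within each cell, so within-cell reorderings cannot enlarge the region, while \eqref{eq:CTIN_cond_2}, together with its implied form \eqref{eq:CTIN_cond_other_users}, ensures the natural-order cyclic bounds dominate those of any other order and remain valid when inactive users are reinstated. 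Since $\mathcal P(\bm{\mathrm{id}},\mathcal K)$ is cut out by finitely many linear inequalities, it is a convex polyhedron, establishing \eqref{eq:GDoF_region_IMAC_CTIN} and convexity without time-sharing.

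For part~3, achievability is inherited from part~2 because the TIN regime is contained in the CTIN regime, so the remaining task is the converse $\mathcal D^{\mathrm{IMAC}}\subseteq\mathcal P(\bm{\mathrm{id}},\mathcal K)$. The within-cell sum bounds follow from the MAC sum-capacity bound applied cell by cell. The cyclic bounds \eqref{eq:polyhedral_TINA_subset_2} require genie-aided arguments: for each cyclic sequence $(i_1,\dots,i_m)$ I would supply each base station with a carefully chosen genie carrying side information about a subset of the interfering codewords, chain the resulting conditional-entropy terms around the cycle, and bound the sum rate so that the cross-cell terms telescope. The TIN conditions \eqref{eq:TIN_cond_1}--\eqref{eq:TIN_cond_2}, being strictly stronger than their CTIN counterparts, are precisely what render these genie bounds GDoF-tight (and within a constant gap of capacity), yielding \eqref{eq:GDoF_region_IMAC_TIN}. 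I expect this converse to be the main obstacle: selecting the genies and controlling the differential-entropy difference terms so as to isolate exactly the GDoF quantities is delicate, and the multiple users per cell couple the within-cell successive-decoding structure with the cyclic cross-cell structure in a way that has no analogue in the single-user-per-cell $K$-user IC of \cite{Geng2015}.
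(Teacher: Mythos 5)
There is nothing in this paper to check your proposal against: Theorem \ref{theorem:TIN_IMAC} is not proved here at all. It is quoted verbatim from \cite[Th. 2, Th. 3, Th. 4]{Joudeh2019a} and enters this paper purely as an imported building block, which is then combined with the new power-allocation duality (Lemma \ref{lemma:power_allocation_duality} and Corollary \ref{corollary:TIN_UL_DL_Duality}) to obtain the IBC counterpart, Theorem \ref{theorem:TIN_IBC}. The only statements this paper proves itself are the duality lemma, the IBC outer bound (Theorem \ref{theorem:outer_bound}), and the supporting lemmas.

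That said, your reconstruction is consistent with what this paper reports about how \cite{Joudeh2019a} proceeds. For part 1, the introduction explicitly states that the IMAC TINA characterization was obtained via the potential-graph approach of \cite{Geng2015}, requiring a ``cumbersome and lengthy procedure of eliminating power control variables and redundant GDoF inequalities'' --- precisely the elimination-of-potentials plan you describe. Your part 2 (showing $\mathcal{P}(\bm{\pi},\mathcal{S}) \subseteq \mathcal{P}(\bm{\mathrm{id}},\mathcal{K})$ under the CTIN conditions) and part 3 (cell-by-cell MAC bounds plus genie-aided cyclic bounds rendered tight by the TIN conditions) are the natural uplink analogues, but their details --- the choice of genies and the way multiple users per cell are threaded through the cyclic chains --- live in \cite{Joudeh2019a} and cannot be verified from anything printed here. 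One instructive contrast the paper itself draws: for the downlink it deliberately avoids your part-1 route, using uplink-downlink duality to sidestep the potential-graph elimination entirely, and its downlink converse rests on ingredients (the redundancy order, Lemmas \ref{lemma:more_capable} and \ref{lemma:diff_entropies}) that have no direct counterpart in the uplink genie argument you sketch; so your plan, even if it faithfully reproduces \cite{Joudeh2019a}, would not transfer to the theorem this paper actually proves.
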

From Theorem \ref{theorem:TIN_IMAC}, $\mathcal{P}(\bm{\mathrm{id}}, \mathcal{K})$ 
includes all other polyhedral-TIN regions in the CTIN and TIN regimes.
We conclude this section with an explicit characterization of $\mathcal{P}(\bm{\mathrm{id}},\mathcal{K})$, obtained by 
specializing Definition \ref{def:polyhedral_region}. This is given by all GDoF tuples
$\mathbf{d} \in \mathbb{R}_{+}^{|\mathcal{K}|}$ that satisfy
\begin{align}
\label{eq:polyhedral_TIN_region_1}
\sum_{s_{i} = 1}^{ l_{i}}
d_{i}^{[s_{i}]} & \leq \alpha_{ii}^{[l_{i}]}, \
\forall l_{i}\in \langle L_{i}\rangle, \; i \in \mathcal{K} \\
\nonumber
\sum_{j =1 }^{m} \sum_{s_{i_{j}} = 1}^{l_{i_{j}} } d_{i_{j}}^{[s_{i_{j}}]} & \leq
\sum_{j =1 }^{m} \alpha_{i_{j}i_{j}}^{[l_{i_{j}}]} - \alpha_{i_{j}i_{j-1}}^{[l_{i_{j}}]},   \\
\label{eq:polyhedral_TIN_region_2}
\forall l_{i_{j}} \in \langle L_{i} \rangle,  \;  (i_{1}  , \ldots & ,   i_{m} )   \in \Sigma\big(\mathcal{K}\big),  \; m \in \langle 2:K \rangle.
\end{align}
\section{Main Results}
\label{sec:main_results}
The primary result of this paper is an IBC counterpart of the IMAC 
result in Theorem \ref{theorem:TIN_IMAC}.
As an auxiliary result, we first establish a new GDoF uplink-downlink duality of multi-cell TIN.
This enables us to obtain a characterization of the IBC TINA region $\mathcal{D}^{\mathrm{IBC}}_{{\mathrm{TINA}}} $ 
in terms of its dual IMAC TINA region $\mathcal{D}^{\mathrm{IMAC}}_{{\mathrm{TINA}}}$, which in turn is characterized in Theorem \ref{theorem:TIN_IMAC}.
\subsection{Uplink-Downlink Duality of Multi-Cell TIN}
To facilitate the exposition of the duality result and its proof, we start by expressing the GDoF inequality of the IBC in \eqref{eq:IBC_GDoF per user}
more briefly as
\begin{equation}
\label{eq:IBC_GDoF_gamma}
d_{k}^{[\pi_{k}(l_{k})]}  \leq
\left(  \alpha_{kk}^{[\pi_{k}(l_{k})]} + r_{k}^{[\pi_{k}(l_{k})]}   - \gamma_{k}^{[\pi_{k}(l_{k})]} \right)^{+},
\end{equation}
where the quantity $\gamma_{k}^{[\pi_{k}(l_{k})]}$ is defined as
\begin{equation}
\label{eq:IBC_gamma}
\gamma_{k}^{[\pi_{k}(l_{k})]}
\! \triangleq \!
\alpha_{kk}^{[\pi_{k}(l_{k})]} \! +
\max \Bigl\{  \max_{l_{k}'':l_{k}'' > l_{k}}\{r_{k}^{[\pi_{k}(l_{k}'')]} \},
\max_{m_{k}:m_{k} \geq l_{k} } \bigl\{
 \bigl( \max_{(l_{j},j):j \neq k} \{ \alpha_{kj}^{[\pi_{k}(m_{k})]}
 +   r_{j}^{[l_{j}]} \} \bigr)^{+}  \! - \alpha_{kk}^{[\pi_{k}(m_{k})]}  \bigr\}     \Bigr\}.
\end{equation}
In \eqref{eq:IBC_GDoF_gamma}, $\alpha_{kk}^{[\pi_{k}(l_{k})]} + r_{k}^{[\pi_{k}(l_{k})]}$ is the
received power level of the desired signal at UE-$\big(\pi_{k}(l_{k}),k\big)$,
while $\gamma_{k}^{[\pi_{k}(l_{k})]}$ is the effective interference level,
which takes into account the fact that $X_{k}^{[\pi_{k}(l_{k})]n}$ is
decoded by all UEs in cell $k$, which follow UE-$\big(\pi_{k}(l_{k}),k\big)$ in the decoding order.
It can be verified that $\gamma_{k}^{[\pi_{k}(l_{k})]} \geq 0$.
In a similar manner, the GDoF inequality for the dual IMAC in \eqref{eq:IMAC_GDoF per user}
is expressed as
\begin{equation}
\label{eq:IMAC_GDoF_gamma}
\bar{d}_{k}^{[\pi_{k}(l_{k})]}  \leq
\left(  \alpha_{kk}^{[\pi_{k}(l_{k})]} + \bar{r}_{k}^{[\pi_{k}(l_{k})]}
  -  \bar{\gamma}_{k}^{[\pi_{k}(l_{k})]} \right)^{+},
\end{equation}
where the interference level $\bar{\gamma}_{k}^{[\pi_{k}(l_{k})]}$, which satisfies $\bar{\gamma}_{k}^{[\pi_{k}(l_{k})]} \geq 0$, is defined as
\begin{equation}
\label{eq:IMAC_gamma}
\bar{\gamma}_{k}^{[\pi_{k}(l_{k})]} \triangleq
\max \bigl\{0, \max_{l_{k}':l_{k}' < l_{k}} \{ \alpha_{kk}^{[\pi_{k}(l_{k}')]} +
\bar{r}_{k}^{[\pi_{k}(l_{k}')]} \}  ,\max_{(l_{j},j):j \neq k}
 \{ \alpha_{jk}^{[l_{j}]} +  \bar{r}_{j}^{[l_{j}]} \}   \bigr\}.
\end{equation}

Next, we present a GDoF-based power allocation duality result, formulated using the above defined quantities.
The proof of this result is relegated to Appendix \ref{appebdix:proof_of_duality_lemma}.
\begin{lemma}
\label{lemma:power_allocation_duality}
The following multi-cell TIN power allocation duality holds:
\begin{enumerate}
\item Consider an IBC TIN strategy identified by $(\bm{\pi},\mathbf{r})$, which achieves the set of GDoF tuples
$\mathcal{D}_{\mathrm{TINA}}^{\mathrm{IBC}}(\bm{\pi},\mathbf{r})$.
The dual IMAC TIN strategy $(\bm{\pi},\bar{\mathbf{r}})$, where each component of  $\bar{\mathbf{r}}$
is given by
\begin{equation}
\label{eq:power_allocation_duality_IMAC}
\bar{r}_{k}^{[\pi_{k}(l_{k})]} = -\gamma_{k}^{[\pi_{k}(l_{k})]},
\end{equation}
achieves the set of GDoF tuples $\mathcal{D}_{\mathrm{TINA}}^{\mathrm{IMAC}}(\bm{\pi},\bar{\mathbf{r}})$,
which includes $\mathcal{D}_{\mathrm{TINA}}^{\mathrm{IBC}}(\bm{\pi},\mathbf{r})$.
\item Consider an IMAC TIN strategy  identified by $(\bm{\pi},\bar{\mathbf{r}})$, which achieves the set of GDoF tuples
$\mathcal{D}_{\mathrm{TINA}}^{\mathrm{IMAC}}(\bm{\pi},\bar{\mathbf{r}})$.
The dual IBC TIN strategy $(\bm{\pi},\mathbf{r})$, where each component of  $\mathbf{r}$
is given by
\begin{equation}
\label{eq:power_allocation_duality_IBC}
r_{k}^{[\pi_{k}(l_{k})]} = -\bar{\gamma}_{k}^{[\pi_{k}(l_{k})]},
\end{equation}
achieves the set of GDoF tuples $\mathcal{D}_{\mathrm{TINA}}^{\mathrm{IBC}}(\bm{\pi},\mathbf{r})$,
which includes  $\mathcal{D}_{\mathrm{TINA}}^{\mathrm{IMAC}}(\bm{\pi},\bar{\mathbf{r}})$.
\end{enumerate}
\end{lemma}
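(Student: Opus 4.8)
The plan is to first observe that, for a fixed strategy, both $\mathcal{D}_{\mathrm{TINA}}^{\mathrm{IBC}}(\bm{\pi},\mathbf{r})$ and $\mathcal{D}_{\mathrm{TINA}}^{\mathrm{IMAC}}(\bm{\pi},\bar{\mathbf{r}})$ are \emph{boxes}: each is the set of nonnegative GDoF tuples whose $(\pi_{k}(l_{k}),k)$-th entry is bounded above, independently of the others, by the per-user quantity in \eqref{eq:IBC_GDoF_gamma} (respectively \eqref{eq:IMAC_GDoF_gamma}). Since there are no coupling or sum constraints, each of the two claimed inclusions is equivalent to a family of scalar inequalities, one per user, between the corresponding single-user GDoF levels. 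This reduces the whole lemma to comparing the interference levels $\gamma_{k}^{[\pi_{k}(l_{k})]}$ and $\bar{\gamma}_{k}^{[\pi_{k}(l_{k})]}$ under the prescribed substitutions.

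For the first part, I would insert $\bar{r}_{k}^{[\pi_{k}(l_{k})]}=-\gamma_{k}^{[\pi_{k}(l_{k})]}$ from \eqref{eq:power_allocation_duality_IMAC} into \eqref{eq:IMAC_GDoF_gamma}. Because $(\cdot)^{+}$ is nondecreasing and $r_{k}^{[\pi_{k}(l_{k})]}\leq 0$, the inclusion $\mathcal{D}_{\mathrm{TINA}}^{\mathrm{IBC}}(\bm{\pi},\mathbf{r})\subseteq \mathcal{D}_{\mathrm{TINA}}^{\mathrm{IMAC}}(\bm{\pi},\bar{\mathbf{r}})$ reduces to showing, for every user, that $\bar{\gamma}_{k}^{[\pi_{k}(l_{k})]}\leq -r_{k}^{[\pi_{k}(l_{k})]}$. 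I would establish this by dominating each of the three arguments of the outer $\max$ in \eqref{eq:IMAC_gamma} by $-r_{k}^{[\pi_{k}(l_{k})]}$: the constant $0$ trivially since $r_{k}^{[\pi_{k}(l_{k})]}\leq 0$; the intra-cell argument $\alpha_{kk}^{[\pi_{k}(l_{k}')]}+\bar{r}_{k}^{[\pi_{k}(l_{k}')]}$ with $l_{k}'<l_{k}$ by noting that $\pi_{k}(l_{k})$ is a \emph{later} layer, so $r_{k}^{[\pi_{k}(l_{k})]}$ enters the first inner $\max$ of $\gamma_{k}^{[\pi_{k}(l_{k}')]}$ in \eqref{eq:IBC_gamma}, whence $\gamma_{k}^{[\pi_{k}(l_{k}')]}\geq \alpha_{kk}^{[\pi_{k}(l_{k}')]}+r_{k}^{[\pi_{k}(l_{k})]}$; and the inter-cell argument $\alpha_{jk}^{[l_{j}]}+\bar{r}_{j}^{[l_{j}]}$ by selecting, inside the second inner $\max$ of $\gamma_{j}^{[l_{j}]}$, the receiver index $m_{j}$ equal to the position of $l_{j}$ itself together with the cross user $(\pi_{k}(l_{k}),k)$, which gives $\gamma_{j}^{[l_{j}]}\geq \alpha_{jk}^{[l_{j}]}+r_{k}^{[\pi_{k}(l_{k})]}$. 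Each of these rearranges to exactly the required bound, and notably none of them uses any ordering assumption, so this direction goes through for every $\bm{\pi}$.

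The second part is structurally dual: substituting $r_{k}^{[\pi_{k}(l_{k})]}=-\bar{\gamma}_{k}^{[\pi_{k}(l_{k})]}$ from \eqref{eq:power_allocation_duality_IBC} into \eqref{eq:IBC_GDoF_gamma}, the inclusion reduces to $\gamma_{k}^{[\pi_{k}(l_{k})]}\leq -\bar{r}_{k}^{[\pi_{k}(l_{k})]}$. Writing $\gamma_{k}^{[\pi_{k}(l_{k})]}=\alpha_{kk}^{[\pi_{k}(l_{k})]}+\max\{A,B\}$ as in \eqref{eq:IBC_gamma}, it suffices to bound $\alpha_{kk}^{[\pi_{k}(l_{k})]}+A$ and $\alpha_{kk}^{[\pi_{k}(l_{k})]}+B$ separately by $-\bar{r}_{k}^{[\pi_{k}(l_{k})]}$. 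The intra-cell part $A$ is handled exactly as before: since $A=\max_{l_{k}''>l_{k}}(-\bar{\gamma}_{k}^{[\pi_{k}(l_{k}'')]})$ and each later layer satisfies $\bar{\gamma}_{k}^{[\pi_{k}(l_{k}'')]}\geq \alpha_{kk}^{[\pi_{k}(l_{k})]}+\bar{r}_{k}^{[\pi_{k}(l_{k})]}$ by the intra-cell term of \eqref{eq:IMAC_gamma}, one gets $\alpha_{kk}^{[\pi_{k}(l_{k})]}+A\leq -\bar{r}_{k}^{[\pi_{k}(l_{k})]}$ directly.

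The hard part will be the inter-cell part $B$, i.e. the nested maximization over downlink receivers $m_{k}\geq l_{k}$ together with the rectifier $(\cdot)^{+}$ in \eqref{eq:IBC_gamma}. This structure encodes the downlink constraint that $X_{k}^{[\pi_{k}(l_{k})]n}$ must be decodable at \emph{every} later receiver in its own cell, and it has no counterpart in the uplink level $\bar{\gamma}$. I would split, for each receiver $m_{k}$, according to whether the inner inter-cell $\max$ is active: when it is, I would again cancel $\alpha_{kj}^{[\pi_{k}(m_{k})]}$ using the inter-cell argument of the appropriate $\bar{\gamma}_{j}^{[l_{j}]}$ evaluated at the cross user $(\pi_{k}(m_{k}),k)$; when it is inactive, the term collapses to $\alpha_{kk}^{[\pi_{k}(l_{k})]}-\alpha_{kk}^{[\pi_{k}(m_{k})]}$, and controlling this difference is the delicate step that forces me to exploit the relative order of the direct-link strengths along the decoding order (of the type appearing in \eqref{eq:CTIN_cond_1}), rather than a purely algebraic cancellation. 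Closing this last case cleanly is where I expect essentially all of the technical effort to concentrate, and it is the step I would write out in full detail.
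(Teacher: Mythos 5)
Your first inclusion and your overall framing are correct and coincide with the paper's proof: both achievable sets are indeed boxes, so each inclusion reduces to the per-user scalar inequalities $\bar{\gamma}_{k}^{[\pi_{k}(l_{k})]}\leq -r_{k}^{[\pi_{k}(l_{k})]}$ (first part) and $\gamma_{k}^{[\pi_{k}(l_{k})]}\leq -\bar{r}_{k}^{[\pi_{k}(l_{k})]}$ (second part), and the three dominations you give for the first part — the trivial $0$, the intra-cell term via the later-layer power appearing inside $\gamma_{k}^{[\pi_{k}(l_{k}')]}$, and the inter-cell term via choosing $m_{j}$ as the own position of $l_{j}$ together with the cross user $(\pi_{k}(l_{k}),k)$ — are exactly the paper's steps, valid for every $\bm{\pi}$ with no ordering assumption. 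The reduction and the intra-cell term of the second part are also handled as in the paper.

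The genuine gap is in how you propose to close the inter-cell term of the second part. The missing idea is not a channel-strength order: Lemma \ref{lemma:power_allocation_duality} is regime-free (it underlies the general characterization \eqref{eq:GDoF_region_IBC}, not only the CTIN/TIN statements), so conditions of the type \eqref{eq:CTIN_cond_1} are unavailable; moreover they could not help, because the inequality you need in the inactive branch, $\alpha_{kk}^{[\pi_{k}(l_{k})]}+\bar{r}_{k}^{[\pi_{k}(l_{k})]}\leq \alpha_{kk}^{[\pi_{k}(m_{k})]}$ for $m_{k}\geq l_{k}$, involves the \emph{power} variables and is simply false for general $(\bm{\pi},\bar{\mathbf{r}})$. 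Concretely, take a single cell with two users of direct strengths $0.5$ and $1$, the reversed decoding order, and $\bar{\mathbf{r}}=\mathbf{0}$: the IMAC box grants the strong user GDoF $1$, but the dual allocation $r=-\bar{\gamma}$ yields an IBC box capping that user at $0.5$, because in the downlink its codeword must also be decoded by the weak user who follows it in the decoding order. What the paper does instead is \emph{normalize} the IMAC power allocation: it imposes, without loss of generality, that received power levels are nondecreasing along the decoding order, $\alpha_{kk}^{[\pi_{k}(l_{k}'')]}+\bar{r}_{k}^{[\pi_{k}(l_{k}'')]}\geq \alpha_{kk}^{[\pi_{k}(l_{k}')]}+\bar{r}_{k}^{[\pi_{k}(l_{k}')]}$ for $l_{k}''>l_{k}'$ (restriction \eqref{eq:duality_condition_IMAC_order}), and justifies this by a separate argument: if the restriction is violated, the later-decoded user already has zero GDoF, so swapping the two users in the decoding order and driving the violating user's power to $-\infty$ achieves a superset of the original region. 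Under this normalization both branches of your rectifier close purely algebraically, with no condition on the $\alpha$'s. Note also that your "active" branch has the same hidden residual: after cancelling $\alpha_{kj}^{[\pi_{k}(m_{k})]}$ via $\bar{\gamma}_{j}^{[l_{j}]}\geq \alpha_{kj}^{[\pi_{k}(m_{k})]}+\bar{r}_{k}^{[\pi_{k}(m_{k})]}$, you are left comparing $\alpha_{kk}^{[\pi_{k}(l_{k})]}+\bar{r}_{k}^{[\pi_{k}(l_{k})]}$ with $\alpha_{kk}^{[\pi_{k}(m_{k})]}+\bar{r}_{k}^{[\pi_{k}(m_{k})]}$, which again requires exactly the received-power ordering. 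So both sub-cases need the normalization, and without it (and its justification) the proof cannot be completed.
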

The first statement in  the above lemma  implies 
that $\mathcal{D}^{\mathrm{IBC}}_{{\mathrm{TINA}}} \subseteq \mathcal{D}^{\mathrm{IMAC}}_{{\mathrm{TINA}}}$,
while its second statement implies  that $\mathcal{D}^{\mathrm{IMAC}}_{{\mathrm{TINA}}} \subseteq \mathcal{D}^{\mathrm{IBC}}_{{\mathrm{TINA}}}$.
Therefore, Lemma \ref{lemma:power_allocation_duality} leads directly to the following corollary. 
\begin{corollary}
\label{corollary:TIN_UL_DL_Duality}
The TINA regions for the IBC and the dual IMAC are identical, i.e.
$\mathcal{D}^{\mathrm{IBC}}_{{\mathrm{TINA}}}=\mathcal{D}^{\mathrm{IMAC}}_{{\mathrm{TINA}}}$.
\end{corollary}
Lemma \ref{lemma:power_allocation_duality} is a generalization of \cite[Lem. 1]{Geng2018}, which establishes 
a similar duality  for the regular $K$-user IC.
This explicit power allocation duality is useful
for solving GDoF-based TIN power control problems.
For instance, suppose that we have an algorithm that returns a desired Pareto optimal TINA GDoF tuple
$\mathbf{d}^{\star}$ for the IMAC and
a strategy $(\bm{\pi},\bar{\mathbf{r}})$ which achieves it.
Using the simple transformation in \eqref{eq:power_allocation_duality_IBC},
we obtain a strategy $(\bm{\pi},\mathbf{r})$ for the corresponding dual IBC that achieves the same GDoF tuple
$\mathbf{d}^{\star}$, which is Pareto optimal for the dual IBC as well.

While the algorithmic aspects of TIN and power control are of high interest in their own right, especially for practical purposes
as demonstrated in \cite{Geng2016,Yi2016,Geng2018}, in this work we are primarily
interested in obtaining an explicit characterization of the IBC TINA region, which in turn, enables us to identify regimes of channel strengths in which such region is convex and optimal 
(i.e. CTIN and TIN regimes for the IBC). 
Lemma \ref{lemma:power_allocation_duality} is still very useful in this regard as seen in Corollary \ref{corollary:TIN_UL_DL_Duality}.
\subsection{CTIN and TIN Regimes for the IBC}
\label{subsec:CTIN_TIN_IBC}
Equipped with the duality result in Lemma \ref{lemma:power_allocation_duality}, which enables us to characterize $\mathcal{D}^{\mathrm{IBC}}_{{\mathrm{TINA}}}$
using the characterization of $\mathcal{D}^{\mathrm{IMAC}}_{{\mathrm{TINA}}}$ as seen in Corollary \ref{corollary:TIN_UL_DL_Duality},
we are now ready to present our main result. 
\begin{theorem}
\label{theorem:TIN_IBC}
For the IBC described in Section \ref{sec:system model}, we have the following:
\begin{enumerate}
\item $\mathcal{D}^{\mathrm{IBC}}_{{\mathrm{TINA}}}$ is characterized by a union of polyhedral-TIN regions as
\begin{equation}
\label{eq:GDoF_region_IBC}
\mathcal{D}^{\mathrm{IBC}}_{{\mathrm{TINA}}} = \bigcup_{\mathcal{S} \subseteq \mathcal{K}} \bigcup_{\bm{\pi} \in \Pi(\mathcal{S})}
\mathcal{P}(\bm{\pi}, \mathcal{S}).
\end{equation}
\item In the CTIN regime (Definition \ref{def:CTIN_regime}), $\mathcal{D}^{\mathrm{IBC}}_{{\mathrm{TINA}}}$ 
is a convex polyhedron given by
\begin{equation}
\label{eq:GDoF_region_IBC_CTIN}
\mathcal{D}^{\mathrm{IBC}}_{{\mathrm{TINA}}} = \mathcal{P}(\bm{\mathrm{id}}, \mathcal{K}).
\end{equation}
\item In the TIN regime (Definition \ref{def:TIN_regime}), $\mathcal{D}^{\mathrm{IBC}}_{{\mathrm{TINA}}}$ is
optimal such that
\begin{equation}
\label{eq:GDoF_region_IBC_TIN}
\mathcal{D}^{\mathrm{IBC}}  = \mathcal{D}^{\mathrm{IBC}}_{{\mathrm{TINA}}} = \mathcal{P}(\bm{\mathrm{id}}, \mathcal{K}).
\end{equation}
\end{enumerate}
\end{theorem}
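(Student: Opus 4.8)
The plan is to dispose of the first two parts as immediate consequences of the duality already in hand, and to concentrate all the substantive work on the converse needed for the third part.

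For part~1, I would combine Corollary~\ref{corollary:TIN_UL_DL_Duality}, which gives $\mathcal{D}^{\mathrm{IBC}}_{\mathrm{TINA}} = \mathcal{D}^{\mathrm{IMAC}}_{\mathrm{TINA}}$, with the union-of-polyhedra characterization of $\mathcal{D}^{\mathrm{IMAC}}_{\mathrm{TINA}}$ in part~1 of Theorem~\ref{theorem:TIN_IMAC}; equation~\eqref{eq:GDoF_region_IBC} then follows verbatim. For part~2, in the CTIN regime part~2 of Theorem~\ref{theorem:TIN_IMAC} collapses the union to the single polyhedron $\mathcal{P}(\bm{\mathrm{id}},\mathcal{K})$, so the same duality transfers $\mathcal{D}^{\mathrm{IBC}}_{\mathrm{TINA}} = \mathcal{P}(\bm{\mathrm{id}},\mathcal{K})$; convexity is inherited because $\mathcal{P}(\bm{\mathrm{id}},\mathcal{K})$ is a polyhedron by construction (the specialization \eqref{eq:polyhedral_TIN_region_1}--\eqref{eq:polyhedral_TIN_region_2} of Definition~\ref{def:polyhedral_region}).

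For part~3, the TINA identity $\mathcal{D}^{\mathrm{IBC}}_{\mathrm{TINA}} = \mathcal{P}(\bm{\mathrm{id}},\mathcal{K})$ is free, since the TIN regime (Definition~\ref{def:TIN_regime}) is contained in the CTIN regime (Definition~\ref{def:CTIN_regime}), so part~2 already applies. Achievability $\mathcal{P}(\bm{\mathrm{id}},\mathcal{K}) = \mathcal{D}^{\mathrm{IBC}}_{\mathrm{TINA}} \subseteq \mathcal{D}^{\mathrm{IBC}}$ is automatic, as every TINA tuple is by definition achievable. The whole burden is therefore the converse $\mathcal{D}^{\mathrm{IBC}} \subseteq \mathcal{P}(\bm{\mathrm{id}},\mathcal{K})$: I must show that any GDoF tuple achievable by an arbitrary code over the IBC satisfies \eqref{eq:polyhedral_TIN_region_1}--\eqref{eq:polyhedral_TIN_region_2}. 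This is exactly where a new genie-aided outer bound is needed, which I would establish in Section~\ref{sec:proof_of_outerbound} and then invoke here. I would prove the two families of inequalities separately. The single-cell bounds \eqref{eq:polyhedral_TIN_region_1} are broadcast-type sum-rate bounds: within cell $i$, I want the receiver with the $l_i$-th strongest direct link to reliably decode $W_i^{[1]},\ldots,W_i^{[l_i]}$ (possibly after supplying a genie), so that Fano's inequality bounds $\sum_{s_i=1}^{l_i} R_i^{[s_i]}$ by the mutual information at that receiver, which normalizes to $\alpha_{ii}^{[l_i]}$. The delicate point is that the constituent BCs are not degraded and carry inter-cell interference, so I would introduce the \emph{redundancy order} and show that under the TIN conditions it is preserved even in the presence of interference, validating the decodability argument. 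The cyclic bounds \eqref{eq:polyhedral_TIN_region_2} are the genuinely genie-aided, multi-cell bounds: for a cyclic sequence $(i_1,\ldots,i_m)\in\Sigma(\mathcal{K})$ and chosen depths $l_{i_j}$, I would hand carefully selected side-information signals to the targeted receivers, decode the relevant message groups cell by cell, subtract the reconstructed transmit signals, and telescope the entropy terms around the cycle, leaving behind at each step a penalty equal to the cross-link level $\alpha_{i_j i_{j-1}}^{[l_{i_j}]}$, reproducing exactly the right-hand side of \eqref{eq:polyhedral_TIN_region_2}; the TIN conditions \eqref{eq:TIN_cond_1}--\eqref{eq:TIN_cond_2} are what make the residual noise and interference terms GDoF-negligible so the chain closes.

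The main obstacle is the converse for part~3, and within it the broadcast structure of each cell. Unlike the single-user-per-cell IC, each cell here hides a multi-user BC whose users need not be stochastically degraded, so no degraded- or less-noisy-ordering may be invoked directly; worse, such orders are not preserved under inter-cell interference. Identifying the redundancy order, proving that it holds in the TIN regime, and showing that it survives inter-cell interference--so that the chosen receiver in each cell can indeed decode the targeted message group--is the crux. Driving the cyclic genie-telescoping to produce precisely the cross-link penalties $\alpha_{i_j i_{j-1}}^{[l_{i_j}]}$ with no GDoF slack, uniformly over all cyclic sequences and depths, is the accompanying bookkeeping challenge. Once an outer bound matching \eqref{eq:polyhedral_TIN_region_1}--\eqref{eq:polyhedral_TIN_region_2} is in hand, part~3 closes by pairing it with the automatic achievability.
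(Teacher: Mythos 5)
Your proposal is correct and matches the paper's route essentially step for step: parts 1 and 2 follow by combining Corollary \ref{corollary:TIN_UL_DL_Duality} with the corresponding parts of Theorem \ref{theorem:TIN_IMAC}, and part 3 pairs the automatic achievability (via TIN $\subseteq$ CTIN and the definition of the TINA region) with a new genie-aided outer bound --- the paper's Theorem \ref{theorem:outer_bound} --- whose GDoF version is exactly $\mathcal{P}(\bm{\mathrm{id}},\mathcal{K})$, established through preservation of the redundancy order and cyclic genie-telescoping just as you outline. The only cosmetic deviation is that the paper dispatches the single-cell bounds \eqref{eq:capacity_outer_1} by standard degraded-BC arguments (inter-cell interference removed by a genie), reserving the redundancy-order machinery (Lemmas \ref{lemma:more_capable} and \ref{lemma:diff_entropies}) exclusively for the multi-cell cyclic bounds \eqref{eq:capacity_outer_2}, whereas you invoke it for both families of inequalities.
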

The characterization in \eqref{eq:GDoF_region_IBC} and the convexity result in \eqref{eq:GDoF_region_IBC_CTIN}
follow by combining their IMAC counterparts in Theorem \ref{theorem:TIN_IMAC}, i.e. \eqref{eq:GDoF_region_IMAC} and \eqref{eq:GDoF_region_IMAC_CTIN}, respectively, 
with the TINA region duality in Corollary \ref{corollary:TIN_UL_DL_Duality}.
The TIN optimality result in \eqref{eq:GDoF_region_IBC_TIN}, however, requires a new information-theoretic outer bound for the IBC.
This new outer bound is presented  in the next section. 

The remainder of this section is dedicated to understanding the CTIN and TIN conditions in light of the IBC result in 
Theorem  \ref{theorem:TIN_IBC}.
We refer to the conditions in \eqref{eq:CTIN_cond_1} and \eqref{eq:TIN_cond_1} as BC-type conditions, as their main purpose is to
govern the \emph{order} of users within the same cell (or BC) in the presence of inter-cell interference. 
On the other hand, the conditions in \eqref{eq:CTIN_cond_2} and \eqref{eq:TIN_cond_2}
are referred to as IC-type conditions, as their main purpose is to guarantee sufficiently low levels of (inter-cell) interference,
and they are identical to their IC counterparts in \cite{Yi2016} and \cite{Geng2015}, respectively.
These points are further clarified through the next discussion, which we keep at an intuitive level.
Rigorous proofs of the claims we make  are deferred to the following parts of the paper. 
\subsubsection{2-cell, 3-user Network}
\label{subsubsec:2_cell_3_user}
We find it more instructive to conduct our treatment  of the CTIN and TIN conditions using the 2-cell, 3-user network in Fig. \ref{fig:signal_levels_TIN}(a).
This seemingly simple network captures the most essential aspects of downlink multi-cell TIN, and insights developed here extend to the general IBC.
For ease of exposition, we adopt more brief notation in this part, as shown in Fig. \ref{fig:signal_levels_TIN}(a). 
The $3$ users are labelled by $a$, $b$ and $c$,
and their corresponding received signals are given by
\begin{align}
\label{eq:signal_model_y_a}
Y_{a}(t) & = a_{1}X_{1}(t) +  a_{2}X_{2}(t) + Z_{a}(t) \\
\label{eq:signal_model_y_b}
Y_{b}(t) & = b_{1}X_{1}(t) +  b_{2}X_{2}(t) + Z_{b}(t) \\
\label{eq:signal_model_y_c}
Y_{c}(t) & = c_{1}X_{1}(t) +  c_{2}X_{2}(t) + Z_{c}(t).
\end{align}
For all $i \in \{1,2\}$, we have $|a_{i}|^{2} = P^{\alpha_{i}}$, 
$|b_{i}|^{2} = P^{\beta_{i}}$ and $|c_{i}|^{2} = P^{\gamma_{i}}$, where $\alpha_{i}$, $\beta_{i}$ and $\gamma_{i}$ are the corresponding channel strengths.
Transmitter $1$ is associated with users $a$ and $b$, and together they form cell $1$. 
On the other hand, cell $2$ is formed by transmitter 2 and user $c$.
Without loss of generality, we assume that user $b$ is the stronger BC user in cell $1$,  i.e. $\beta_{1} \geq \alpha_{1}$.
Moreover, corresponding rate and GDoF tuples are denoted by $(R_{a},R_{b},R_{c})$ and $(d_{a},d_{b},d_{c})$, respectively. 

From Definition \ref{def:CTIN_regime}, the CTIN conditions for the above 2-cell, 3-user network are listed as follows:
\begin{align}
\label{eq:CTIN_cond_2_cell_3_user_1}
\beta_{1} - \beta_{2} & \geq \alpha_{1} - \alpha_{2} \\
\label{eq:CTIN_cond_2_cell_3_user_2}
\alpha_{1} & \geq  \alpha_{2} + \gamma_{1} \\
\label{eq:CTIN_cond_2_cell_3_user_3}
\gamma_{2} & \geq  \gamma_{1}  + \max \{  \alpha_{2} , \beta_{2}  \}.
\end{align}
It can be verified that for $2$-cell networks, the CTIN IC-type condition in \eqref{eq:CTIN_cond_2} reduces to the TIN IC-type condition in \eqref{eq:TIN_cond_2}.
Therefore, the TIN conditions of Definition \ref{def:TIN_regime} for the above 2-cell, 3-user network 
are given by \eqref{eq:CTIN_cond_2_cell_3_user_2} and \eqref{eq:CTIN_cond_2_cell_3_user_3}, in addition to the following BC-type condition:
\begin{subequations}
\label{eq:TIN_cond_2_cell_3_user}
\begin{align}
\label{eq:TIN_cond_2_cell_3_user_a}
\beta_{1} - \beta_{2} & \geq  \alpha_{1} \  \  \text{or} \\
\label{eq:TIN_cond_2_cell_3_user_b}
\beta_{1} - \beta_{2}   & \geq \alpha_{1}  - (\alpha_{2}  - \beta_{2}).
\end{align}
\end{subequations}
\subsubsection{Redundancy Order}
\label{subsubsec:redun_order}
To further set the stage for the following treatment, we establish the notion of \emph{redundancy} order, 
which is key to interpreting the CTIN and TIN conditions.
To this end, let us first consider cell $1$ of the above 2-cell, 3-user network in isolation, which is a $2$-user degraded Gaussian BC. 
It is well known that the degradedness of this channel imposes an order amongst users, where user $b$ is stronger than user $a$.
This degraded order implies a less restrictive type of order, 
known as the \emph{less noisy} order \cite[Ch. 5]{ElGamal2011}, where for all Markov chains $W \rightarrow X_{1}^{n} \rightarrow (Y_{a}^{ n},Y_{b}^{ n})$, we have
\begin{equation}
\label{eq:less_noisy_degraded_BC}
I\big(W ; Y_{a}^{ n}   \big)  \leq  I \big(W ;Y_{b}^{n}   \big).
\end{equation}
An implication of \eqref{eq:less_noisy_degraded_BC} is that user $b$ (the less noisy user) can decode whatever user $a$ (the more noisy user)  decodes.
This less noisy order, in turn, implies a further less restrictive type of order, which we call the \emph{redundancy} order.
For cell $1$ in isolation, the redundancy order is specified by
\begin{equation}
\max_{(R_{a},R_{b}) \in \mathcal{C}^{\mathrm{BC}}} R_{a} + R_{b}  \ \ = \max_{(R_{a},R_{b}) \in \mathcal{C}^{\mathrm{BC}}} R_{b},
\end{equation}
where $\mathcal{C}^{\mathrm{BC}}$ denotes the corresponding BC capacity region.
That is, user $a$ is redundant with respect to user $b$ in the sense that the network's sum-capacity is achieved by omitting the former user 
(i.e. setting $R_{a} = 0$).
While it is clear that all three orders hold for users of cell $1$ in isolation, discrepancies start to surface once inter-cell interference from cell $2$  comes into play.

In the presence of cell $2$,  users of cell $1$ may receive different levels of inter-cell interference, which may tamper with their order.
It is easy to perceive that the above-described orders, that hold in the absence of inter-cell interference, 
are generally not preserved under inter-cell interference. 
As it turns out however, some orders are preserved under the TIN conditions in 
\eqref{eq:CTIN_cond_2_cell_3_user_2}--\eqref{eq:TIN_cond_2_cell_3_user}, albeit in a weaker, constant-rate-gap sense.
For instance, in the case where  \eqref{eq:TIN_cond_2_cell_3_user_a} holds, while the degraded order of users in cell $1$ is lost, 
the less noisy order remains approximately intact such that
\begin{equation}
\label{eq:less_noisy_constant}
I\big(W ; Y_{a}^{ n}   \big)  \leq  I \big(W ;Y_{b}^{n}   \big) + n \times \mathrm{constant}.
\end{equation}
This approximate less noisy order, however,  is also lost in the TIN regime when \eqref{eq:TIN_cond_2_cell_3_user_a} is violated.
Nevertheless, in this case \eqref{eq:TIN_cond_2_cell_3_user_b} preserves the redundancy, approximately, such that 
\begin{equation}
\label{eq:redundancy_order}
\max_{(R_{a},R_{b},R_{c}) \in \mathcal{C}^{\mathrm{IBC}}} R_{a} + R_{b} + R_{c}  \  \  \leq  \max_{(R_{a},R_{b},R_{c}) \in \mathcal{C}^{\mathrm{IBC}}} R_{b} + R_{c}  + \mathrm{constant}.
\end{equation}
That is, in the presence of user $b$ (the necessary user), user $a$ (the redundant user) contributes at most a constant to the sum-capacity, which is approximately achieved by omitting user $a$.
In the GDoF sense, the presence of user $a$ does not increase the overall GDoF, as \eqref{eq:redundancy_order} translates to
\begin{equation}
\label{eq:redundancy_order_GDoF}
\max_{(d_{a},d_{b},d_{c}) \in \mathcal{D}^{\mathrm{IBC}}} d_{a} + d_{b} + d_{c}  \  \  = \max_{(d_{a},d_{b},d_{c}) \in \mathcal{D}^{\mathrm{IBC}}} d_{b} + d_{c}.
\end{equation}
Since the less noisy order in \eqref{eq:less_noisy_constant} implies the redundancy order in \eqref{eq:redundancy_order}, 
it turns out that the TIN regime has a special attribute of preserving the redundancy order of user $a$ and user $b$ under inter-cell interference, which is not necessarily the case outside the TIN regime.
This is further elaborated by taking a closer look at  CTIN and TIN regimes in light of the above orders. 
\subsubsection{CTIN Regime}
\label{subsubsec:CTIN_insights}
Starting with the BC-type condition in  \eqref{eq:CTIN_cond_2_cell_3_user_1}, this imposes a signal to interference ratio (SIR) order on the users  of cell $1$.
In particular,  user $b$, which is stronger than user $a$ in the SNR sense, 
remains stronger in the SIR sense (i.e. under inter-cell interference). 
This condition ensures that the natural decoding order $\bm{\mathrm{id}}$,  which is optimal in the absence of inter-cell interference, 
remains optimal  under inter-cell interference, where optimality here is with respect to the TINA region.\footnote{This optimality of the identity order does not necessarily hold when \eqref{eq:CTIN_cond_2_cell_3_user_1} is violated, see \cite[Fig. 2(c)]{Joudeh2019a}.}

While the above BC-type condition ensures that the intra-cell order of SNRs  is inherited by the corresponding SIRs, it does not necessarily guarantee low levels of inter-cell interference. 
This, in turn, is guaranteed by the IC-type conditions in \eqref{eq:CTIN_cond_2_cell_3_user_2} and \eqref{eq:CTIN_cond_2_cell_3_user_3}. 
To see this, we recall from Remark \ref{remark:CTIN_TIN_other_users}  that in addition to the IC-type conditions for users $a$ and $c$, given in \eqref{eq:CTIN_cond_2_cell_3_user_2} and \eqref{eq:CTIN_cond_2_cell_3_user_3} respectively, an implicit IC-type condition for user $b$ holds as well, 
i.e. $\beta_{1}  \geq  \beta_{2} + \gamma_{1}$.
These conditions, by regular IC terms, guarantee that each transmitter-receiver pair is in the ``very weak" interference regime \cite{Etkin2008,Geng2015}.
A key implication is that all GDoF benefits of using time-sharing over subnetworks in tandem with TIN are eliminated, 
hence guaranteeing that the TINA region $\mathcal{D}_{\mathrm{TINA}}^{\mathrm{IBC}}$  is convex. 
In this case, the TINA region is described by all non-negative 
tuples $(d_{a},d_{b},d_{c})$ that satisfy
\begin{align}
\label{eq:polyhedral_TINA_2_cell_1}
d_{a} & \leq \alpha_{1} \\
\label{eq:polyhedral_TINA_2_cell_2}
d_{a} + d_{b} & \leq \beta_{1} \\
\label{eq:polyhedral_TINA_2_cell_3}
d_{a} + d_{c} & \leq (\alpha_{1} - \alpha_{2}) + (\gamma_{2} - \gamma_{1}) \\
\label{eq:polyhedral_TINA_2_cell_4}
d_{a} + d_{b} + d_{c} & \leq (\beta_{1} - \beta_{2}) + (\gamma_{2} - \gamma_{1}).
\end{align}

From the viewpoint of the above-described TINA region, user $a$, which is weaker than user $b$ in the BC sense, 
appears to be redundant with respect to user $b$. 
This is seen through the sum-GDoF inequalities in \eqref{eq:polyhedral_TINA_2_cell_2} and \eqref{eq:polyhedral_TINA_2_cell_4}, which can be achieved by omitting user $a$ (i.e. setting $d_{a} = 0$).
This gives rise to the question of whether the SIR order, imposed by the IBC-type condition in \eqref{eq:CTIN_cond_2_cell_3_user_1}, 
implies a redundancy order as the one in \eqref{eq:redundancy_order_GDoF}.
The answer to this question is: not necessarily.
The redundancy order exhibited by the TINA region in \eqref{eq:polyhedral_TINA_2_cell_1}--\eqref{eq:polyhedral_TINA_2_cell_4} is, in general, merely 
an artefact of employing the TIN scheme, which due to the lack of structure, is only sensitive to the SIRs. 
This is particularly the case in the sub-regime where the CTIN conditions are satisfied but the TIN conditions are not.

To elaborate, we first observe from \eqref{eq:polyhedral_TINA_2_cell_1}--\eqref{eq:polyhedral_TINA_2_cell_4} that when 
considering all three users $a$, $b$ and $c$, or users $b$ and $c$ only, 
the maximum achievable sum-GDoF using TIN 
is given by 
\begin{equation}
\label{eq:sum_GDoF_users_a_b_c_TIN}
d_{\mathrm{TINA}}^{\mathrm{IBC}} = (\beta_{1} - \beta_{2}) + (\gamma_{2} - \gamma_{1}).
\end{equation}
In Appendix \ref{appendix:IA}, we show that a scheme based on IA with structured codes  strictly surpass $d_{\mathrm{TINA}}^{\mathrm{IBC}}$, 
achieving a sum-GDoF of $d_{\mathrm{IA}}^{\mathrm{IBC}} = d_{\mathrm{TINA}}^{\mathrm{IBC}} + \gamma_{\mathrm{IA}}$,
where $\gamma_{\mathrm{IA}} > 0$  given that \eqref{eq:CTIN_cond_2_cell_3_user_1} holds and \eqref{eq:TIN_cond_2_cell_3_user} is strictly 
violated.\footnote{This holds throughout this sub-regime, except for a set of channel coefficients of measure zero.}
User $a$ is essential for achieving this strict IA gain, which is further confirmed by observing that the sum-GDoF of users $b$ and $c$ is,
as a matter of fact, bounded above by
\begin{equation}
\label{eq:sum_GDoF_users_b_c}
\max_{(d_{a},d_{b},d_{c}) \in \mathcal{D}^{\mathrm{IBC}}}  d_{b} + d_{c} \leq (\beta_{1} - \beta_{2}) + (\gamma_{2} - \gamma_{1}),
\end{equation}
which is an information-theoretic bound that holds due to the IC-type TIN conditions \cite{Etkin2008}.
Putting together the above pieces, it follows that in this sub-regime we have
\begin{equation}
\max_{(d_{a},d_{b},d_{c}) \in \mathcal{D}^{\mathrm{IBC}}}  d_{a} + d_{b} + d_{c} \ \ > \max_{(d_{a},d_{b},d_{c}) \in \mathcal{D}^{\mathrm{IBC}}}  d_{b} + d_{c},
\end{equation}
hence showing the necessity, rather than redundancy, of user $a$. More importantly, 
this shows that the CTIN conditions in Definition \ref{def:CTIN_regime} 
are insufficient for TIN optimality in the IBC. 
This is in line with a similar observation made for the dual IMAC in \cite{Gherekhloo2016} (see also \cite[Rem. 7]{Joudeh2019a}).
\subsubsection{TIN Regime}
\label{subsubsec:TIN_insights}
The difference between the CTIN and TIN regimes of the 2-cell, 3-user network is due to the BC-type condition in \eqref{eq:TIN_cond_2_cell_3_user}.
As we further elaborate next, this condition imposes an order on the users of cell $1$ which guarantees, in the information-theoretic sense, that user $a$ is redundant with respect to user $b$, in accordance with \eqref{eq:redundancy_order_GDoF} in Section \ref{subsubsec:redun_order}.

Starting with the branch in \eqref{eq:TIN_cond_2_cell_3_user_a}, this condition guarantees that in cell $1$, the SNR of the weaker BC user (user $a$) 
is no more than the SIR of the stronger BC user (user $b$).  
As we show in  Section \ref{subsec:lemmas}, this condition on channel strengths imposes a less noisy order on the users of cell $1$ as the one in \eqref{eq:less_noisy_constant}, which implies that user $b$ can approximately decode whatever user $a$ decodes.
The condition in \eqref{eq:TIN_cond_2_cell_3_user_a} is illustrated in Fig. \ref{fig:signal_levels_TIN}(b) using signal levels, measured in terms of channel strength parameters (i.e. exponents of $P$).
It is seen that all levels of the desired signal $X_{1}$ observed by  user $a$ are received by user $b$ above inter-cell interference levels, caused by 
the interfering signal $X_{2}$.  This enables user $b$ to retrieve all useful signal levels received by user $a$.

\begin{figure}
\centering
\includegraphics[width = 1.0\textwidth,trim={0cm 0cm 0cm 0cm},clip]{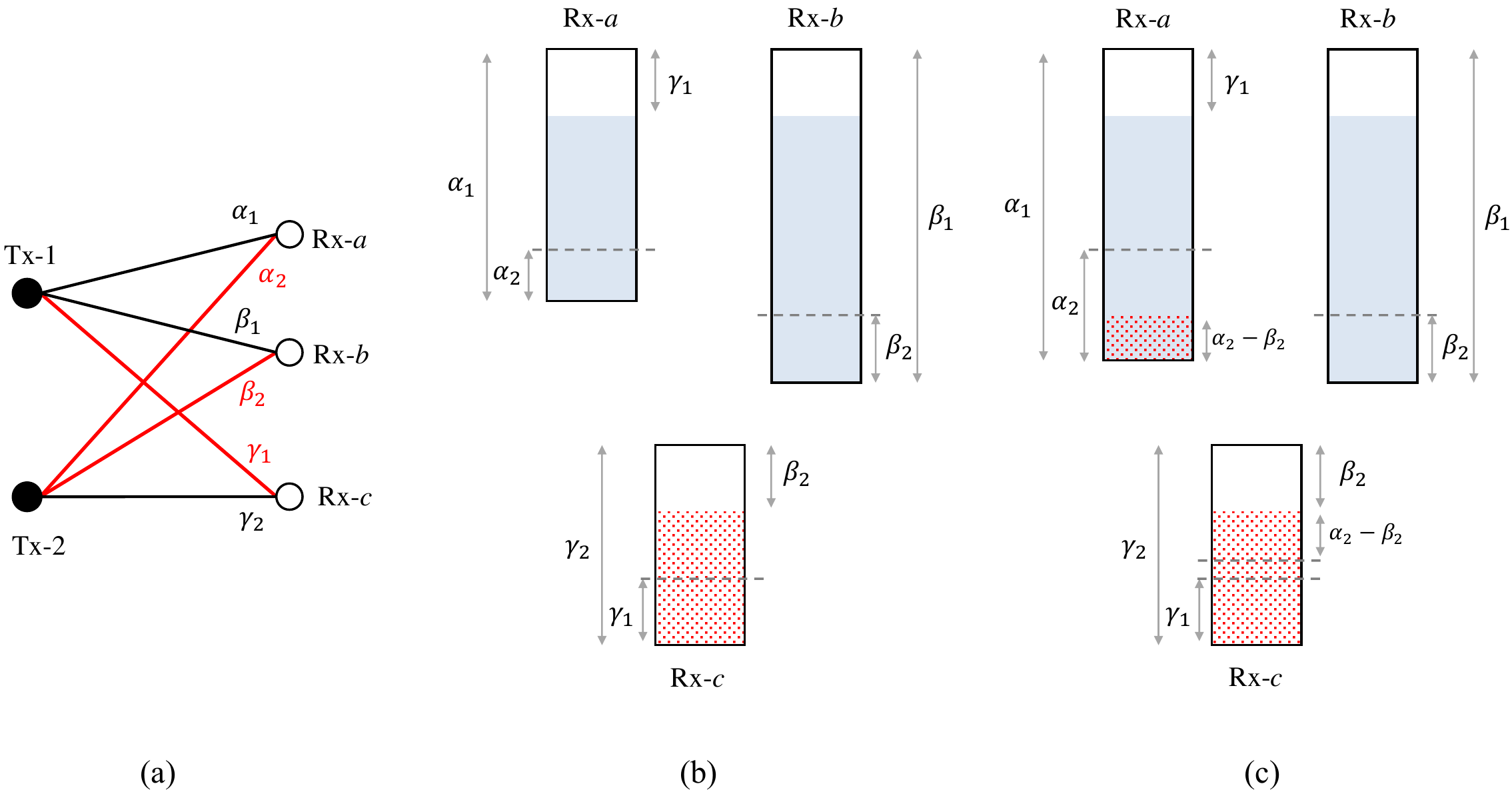}
\caption{\small
(a) 2-cell, 3-user network, and two examples of received signal power levels in the TIN regime: (b) the case where $\beta_{1} - \beta_{2} \geq \alpha_{1}$, and (c) the case where $\beta_{1} - \beta_{2} < \alpha_{1}$ and $\beta_{1} - 2\beta_{2} \geq \alpha_{1} - \alpha_{2}$. Levels in white, blue and dotted red represent empty levels, $X_{1}$ and $X_{2}$, respectively.}
\label{fig:signal_levels_TIN}
\end{figure}
Next, we consider the branch in \eqref{eq:TIN_cond_2_cell_3_user_b}, which  is most relevant whenever \eqref{eq:TIN_cond_2_cell_3_user_a} is strictly violated, i.e.
$\beta_{1} - \beta_{2} \geq \alpha_{1} - (\alpha_{2} - \beta_{2})$ and $\beta_{1} - \beta_{2} < \alpha_{1}$.
Contrary to the previous branch, the less noisy order in cell $1$ is not preserved in this case.
Nevertheless, the redundancy order in cell $1$, i.e. \eqref{eq:redundancy_order_GDoF}, 
is preserved here.
To see this, we first highlight that since \eqref{eq:TIN_cond_2_cell_3_user_b} holds and \eqref{eq:TIN_cond_2_cell_3_user_a} does not, user $b$ is strictly  \emph{less interfered with} compared to user $a$ , i.e. 
$\beta_{2} < \alpha_{2}$.
Hence, as shown through the illustrative example in  Fig. \ref{fig:signal_levels_TIN}(c), the signal levels of user $a$ can be partitioned into two parts: 
\begin{itemize}
\item \emph{The upper $\alpha_{1} - (\alpha_{2} - \beta_{2})$ levels}: All signal levels of $X_{1}$ observed by user $a$ 
through this part are received by user $b$ above inter-cell interference levels. 
This holds since user $a$ has an effective SNR of $\alpha_{1} - (\alpha_{2} - \beta_{2})$  in this upper part, 
which is no more than the SIR of user $b$.
Hence, similar to the first branch \eqref{eq:TIN_cond_2_cell_3_user_a}, user $b$ is less noisy than user $a$ in this upper part.
\item \emph{The lower  $(\alpha_{2} - \beta_{2})$ levels:} In this part, user $a$ receives levels of the interfering signal $X_{2}$ which 
the stronger user $b$ does not receive (see Fig. \ref{fig:signal_levels_TIN}(c)).
This part hence can be exploited to grant user $a$  access to levels of the desired signal $X_{1}$,  which may be corrupted at user $b$ by higher levels of 
the interfering signal $X_{2}$. For instance, this occurs if cell $2$ abstains from transmitting at (some of) these lower  levels  received by user $a$, and transmits at higher levels instead.
The GDoF gain achieved by user $a$ through this lower part of signal levels is at most $(\alpha_{2} - \beta_{2})$. 
Nevertheless, it can be shown that such gain can only be realized at the expense of the GDoF achieved by user $b$ or user $c$, where an equal loss is incurred.
\end{itemize}
The above partition shows that in the presence of user $b$,  user $a$ does not contribute an effective increase to the overall GDoF.
Therefore, condition in \eqref{eq:TIN_cond_2_cell_3_user_b} imposes a redundancy order on the users of cell $1$ in which  user $a$ is redundant with respect to user $b$. 

By combining the above observations, we reach the conclusion that the interference-free redundancy order of users in cell $1$ is preserves in the GDoF sense under inter-cell interference in the TIN regime. 
This redundancy of user $a$ implies that the information-theoretic bound on the sum-GDoF of users $b$ and $c$, given in \eqref{eq:sum_GDoF_users_b_c}, is also valid as a bound on the sum-GDoF $a$, $b$ and $c$.
Combining this with existing results, it follows that TINA region in \eqref{eq:polyhedral_TINA_2_cell_1}--\eqref{eq:polyhedral_TINA_2_cell_4} is also an information-theoretic outer bound,
hence establishing the optimality of TIN for this network in the TIN regime. 
\begin{remark}
It is worthwhile highlighting that the TIN conditions identified for the IBC in this work 
are only sufficient for TIN optimality, and we make no claim of necessity. 
This is also the case for the majority of TIN-optimality results in the literature, as noted in \cite[Rem. 1]{Geng2016} (see also  \cite[Rem. 8]{Joudeh2019a}).
A representative example is the sufficient TIN condition identified for the regular IC in \cite{Geng2015}.
It was conjectured in \cite{Geng2015} that such condition is also necessary for TIN-optimality  in the $K$-user IC, except for a set of channel
gain values of measure zero.
This conjecture remains open. 
\hfill $\lozenge$
\end{remark}
\section{Outer Bound}
\label{sec:proof_of_outerbound}
We start this section by stating the converse result. 
\begin{theorem}
\label{theorem:outer_bound}
In the TIN regime (Definition \ref{def:TIN_regime}),
the capacity region of the IBC described in Section \ref{sec:system model},  denoted by  $\mathcal{C}^{\mathrm{IBC}}(P)$, 
is included in the set of non-negative rate tuples satisfying:
\begin{align}
\label{eq:capacity_outer_1}
\sum_{s_{i}\in \langle l_{i} \rangle}R_{i}^{[s_{i}]} & \leq \alpha_{ii}^{[l_{i}]}\log ( P ) + O(1),
\; l_{i} \in \langle L_{i} \rangle, \forall i \in \langle K \rangle\\
\nonumber
\sum_{j \in \langle m \rangle } \sum_{s_{i_{j}} \in \langle l_{i_{j}} \rangle} R_{i_{j}}^{[s_{i_{j}}]} & \leq
\sum_{j \in \langle m \rangle } \big( \alpha_{i_{j}i_{j}}^{[l_{i_{j}}]}-\alpha_{i_{j}i_{j-1}}^{[l_{i_{j}}]} \big)
\log( P ) + O(1), \\
\label{eq:capacity_outer_2}
\forall l_{i_{j}} \in \langle L_{i_{j}} \rangle, \; & (i_{1},\ldots,i_{m}) \in \Sigma\big(\langle K \rangle\big), m \in \langle 2:K \rangle.
\end{align}
In \eqref{eq:capacity_outer_2}, a modulo-$m$ operation is implicitly used on cell indices such that $i_{0} = i_{m}$.
\end{theorem}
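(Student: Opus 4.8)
The plan is to establish the two families of rate inequalities \eqref{eq:capacity_outer_1} and \eqref{eq:capacity_outer_2} by a genie-aided converse combined with Fano's inequality, the essential new ingredient being the \emph{redundancy order} among users within each cell. The guiding principle is that, in the TIN regime, the BC-type condition \eqref{eq:TIN_cond_1} lets us treat each cell, for the purpose of any given sum-rate bound, as if only its \emph{strongest participating} user were present: the weaker users of a cell can be folded into the stronger one without inflating the bound beyond an $O(1)$ term. Throughout I would begin from Fano, $n\sum R \le I(\mathbf{W};\mathbf{Y}^n)+n\epsilon_n$, and augment each received signal with carefully chosen side information so that the observation collapses to a desired signal through a point-to-point-like link plus a controllable residual. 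Since the resulting inequalities coincide with the description of $\mathcal{P}(\bm{\mathrm{id}},\mathcal{K})$ in \eqref{eq:polyhedral_TIN_region_1}--\eqref{eq:polyhedral_TIN_region_2}, matching them against achievability yields the TIN optimality in \eqref{eq:GDoF_region_IBC_TIN}.

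For the single-cell bounds \eqref{eq:capacity_outer_1}, fix cell $i$ and level $l_i$. By the ascending SNR ordering of Remark \ref{remark:strength_order}, user $(l_i,i)$ is the strongest among users $1,\dots,l_i$. I would first prove a preparatory lemma showing that, under \eqref{eq:TIN_cond_1}, user $(l_i,i)$ can reconstruct the messages $W_i^{[1]},\dots,W_i^{[l_i]}$ up to a constant rate gap; this is the information-theoretic content of the redundancy order and it splits into the two branches of \eqref{eq:TIN_cond_1}: branch \eqref{eq:TIN_cond_2_cell_3_user_a} yields an approximate less-noisy order as in \eqref{eq:less_noisy_constant}, whereas branch \eqref{eq:TIN_cond_2_cell_3_user_b} yields only the weaker redundancy statement \eqref{eq:redundancy_order_GDoF} via the two-part signal-level partition of Section \ref{subsubsec:TIN_insights}. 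Granting this, Fano gives $n\sum_{s_i\le l_i}R_i^{[s_i]} \le I\big(W_i^{[1:l_i]};Y_i^{[l_i],n}\big)+n\epsilon_n$; handing the receiver a genie consisting of all inter-cell transmit signals (independent of $W_i^{[1:l_i]}$) then reduces the right-hand side to the point-to-point mutual information of a link of strength $\alpha_{ii}^{[l_i]}$, namely $\alpha_{ii}^{[l_i]}\log P + O(1)$.

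For the cyclic bounds \eqref{eq:capacity_outer_2}, which are the heart of the theorem, I would run a telescoping genie argument around the cyclic sequence $(i_1,\dots,i_m)$. For each cell $i_j$ I first collapse the cell to its effective user $(l_{i_j},i_j)$ using the redundancy order of the previous step, and then hand that receiver a genie consisting of a noisy, level-resolved copy of its own base station's signal $X_{i_j}$ at the resolution at which $X_{i_j}$ is received as interference by the next cell's user $(l_{i_{j+1}},i_{j+1})$, i.e. at strength $\alpha_{i_{j+1}i_j}^{[l_{i_{j+1}}]}$. Writing each $nR$ term via Fano as a mutual information with this genie and expanding into differential entropies, the interference term inside the received signal of each cell is precisely the quantity captured by the genie of its predecessor; chaining these around the cycle makes the entropies telescope, and the modulo-$m$ convention $i_0=i_m$ closes the loop. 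After the reindexing $\sum_j\alpha_{i_{j+1}i_j}=\sum_j\alpha_{i_ji_{j-1}}$, the surviving terms are exactly $\sum_{j}\big(\alpha_{i_ji_j}^{[l_{i_j}]}-\alpha_{i_ji_{j-1}}^{[l_{i_j}]}\big)\log P$ plus $O(1)$. This mirrors the cyclic converse underlying the IMAC result of Theorem \ref{theorem:TIN_IMAC} and its $K$-user IC ancestor, with the broadcast structure absorbed entirely into the redundancy reduction.

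The step I expect to be the main obstacle is the preparatory redundancy lemma under branch \eqref{eq:TIN_cond_2_cell_3_user_b}, where the clean less-noisy order fails. There the weaker user genuinely sees desired signal levels that the stronger user does not, so one cannot simply argue that the stronger user decodes everything; instead one must show that any rate the weaker user extracts from its lower levels (the $(\alpha_{2}-\beta_{2})$-type part in the 2-cell notation) is paid for by an equal loss at the stronger user or at a user in a neighbouring cell, so that the \emph{sum} entering each inequality of \eqref{eq:capacity_outer_1}--\eqref{eq:capacity_outer_2} is unaffected to within $O(1)$. Making this accounting rigorous at finite SNR rather than purely in GDoF, and verifying that the genie choices remain mutually consistent around the entire cycle so that no signal level is double-counted or double-removed, is where the bulk of the technical care will be required.
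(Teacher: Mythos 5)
Your high-level architecture (Fano plus genie signals chained around the cyclic sequence, with a redundancy-type reduction inside each cell) matches the paper's, and you correctly locate the difficulty: the case where branch \eqref{eq:TIN_cond_2_cell_3_user_b} holds but branch \eqref{eq:TIN_cond_2_cell_3_user_a} fails. But your proposal has a genuine gap precisely there, on two counts. First, your step of collapsing each cell to its effective user $(l_{i_j},i_j)$ is not available in that case: the redundancy order is a statement about sum rates, not about decodability, so there is no lemma saying the strongest user can reconstruct the weaker users' messages under inter-cell interference, and Fano cannot be funneled into the single term $I\big(W_{i_j}^{[1:l_{i_j}]};Y_{i_j}^{[l_{i_j}]n}\big)$. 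The paper instead partitions the participating users of each cell into a subset $\mathcal{L}_{i_j}'$ of users that are more noisy than user $l_{i_j}$ (these, and only these, are collapsed via Lemma \ref{lemma:more_capable}) and a subset $\mathcal{L}_{i_j}''$ of users that are not; the latter are kept in the bound through a telescoping chain of conditional entropy differences $h\big(Y_{i_j}^{[q_{i_j}(s)]n}|\cdot\big)-h\big(Y_{i_j}^{[q_{i_j}(s+1)]n}|\cdot\big)$, each controlled by Lemma \ref{lemma:diff_entropies}, whose applicability along the chain rests on the intrinsic-order Lemma \ref{lemma:not_more_noisy_subset} (successive users in $\mathcal{L}_{i_j}''$ satisfy the branch-(b)-type conditions with strictly decreasing interference levels, cf. \eqref{eq:int_order_general_case}). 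None of this machinery appears in your plan; your closing paragraph asks for exactly this accounting but does not supply it, and it is the central new idea of the converse rather than a finite-SNR technicality.

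Second, the genie you specify would fail in exactly that case. You place the genie for cell $i_j$ at strength $\alpha_{i_{j+1}i_j}^{[l_{i_{j+1}}]}$, the interference level seen by the strongest user of the next cell. But when $\mathcal{L}_{i_{j+1}}''\neq\emptyset$, the users in $\mathcal{L}_{i_{j+1}}''$ see strictly stronger interference than user $l_{i_{j+1}}$, and the Lemma \ref{lemma:diff_entropies} terms charge the sum-rate bound with a positive surplus that telescopes to $\big(\alpha_{i_{j+1}i_j}^{[q_{i_{j+1}}(1)]}-\alpha_{i_{j+1}i_j}^{[l_{i_{j+1}}]}\big)\log P>0$. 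With your genie this surplus is never absorbed, and the resulting bound strictly exceeds the right-hand side of \eqref{eq:capacity_outer_2}. The paper's fix (Remark \ref{remark:2_cell_converse} and \eqref{eq:side_information_general}) is to place the genie at $\alpha_{i_{j+1}i_j}^{[q_{i_{j+1}}(1)]}$, the dominant interference level caused to the next cell: the ETW-type step then yields $-\alpha_{i_{j+1}i_j}^{[q_{i_{j+1}}(1)]}\log P$, which exactly cancels the telescoped surplus and leaves $-\alpha_{i_{j+1}i_j}^{[l_{i_{j+1}}]}\log P$ as required. So the genie level must track the most-interfered user of $\mathcal{L}_{i_{j+1}}''$, not the strongest user, and this choice is what makes the weaker users' potential gains come at an equal cost elsewhere.
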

The $O(1)$ terms in \eqref{eq:capacity_outer_1} and \eqref{eq:capacity_outer_2} are constants with respect to $\log(P)$,
yet depend on the size of the network, specified by $K$ and $L_{1},\ldots,L_{K}$.
Therefore, it readily follows that in the GDoF sense, the outer bound in Theorem \ref{theorem:outer_bound} translates to 
$\mathcal{P}(\bm{\mathrm{id}} , \mathcal{K})$, as described in \eqref{eq:polyhedral_TIN_region_1} and \eqref{eq:polyhedral_TIN_region_2}.
It follows that $\mathcal{D}^{\mathrm{IBC}} = \mathcal{P}(\bm{\mathrm{id}} , \mathcal{K}) = \mathcal{D}^{\mathrm{IBC}}_{\mathrm{TINA}}$,
hence proving the last point in Theorem \ref{theorem:TIN_IBC}.

It is worthwhile highlighting that the outer bound in Theorem \ref{theorem:TIN_IBC} lends itself to a constant-gap characterization 
of the entire capacity region $\mathcal{C}^{\mathrm{IBC}}(P)$ in the TIN regime, where the gap is independent of $P$ and may only depend on the size of the network.
Such characterization can be derived through a direct application of the steps used for the $K$-user 
IC in \cite[Th. 4]{Geng2015}  (see also for the $K$-cell IMAC in \cite[Th. 4]{Joudeh2018}). 
We omit the gap characterization from this paper, as it gives no new insights. 
The remainder of this section is dedicated to proving Theorem \ref{theorem:outer_bound}, where 
the TIN conditions in Definition \ref{def:TIN_regime} are assumed to hold throughout the proof. 
\subsection{Auxiliary Lemmas}
\label{subsec:lemmas}
We commence the outer bound proof by presenting two instrumental lemmas. 
For convenience, we adopt the  notation of the 2-cell, 3-user network in Section \ref{subsubsec:2_cell_3_user} while presenting and proving these lemmas. 
We consider a model with transmitters $1$ and $2$ and receivers $a$ and $b$, with input-output relationship given by \eqref{eq:signal_model_y_a} and \eqref{eq:signal_model_y_b}.
Note that receiver  $c$ is not required here.
We  further assume that $\alpha_{1} \geq \alpha_{2} \geq 0$ and $\beta_{1} \geq \beta_{2} \geq 0$.
Moreover, in addition to variables defined in the signal model, we further consider an arbitrary random variable
$W \sim F(w)$, independent of $X_{2}^{n}$, $Z_{a}^{n}$ and $Z_{b}^{n}$,
with cumulative distribution function given by $F(w)$, and which forms a Markov chain as:
\begin{equation}
\label{eq:Markov_chain_lemmas}
W \rightarrow X_{1}^{n} \rightarrow \big(Y_{a}^{n}, Y_{b}^{n} \big).
\end{equation}
We are now ready to state our lemmas.  
\begin{lemma}
\textnormal{\textbf{(Less Noisy under Interference).}}
\label{lemma:more_capable}
Assume that the following condition holds
\begin{equation}
\label{eq:more_capable_condition}
\beta_{1} - \beta_{2}  \geq \alpha_{1}.
\end{equation}
Then for all $X_{1}^{n}$, $X_{2}^{n}$ and $W$ as defined above, we have  
\begin{equation}
\label{eq:more_capable_inequality}
I\big(W ; Y_{a}^{ n}   \big)  \leq  I \big(W ;Y_{b}^{n}   \big) +  n.
\end{equation}
\end{lemma}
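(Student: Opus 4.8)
The plan is to establish an approximate less-noisy relation (in fact, an approximate degradedness up to $n$ bits) by pairing a genie argument at user $a$ with an invertible transformation of user $b$'s observation, using the hypothesis $\beta_{1}-\beta_{2} \geq \alpha_{1}$ precisely to tame a residual interference term. I would first record that the hypothesis forces $\beta_{1} \geq \alpha_{1}$ (since $\beta_{2} \geq 0$), so user $b$ has both a stronger direct link and an SIR that dominates user $a$'s SNR; these are the two facts the argument will consume.

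First I would hand user $a$ its interfering codeword $X_{2}^{n}$ as a genie. Since $W$ is independent of $X_{2}^{n}$, revealing $X_{2}^{n}$ cannot decrease the mutual information, giving $I(W;Y_{a}^{n}) \leq I(W;Y_{a}^{n}\mid X_{2}^{n})$; after subtracting the now-known interference $a_{2}X_{2}^{n}$, the right-hand side equals $I(W;S_{a}^{n})$ with $S_{a}^{n} \triangleq a_{1}X_{1}^{n}+Z_{a}^{n}$, using that the two transmit signals $X_{1}^{n},X_{2}^{n}$ and the noise $Z_{a}^{n}$ are mutually independent in the IBC. This reduces user $a$ to an interference-free point-to-point link for $X_{1}^{n}$ at strength $\alpha_{1}$.

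On user $b$'s side I would replace $Y_{b}^{n}$ by the invertible statistic $U^{n} \triangleq (a_{1}/b_{1})Y_{b}^{n} = a_{1}X_{1}^{n} + (a_{1}b_{2}/b_{1})X_{2}^{n} + (a_{1}/b_{1})Z_{b}^{n}$, so that $I(W;Y_{b}^{n}) = I(W;U^{n})$ and $U^{n}$ carries the desired signal at the \emph{same} gain $a_{1}$ as $S_{a}^{n}$. Here the hypothesis enters twice: $|a_{1}/b_{1}|^{2}=P^{\alpha_{1}-\beta_{1}}\leq 1$ and, crucially, $|a_{1}b_{2}/b_{1}|^{2}=P^{\alpha_{1}-(\beta_{1}-\beta_{2})}\leq 1$, so the leftover interference sits at or below the noise floor. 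Thus $U^{n}=a_{1}X_{1}^{n}+V^{n}$ where $V^{n}$ is independent of $(W,X_{1}^{n})$ and has bounded per-symbol power. It then remains to compare two channels with identical input $a_{1}X_{1}^{n}$ whose additive noises ($Z_{a}^{n}$ of power one and $V^{n}$ of bounded power) differ only by a constant factor. I would bound $I(W;S_{a}^{n})-I(W;U^{n}) \leq I(W;S_{a}^{n}\mid U^{n}) = h(S_{a}^{n}\mid U^{n}) - h(S_{a}^{n}\mid U^{n},W)$; writing $S_{a}^{n}=U^{n}+(Z_{a}^{n}-V^{n})$ gives $h(S_{a}^{n}\mid U^{n}) \leq h(Z_{a}^{n}-V^{n}) \leq n\log(\pi e\, c)$ for an absolute constant $c$, while conditioning further on $X_{1}^{n}$ gives $h(S_{a}^{n}\mid U^{n},W) \geq h(Z_{a}^{n}) = n\log(\pi e)$. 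Chaining the three displays yields $I(W;Y_{a}^{n}) \leq I(W;Y_{b}^{n}) + n$ after tracking constants.

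The main obstacle is the interference $X_{2}^{n}$, which is common to both $Y_{a}^{n}$ and $Y_{b}^{n}$ and hence correlated across the two observations; this is exactly what blocks a naive physical-degradation argument, since one cannot synthesize $Y_{a}^{n}$ from $Y_{b}^{n}$ using independent randomness alone. The genie-plus-transform device circumvents this, and the role of the SIR hypothesis $\beta_{1}-\beta_{2}\geq\alpha_{1}$ is precisely to drive the residual interference coefficient in $U^{n}$ down to the noise level so that $V^{n}$ has bounded power. The only remaining delicate point I anticipate is the constant bookkeeping needed to land the additive gap at exactly $n$ rather than a larger absolute constant times $n$.
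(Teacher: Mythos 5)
Your proposal is sound, and in its second half it takes a genuinely different route from the paper, so a comparison is worthwhile. Both proofs open identically: handing $X_{2}^{n}$ to receiver $a$ as a genie gives $I(W;Y_{a}^{n}) \leq I(W; a_{1}X_{1}^{n}+Z_{a}^{n})$ by independence of $W$ and $X_{2}^{n}$. The divergence is at receiver $b$. The paper scales $Y_{b}^{n}$ by $1/b_{2}$, uses $|b_{2}|\geq 1$ and $|b_{1}/b_{2}|\geq |a_{1}|$ to pass through stochastically degraded channels, and removes the interference with the chain-rule identity
\begin{equation}
I\Big(W;\tfrac{b_{1}}{b_{2}}X_{1}^{n} + X_{2}^{n} + Z_{b}^{n}\Big) \;\geq\; I\Big(W;\tfrac{b_{1}}{b_{2}}X_{1}^{n} + Z_{b}^{n}\Big) - I\big(X_{2}^{n};\, X_{2}^{n}+Z_{b}^{n}\big),
\end{equation}
so the entire cost is $I(X_{2}^{n};X_{2}^{n}+Z_{b}^{n}) \leq n\log(1+\bar{p}) \leq n$, where $\bar{p} \triangleq \frac{1}{n}\sum_{t}\E\big[|X_{2}(t)|^{2}\big] \leq 1$; this is exactly the stated constant. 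You instead scale $Y_{b}^{n}$ by $a_{1}/b_{1}$ so the desired-signal gains match, then bound $I(W;S_{a}^{n})-I(W;U^{n}) \leq I(W;S_{a}^{n}\mid U^{n})$ by a max-entropy computation. Every step of yours is valid (one small caveat: $V^{n}$ has bounded block-average power, not per-symbol power, so the single-letterization of $h(Z_{a}^{n}-V^{n})$ needs Jensen over the time index), but the constant it produces is genuinely worse: $h(Z_{a}^{n}-V^{n}) \leq n\log\big(\pi e\,(1 + |a_{1}b_{2}/b_{1}|^{2}\bar{p} + |a_{1}/b_{1}|^{2})\big) \leq n\log(3\pi e)$ against $h(S_{a}^{n}\mid U^{n},W) \geq n\log(\pi e)$, giving a gap of $n\log 3 > n$. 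This is structural, not bookkeeping you can polish away within your formulation: your comparison simultaneously pays for the power of $Z_{a}^{n}$, of the scaled $Z_{b}^{n}$, and of the residual interference, whereas the paper's decoupling step pays only the cost of "decoding" $X_{2}^{n}$ against noise at the same power level. So if the lemma's exact constant $n$ is required, you need the paper's chain-rule device (or an equivalent); if an $O(n)$ additive gap suffices — which is all that is ever used downstream, since this lemma only feeds $O(1)$-per-channel-use terms in the outer bound — then your proof, with $n\log 3$ in place of $n$, is complete as written.
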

\begin{lemma}
\label{lemma:diff_entropies}
Assume that the following conditions hold
\begin{equation}
\label{eq:TIN_BC_condition_lemma}
\beta_{1} - 2\beta_{2}  \geq \alpha_{1} - \alpha_{2} 
\ \ \text{and} \ \
\alpha_{2} \geq \beta_{2}.
\end{equation}
Then for all $X_{1}^{n}$, $X_{2}^{n}$ and $W$ as defined above, we have  
\begin{equation}
\label{eq:diff_entropies}
h\big(Y_{a}^{ n} | W  \big) 
\leq  h \big(Y_{b}^{n} | W  \big) +  n\big(  \alpha_{2} - \beta_{2} \big) \log (P) + n \log(6).
\end{equation}
\end{lemma}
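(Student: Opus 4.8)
The plan is to compare the two conditional entropies after peeling off the desired signal $X_1^n$ from each receiver. Using the chain rule together with the Markov property $W\to X_1^n\to(Y_a^n,Y_b^n)$ and the fact that $X_2^n$ (and the noises) are independent of $(X_1^n,W)$, which holds in the IBC application, I would write the exact identity
\[
h(Y_a^n\mid W)-h(Y_b^n\mid W)=T_1+T_2,
\]
where $T_1\triangleq I(X_1^n;Y_a^n\mid W)-I(X_1^n;Y_b^n\mid W)$ is an information term comparing how well the two receivers see the desired signal, and $T_2\triangleq h(a_2X_2^n+Z_a^n)-h(b_2X_2^n+Z_b^n)$ is the difference of the interference-plus-noise entropies obtained from $h(Y_a^n\mid X_1^n,W)$ and $h(Y_b^n\mid X_1^n,W)$. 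I would then prove $T_2\le n(\alpha_2-\beta_2)\log(P)$ and $T_1\le n\log(6)$, which together give the claim.

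The term $T_2$ is the ``loss'' term and is handled cleanly using only the hypothesis $\alpha_2\ge\beta_2$. Scaling the second entropy by $a_2/b_2$ matches the interference coefficient to that of the first, and by the scaling law of differential entropy contributes $n\log|a_2/b_2|^2=n(\alpha_2-\beta_2)\log(P)$. Since $\alpha_2\ge\beta_2$, the rescaled noise $\tfrac{a_2}{b_2}Z_b^n$ is at least as loud as $Z_a^n$, so I can write it (in distribution) as $Z_a^n$ plus an independent Gaussian; because adding independent noise cannot decrease differential entropy, $h(a_2X_2^n+Z_a^n)\le h\big(a_2X_2^n+\tfrac{a_2}{b_2}Z_b^n\big)$. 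Combining the two steps yields $T_2\le n(\alpha_2-\beta_2)\log(P)$ with no additional constant.

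The term $T_1$ is the technical heart, and it is where the BC-type condition $\beta_1-2\beta_2\ge\alpha_1-\alpha_2$ enters: it asserts that, given $W$, receiver $b$ is essentially no less informative about $X_1$ than receiver $a$, up to a $P$-independent constant. The obstacle is that this sub-regime has $\beta_1-\beta_2<\alpha_1$, i.e.\ $a$'s desired-signal level exceeds $b$'s signal-to-interference level; consequently one cannot hand $a$ its entire interference as a genie and then invoke a clean ``less noisy'' comparison with $Y_b^n$ (the shortcut available in Lemma \ref{lemma:more_capable}), nor is $Y_a^n$ a stochastically degraded version of any rescaling of $Y_b^n$, since the two receivers observe non-proportional mixtures of $(X_1^n,X_2^n)$. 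Dropping $W$ is also too lossy: $h(Y_a^n\mid Y_b^n)$ already scales like $\alpha_2\log(P)$, so the conditioning on $W$ must be retained throughout. My plan is therefore a surgical genie-aided argument in the spirit of Lemma \ref{lemma:more_capable}: reveal to receiver $a$ only enough of the interference-plus-noise seen by $b$ so that, after rescaling, the effective desired-signal level brought into play is at most $b$'s signal-to-interference level $\beta_1-\beta_2$ --- which is exactly what $\beta_1-2\beta_2\ge\alpha_1-\alpha_2$ guarantees, since this rearranges to $\beta_1-\beta_2\ge\alpha_1-(\alpha_2-\beta_2)$ --- followed by a worst-case (Gaussian) noise comparison against $Y_b^n$. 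I expect the $P$-independent constant $n\log(6)$ to emerge by collecting a handful of crude maximum-entropy and triangle-inequality estimates (for instance $|a_2-\tfrac{a_1b_2}{b_1}|^2\le 4|a_2|^2$ together with variance bounds of the form $\mathrm{Var}\le 3\max$). Making this partial-genie argument precise, so that the revealed information does not re-introduce an $(\alpha_2-\beta_2)\log(P)$ cost into $T_1$ and thereby double-count the loss already paid in $T_2$, is the main difficulty I anticipate.
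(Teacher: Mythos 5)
Your identity $h(Y_a^n\mid W)-h(Y_b^n\mid W)=T_1+T_2$ and your bound $T_2\le n(\alpha_2-\beta_2)\log (P)$ are both correct, but the other half of the plan, $T_1\le n\log(6)$, is false under the hypotheses of the lemma, so no choice of genie can complete the argument. Concretely, take $\alpha_1=4$, $\alpha_2=3$, $\beta_1=3$, $\beta_2=1$: condition \eqref{eq:TIN_BC_condition_lemma} holds ($\beta_1-2\beta_2=1=\alpha_1-\alpha_2$ and $\alpha_2\ge\beta_2$), as do the standing assumptions $\alpha_1\ge\alpha_2\ge 0$ and $\beta_1\ge\beta_2\ge 0$; note the lemma does \emph{not} assume $\beta_1\ge\alpha_1$. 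Now let $W$ be constant, $X_2^n\equiv 0$, and $X_1^n$ i.i.d.\ $\mathcal{N}_{\mathbb{C}}(0,1)$. Then $T_2=h(Z_a^n)-h(Z_b^n)=0$, while
\begin{equation*}
T_1=I\big(X_1^n;a_1X_1^n+Z_a^n\big)-I\big(X_1^n;b_1X_1^n+Z_b^n\big)=n\log\frac{1+P^{4}}{1+P^{3}}\approx n\log (P),
\end{equation*}
which is unbounded in $P$ (the lemma itself is safe here, since $\alpha_2-\beta_2=2$). The conceptual error is the presumption that the $(\alpha_2-\beta_2)\log(P)$ slack permitted by the lemma is always an excess of interference-plus-noise entropy at receiver $a$. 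It need not be: it can instead be a gain in desired-signal resolution at receiver $a$, either because receiver $a$ has the larger SNR (as above), or, even when $\beta_1\ge\alpha_1$, because receiver $a$ sees desired-signal levels that lie under interference at receiver $b$ --- the ``lower levels'' phenomenon of Section \ref{subsubsec:TIN_insights}, which can be realized with discrete inputs so that $T_2\approx 0$ while $T_1\approx n(\alpha_2-\beta_2)\log(P)$. Your decomposition routes that gain into $T_1$, which therefore cannot be bounded by a constant; the difficulty you flag at the end is not a technicality to be finessed but an impossibility.

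The repair --- and this is what the paper does --- is to apply your ``rescale and add independent noise'' step to the \emph{whole} observation $Y_a$ rather than to the interference term alone. Define $Y_a'(t)=g\big[a_1X_1(t)+a_2X_2(t)\big]+Z_a'(t)$ with $g=\sqrt{P^{-(\alpha_2-\beta_2)}}$, so that $Y_a'^n$ is stochastically degraded with respect to $Y_a^n$, and write
\begin{equation*}
h(Y_a^n)-h(Y_b^n)=\Big[h\big(Y_a^n\mid Y_a'^n\big)-h\big(Z_a^n\big)\Big]+\Big[h\big(Y_a'^n\big)-h\big(Y_b^n\big)\Big].
\end{equation*}
The first bracket is at most $n(\alpha_2-\beta_2)\log(P)+n\log(3)$ by single-letterization and a Gaussian conditional-variance bound; crucially, this bound is distribution-free, charging the bottom $(\alpha_2-\beta_2)$ levels of $Y_a$ whether they happen to carry interference, desired signal, or nothing, which is exactly what your split cannot do. The second bracket is at most $n$ because $Y_a'$ has signal level $\alpha_1-(\alpha_2-\beta_2)\le\beta_1-\beta_2$ (this is where \eqref{eq:TIN_BC_condition_lemma} enters) and the \emph{same} interference level $\beta_2$ as $Y_b$, so Lemma \ref{lemma:more_capable} bounds the $I(X_1^n;\cdot)$ difference while the terms $I(X_2^n;\cdot\mid X_1^n)$ cancel exactly; the conditioning on $W$ is restored at the end, as you intended.
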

We now proceed with some high level insights
which echo some of the points mentioned earlier in Section \ref{subsubsec:TIN_insights}.
The formal proofs of Lemma \ref{lemma:more_capable} and Lemma \ref{lemma:diff_entropies} are relegated to Appendix \ref{appendix:proof_lemmas_more_capable_diff_entropies}.

Let us first assume that both
receivers $a$ and $b$ are interested in retrieving $W$, communicated through the signal $X_{1}^{n}$, while the signal $X_{2}^{n}$ is seen as interference by both receivers. 
Lemma \ref{lemma:more_capable} gives  a condition under which receiver $b$ is \emph{less noisy} than receiver $a$, up to a constant rate gap, and hence in the GDoF sense.
This condition corresponds to the SIR
at  receiver $b$ being no less than the SNR (or signal power) at receiver $a$.
In this case, all information about $W$ (and, in fact, $X_{1}^{n}$) contained in the observation of receiver $a$ can be 
(approximately) retrieved from the upper, interference-free, signal levels observed by receiver $b$, 
hence making the latter the less noisy receiver. 
Furthermore, as a special case, 
it follows that under the less noisy condition in \eqref{eq:more_capable_condition}, we have
\begin{equation}
I\big(X_{1}^{n} ; Y_{a}^{ n}   \big)  \leq  I \big(X_{1}^{n}  ;Y_{b}^{n}   \big) +  n,
\end{equation}
for which we say that receiver $b$ is \emph{more capable}\footnote{For a detailed exposition of the \emph{less noisy} and \emph{more capable} notions, originally introduced in the context of the classical BC under no interference, readers are referred to \cite[Ch. 5]{ElGamal2011}.} than receiver $a$, up to a constant rate gap and hence  in the GDoF sense, under interference from transmitter $2$. 

Moving on to Lemma \ref{lemma:diff_entropies}, and omitting the conditioning on $W$ for ease of exposition, 
$h\big(Y_{a}^{ n} \big) $ may be roughly decomposed into contributions from the upper 
$\alpha_{1} - (\alpha_{2} - \beta_{2})$ signal levels
and contributions from the lower $(\alpha_{2} - \beta_{2})$ signal levels, observed by receiver  $a$ (see Fig. \ref{fig:signal_levels_TIN}(c)). 
On the GDoF scale, the latter contribute at most $( \alpha_{2} - \beta_{2} )$ to the difference of entropies given by $h\big(Y_{a}^{ n} \big) - h\big(Y_{b}^{ n} \big)$.
On the other hand, by considering only the upper $\alpha_{1} - (\alpha_{2} - \beta_{2})$ signal levels at receiver $a$, this receiver  
becomes more noisy (and hence less capable) than receiver $b$, and both receivers further see similar levels of interference, i.e. $\beta_{2}$.
Therefore, in the GDoF sense, the upper $\alpha_{1} - (\alpha_{2} - \beta_{2})$ signal levels at receiver $a$ 
do not contribute to creating a positive difference of entropies between receivers $a$ and $b$.
Therefore, the total difference of entropies, while considering all signal levels, is bounded above by $(\alpha_{2} - \beta_{2})$ in the GDoF sense.
\subsection{Proof of Outer Bound: 2-Cell, 3-User Example}
Equipped with Lemma \ref{lemma:more_capable} and Lemma \ref{lemma:diff_entropies}, we now proceed to prove
Theorem \ref{theorem:outer_bound}. 
Due to the multi-cell nature of the setting, the general proof tends to be notationally cumbersome. Therefore, it is instructive to start by considering a simpler special case.
For this purpose, we consider the 2-cell 3-user network of Section \ref{subsubsec:2_cell_3_user}.
Nevertheless, at this stage we revert back to the general notation of Section \ref{sec:system model} to emphasize the links with the 
general proof, presented further on. 

Specialized to this network, the TIN conditions in Definition \ref{def:TIN_regime} become:
\begin{align}
\label{eq:TIN_conditions_example_1}
\alpha_{11}^{[2]} & \geq \alpha_{11}^{[1]} + \alpha_{12}^{[2]} \ \text{or} \ 
\alpha_{11}^{[2]} \geq \alpha_{11}^{[1]} - \alpha_{12}^{[1]} + 2\alpha_{12}^{[2]}  \\
\label{eq:TIN_conditions_example_2}
\alpha_{11}^{[1]} & \geq \alpha_{12}^{[1]} + \alpha_{21}^{[1]}  \\
\label{eq:TIN_conditions_example_3}
\alpha_{22}^{[1]} & \geq \alpha_{12}^{[1]} + \alpha_{21}^{[1]} \\
\label{eq:TIN_conditions_example_4}
\alpha_{22}^{[1]} & \geq \alpha_{12}^{[2]} + \alpha_{21}^{[1]}.
\end{align}
Moreover, the outer bound in Theorem \ref{theorem:outer_bound} becomes:
\begin{align}
\label{eq:outer_bound_example_1}
R_{1}^{[1]} & \leq \alpha_{11}^{[1]}\log ( P ) + O(1) \\
\label{eq:outer_bound_example_2}
R_{1}^{[1]} + R_{1}^{[2]} & \leq \alpha_{11}^{[2]}\log ( P ) + O(1)\\
\label{eq:outer_bound_example_3}
R_{2}^{[1]}  & \leq \alpha_{22}^{[1]}\log ( P ) + O(1) \\
\label{eq:outer_bound_example_4}
R_{1}^{[1]} + R_{2}^{[1]} & \leq \big( \alpha_{11}^{[1]}-\alpha_{12}^{[1]} \big)
\log( P ) + \big( \alpha_{22}^{[1]}-\alpha_{21}^{[1]} \big)
\log( P ) + O(1) \\
\label{eq:outer_bound_example_5}
R_{1}^{[1]} + R_{1}^{[2]} + R_{2}^{[1]} & \leq \big( \alpha_{11}^{[2]}-\alpha_{12}^{[2]} \big)
\log( P ) + \big( \alpha_{22}^{[1]}-\alpha_{21}^{[1]} \big)
\log( P ) + O(1).
\end{align}
It is evident that \eqref{eq:outer_bound_example_1}--\eqref{eq:outer_bound_example_3} are all single-cell bounds, 
and hence follow from standard results in information theory, namely the 
capacity of the Gaussian point-to-point channel and the sum-capacity of the degraded Gaussian degraded BC \cite{Cover2012}.
\eqref{eq:outer_bound_example_4} is essentially a $2$-user IC bound, 
which holds due to the TIN conditions in \eqref{eq:TIN_conditions_example_2} and  \eqref{eq:TIN_conditions_example_3}
\cite{Etkin2008,Geng2015}.
Therefore, we focus on proving the $3$-user sum-rate bound in \eqref{eq:outer_bound_example_5}. 
To this end, we consider the two following cases that  constitute \eqref{eq:TIN_conditions_example_1}.
\begin{enumerate}[label=C.{\arabic*}]
\item $\alpha_{11}^{[2]} - \alpha_{12}^{[2]} \geq \alpha_{11}^{[1]}$: In this case, according to Lemma \ref{lemma:more_capable},
 UE-$(2,1)$ is less noisy than  UE-$(1,1)$ under interference from cell $2$.
\label{case1:converse_example}
This is used to bound the sum rate $R_{1}^{[1]} + R_{1}^{[2]}$ using a single mutual information term as we see next. 
Starting from Fano's inequality, we have
\begin{align}
n \big( R_{1}^{[1]} +  R_{1}^{[2]}  -  2\epsilon \big)   & \leq
 I\big(W_{1}^{[1]} ; Y_{1}^{[1]n}  \big)  +   I\big(W_{1}^{[2]} ; Y_{1}^{[2]n}  \big) \\
\label{eq:fano_case1_11}
& \leq   I\big(W_{1}^{[1]} ; Y_{1}^{[2]n} \big) + n 
+ I\big(W_{1}^{[2]} ; Y_{1}^{[2]n}  \big) \\
& \leq   I\big(W_{1}^{[1]} ; Y_{1}^{[2]n} | W_{1}^{[2]} \big) + n 
+ I\big(W_{1}^{[2]} ; Y_{1}^{[2]n}  \big) \\
& =  I\big(W_{1}^{[1]} ,W_{1}^{[2]}  ; Y_{1}^{[2]n} \big) + n \\
\label{eq:fano_case1_12}
& \leq  I\big(X_{1}^{n} ; Y_{1}^{[2]n} \big)  + n.
\end{align}
In the above, the critical step is \eqref{eq:fano_case1_11}, which follows directly from the less noisy result in Lemma \ref{lemma:more_capable}.
Moving on to cell 2,  the rate of UE-$(1,2)$ is bounded as
\begin{align}
\label{eq:fano_case1_2}
n \big(R_{2}^{[1]} - \epsilon \big)  & \leq I\big(X_{2}^{n} ; Y_{2}^{[1]n} \big).
\end{align}
From \eqref{eq:fano_case1_12} and \eqref{eq:fano_case1_2}, it is evident that
the setting is (approximately) reduced to a 2-user IC with BS-$1$ and BS-$2$ as transmitters and UE-$(2,1)$ and UE-$(1,2)$ as the corresponding receivers, respectively. 
Therefore,  the sum-rate bound in \eqref{eq:outer_bound_example_5} follows directly from the $2$-user IC genie-aided outer bound in
\cite[Th. 1]{Etkin2008}.
Next, we present key elements of the  genie-aided approach in \cite{Etkin2008},
which are central to the remainder of our proof. 

We start by defining the side information (or genie) signals:
\begin{align}
 S_{1}(t)  & = h_{21}^{[1]}X_{1}(t) + Z_{2}^{[1]}(t) \\
 S_{2}(t)  & = h_{12}^{[2]}X_{2}(t) + Z_{1}^{[2]}(t).
\end{align}
Proceeding from \eqref{eq:fano_case1_12} and \eqref{eq:fano_case1_2}, The above signals are employed as follows:
\begin{align}
n \big( R_{1}^{[1]} +  R_{1}^{[2]}  -  2\epsilon \big) & \leq  I\big(X_{1}^{n} ; Y_{1}^{[2]n}, S_{1}^{n} \big) + n \\
& = h\big(Y_{1}^{[2]n} | S_{1}^{n} \big)  + h\big( S_{1}^{n} \big)  -  h\big(Y_{1}^{[2]n} | X_{1}^{n}  \big) -  h\big(S_{1}^{n}  | Y_{1}^{[2]n} , X_{1}^{n}  \big) + n \\
\label{eq:Fano_genie_UEs_1}
& = h\big(Y_{1}^{[2]n} | S_{1}^{n} \big)  + h\big( S_{1}^{n} \big)  -  h\big(S_{2}^{n} \big) -  h\big(Z_{2}^{[1]n} \big) + n \\
n \big( R_{2}^{[1]} -  \epsilon \big) & \leq  I\big(X_{2}^{n} ; Y_{2}^{[1]n}, S_{2}^{n} \big) \\
& = h \big( Y_{2}^{[1] n} | S_{2}^{n}  \big)  + h \big( S_{2}^{n}  \big) -  h  \big( Y_{2}^{[1] n} |  X_{2}^n  \big)
- h  \big( S_{2}^{n}   |  Y_{2}^{[1] n}  , X_{2}^n  \big) \\
\label{eq:Fano_genie_UE21}
&  =  h \big( Y_{2}^{[1] n} | S_{2}^{n}  \big)  + h \big( S_{2}^{n}  \big) -  h  \big( S_{1}^{n}  \big)
- h  \big( Z_{1}^{[2] n} \  \big).
\end{align} 
By adding the bounds in \eqref{eq:Fano_genie_UEs_1} and \eqref{eq:Fano_genie_UE21}, we obtain
\begin{equation}
\label{eq:genie_bound_2_user}
  n \big( R_{1}^{[1]} +   R_{1}^{[2]}  +   R_{2}^{[1]} - 3 \epsilon \big)
  \leq  h \big(Y_{1}^{[2] n} | S_{1}^{n}  \big) - h(Z_{1}^{[2] n})
  + h \big(Y_{2}^{[1] n} | S_{2}^{n}  \big) - h(Z_{2}^{[1] n}) + n.
\end{equation}
Next, we bound the first difference of entropies on the right-hand-side of \eqref{eq:genie_bound_2_user} as follows:
\begin{align}
\label{eq:2_user_IC_bounds_11}
h \big(Y_{1}^{[2] n} | S_{1}^{n}  \big) - h(Z_{1}^{[2] n})  & \leq n \log \left( 1 + P^{\alpha_{12}^{[2]}} +
\frac{P^{\alpha_{11}^{[2]}}}{1+ P^{\alpha_{21}^{[1]}}} \right)  \\
\label{eq:2_user_IC_bounds_12}
& \leq n\big(  \alpha_{11}^{[2]} - \alpha_{21}^{[1]} \big) \log (P)  + n\log(3).
\end{align}
In the above, \eqref{eq:2_user_IC_bounds_11} follows by first applying a single-letterization step, which exploits the i.i.d.-ness of the noise,
and then using the fact that Gaussian inputs maximize conditional differential entropies under covariance constraints 
(see similar steps in \eqref{eq:diff_ha_hb_1st_0}--\eqref{eq:diff_ha_hb_1st_2} in Appendix \ref{appendix:subsection_proof_lemma_diff}).
On the other hand, \eqref{eq:2_user_IC_bounds_12} holds due to the TIN conditions \eqref{eq:TIN_conditions_example_1} and \eqref{eq:TIN_conditions_example_2},
which together imply $\alpha_{11}^{[2]} \geq \alpha_{12}^{[2]} + \alpha_{21}^{[1]}$.
In a similar manner, we also obtain the following bound:
\begin{align}
\label{eq:2_user_IC_bounds_22}
h \big(Y_{2}^{[1] n} | S_{2}^{n}  \big) - h(Z_{2}^{[1] n})  \leq  n\big(  \alpha_{22}^{[1]} - \alpha_{12}^{[2]} \big) \log (P) + n\log(3).
\end{align}
It is evident that \eqref{eq:genie_bound_2_user}, \eqref{eq:2_user_IC_bounds_12} 
and \eqref{eq:2_user_IC_bounds_22} yield the desired sum-rate bound in \eqref{eq:outer_bound_example_5}. 
\item $\alpha_{11}^{[2]} - 2\alpha_{12}^{[2]} \geq \alpha_{11}^{[1]} - \alpha_{12}^{[1]}$ and 
$\alpha_{11}^{[2]} - \alpha_{12}^{[2]} < \alpha_{11}^{[1]}$:
\label{case2:converse_example}
For this case, we must have 
\begin{equation}
\label{eq:int_order_case2}
\alpha_{12}^{[1]} > \alpha_{12}^{[2]}.
\end{equation}
This holds since by assuming the contrary, i.e. $\alpha_{12}^{[1]} \leq \alpha_{12}^{[2]}$,
we obtain
\begin{equation}
\alpha_{11}^{[2]} - 2\alpha_{12}^{[2]} \geq \alpha_{11}^{[1]} - \alpha_{12}^{[1]} \Longleftrightarrow
\alpha_{11}^{[2]} - \alpha_{12}^{[2]} \geq \alpha_{11}^{[1]} - \big( \alpha_{12}^{[1]} - \alpha_{12}^{[2]} \big)  
\implies 
\alpha_{11}^{[2]} - \alpha_{12}^{[2]} \geq \alpha_{11}^{[1]},
\end{equation} 
which cannot be true as we have assumed that $\alpha_{11}^{[2]} - \alpha_{12}^{[2]} < \alpha_{11}^{[1]}$.
Therefore, \eqref{eq:int_order_case2} must hold.

Next, we  define the following side information signals
\begin{align}
 S_{1}(t)  & = h_{21}^{[1]}X_{1}(t) + Z_{2}^{[1]}(t) \\
 S_{2}(t)  & = h_{12}^{[1]}X_{2}(t) + Z_{1}^{[1]}(t).
\end{align}
The signal $S_{1}^{n}$, which contains interference caused by cell 1 to cell 2, is given to the \emph{stronger} BC user in cell 1, i.e. UE-$(2,1)$. 
On the other hand, $S_{2}^{n}$, which contains interference caused by cell 2 to cell 1, is given to the (only) user in cell 2.
It is worthwhile noting that the signal $S_{2}^{n}$,  used to  enhance the user in cell 2, contains the interference caused by cell 2 
to the \emph{weaker} BC user in cell 1, as this user experiences greater interference in this case. This is in contrast to the case in \ref{case1:converse_example}, where $S_{2}^{n}$
contained interference cause to the \emph{stronger} BC user.

Next, we bound each of the individual rates as:
\begin{align}
  n \big( R_{1}^{[1]}  - \epsilon \big) & \leq  I \big(W_{1}^{[1]} ; Y_{1}^{[1] n}  | W_{1}^{[2]}   \big)
  \\
  & = h\big(Y_{1}^{[1] n}  | W_{1}^{[2]}   \big) - h \big(Y_{1}^{[1] n}  |  W_{1}^{[1]} , W_{1}^{[2]}   \big) \\
  \label{eq:Fano_UE11}
  & = h\big(Y_{1}^{[1] n}  | W_{1}^{[2]}   \big) - h \big(S_{2}^{n} \big)    \\
  n \big( R_{1}^{[2]}  - \epsilon \big) & \leq  I \big(W_{1}^{[2]} ; Y_{1}^{[2] n}  , S_{1}^{n}  \big)
  \\
  & = h \big(Y_{1}^{[2] n} | S_{1}^{n}  \big) + h \big( S_{1}^{n}  \big)   - h \big(Y_{1}^{[2] n}  |  W_{1}^{[2]} \big)
  - h \big(S_{1}^{n}  |  Y_{1}^{[2] n} , W_{1}^{[2]} \big)  \\
  \label{eq:Fano_UE21}
  & \leq h \big(Y_{1}^{[2] n} | S_{1}^{n}  \big) + h \big( S_{1}^{n}  \big)   - h \big(Y_{1}^{[2] n}  |  W_{1}^{[2]} \big)
  - h \big( Z_{2}^{[1] n} \big) \\
 n \big( R_{2}^{[1]}  - \epsilon \big) & \leq  I \big(W_{2}^{[1]} ; Y_{2}^{[1] n}  , S_{2}^{n}  \big)  \\
&  =  h \big( Y_{2}^{[1] n} | S_{2}^{n}  \big)  + h \big( S_{2}^{n}  \big) -  h  \big( Y_{2}^{[1] n} |  W_{2}^{[1]}   \big)
- h  \big( S_{2}^{n}   |  Y_{2}^{[1] n}  , W_{2}^{[1]}   \big) \\
 \label{eq:Fano_UE12}
&  =  h \big( Y_{2}^{[1] n} | S_{2}^{n}  \big)  + h \big( S_{2}^{n}  \big) -  h  \big( S_{1}^{n}  \big)
- h  \big( Z_{1}^{[1] n} \  \big).
\end{align}
By adding the bounds in \eqref{eq:Fano_UE11}, \eqref{eq:Fano_UE21} and \eqref{eq:Fano_UE12}, we obtain 
\begin{multline}
\label{eq:Fano_sum}
  n \big( R_{1}^{[1]} +   R_{1}^{[2]}  +   R_{2}^{[1]} - 3 \epsilon \big)
  \leq  h \big(Y_{1}^{[2] n} | S_{1}^{n}  \big) - h(Z_{1}^{[1] n})
  + h \big(Y_{2}^{[1] n} | S_{2}^{n}  \big) - h(Z_{2}^{[1] n}) \\
  + h\big(Y_{1}^{[1] n}  | W_{1}^{[2]}   \big)  - h \big(Y_{1}^{[2] n}  |  W_{1}^{[2]}
  \big).
\end{multline}
The first two differences of entropies in \eqref{eq:Fano_sum} are bounded as in
\eqref{eq:2_user_IC_bounds_12}, yielding
\begin{align}
h \big(Y_{1}^{[2] n} | S_{1}^{n}  \big) - h(Z_{1}^{[1] n})  
& \leq n\big(  \alpha_{11}^{[2]} - \alpha_{21}^{[1]} \big) \log (P)  + n\log(3) \\
\label{eq:diff_entropies_genie_2_cell_3_user}
h \big(Y_{2}^{[1] n} | S_{2}^{n}  \big) - h(Z_{2}^{[1] n})  
& \leq  n\big(  \alpha_{22}^{[1]} - \alpha_{12}^{[1]} \big) \log (P)
+ n\log(3).
\end{align}
The third difference of entropies is bounded as 
\begin{equation}
\label{eq:diff_entropies_2_cell_3_user}
h\big(Y_{1}^{[1] n}  | W_{1}^{[2]}   \big)  - h \big(Y_{1}^{[2] n}  |  W_{1}^{[2]}\big)
\leq n\big(  \alpha_{12}^{[1]} - \alpha_{12}^{[2]} \big) \log (P)
+ n \log(6)
\end{equation}
which follows directly from Lemma \ref{lemma:diff_entropies}.
Combining all bounds, we obtain \eqref{eq:outer_bound_example_5}.
\end{enumerate}
Some insights gained from the $2$-cell, $3$-user network are summarized in the following remark.
\begin{remark}
\label{remark:2_cell_converse}
For  \ref{case1:converse_example}, 
the stronger BC user UE-$(2,1)$ is less noisy than the weaker BC user UE-$(1,1)$ under inter-cell interference.
This allows us, with the help of Lemma \ref{lemma:more_capable}, to eliminate UE-$(1,1)$ from the picture and bound  the sum-rate of cell $1$
by the rate of the less noisy receiver UE-$(2,1)$, plus a constant.
The $2$-cell network hence reduces to a $2$-user IC, for which the genie-aided bound in \cite{Etkin2008} applies.  
For \ref{case2:converse_example}, UE-$(2,1)$ is not less noisy than UE-$(1,1)$.
Nevertheless, we observe that in this case, the interference level seen by UE-$(1,1)$ is higher than that 
seen by UE-$(2,1)$ (see \eqref{eq:int_order_case2}).
Hence, as previously elaborated in Section \ref{subsubsec:TIN_insights}, this gives UE-$(1,1)$ the opportunity to achieve a GDoF gain of at most 
$(\alpha_{12}^{[1]} - \alpha_{12}^{[2]})$, shown through Lemma \ref{lemma:diff_entropies}.
This gain, however, is offset by designing the genie signal for cell $2$, i.e. $S_{2}^{n}$, such that it contains the  interference seen by UE-$(1,1)$, 
that is the dominant interference caused to cell $1$ (see \eqref{eq:diff_entropies_genie_2_cell_3_user} and \eqref{eq:diff_entropies_2_cell_3_user}). 
\hfill $\lozenge$
\end{remark}
\subsection{Proof of Outer Bound: General Case}
\begin{figure}
\centering
\includegraphics[width = 0.7\textwidth,trim={0cm 0cm 0cm 0cm},clip]{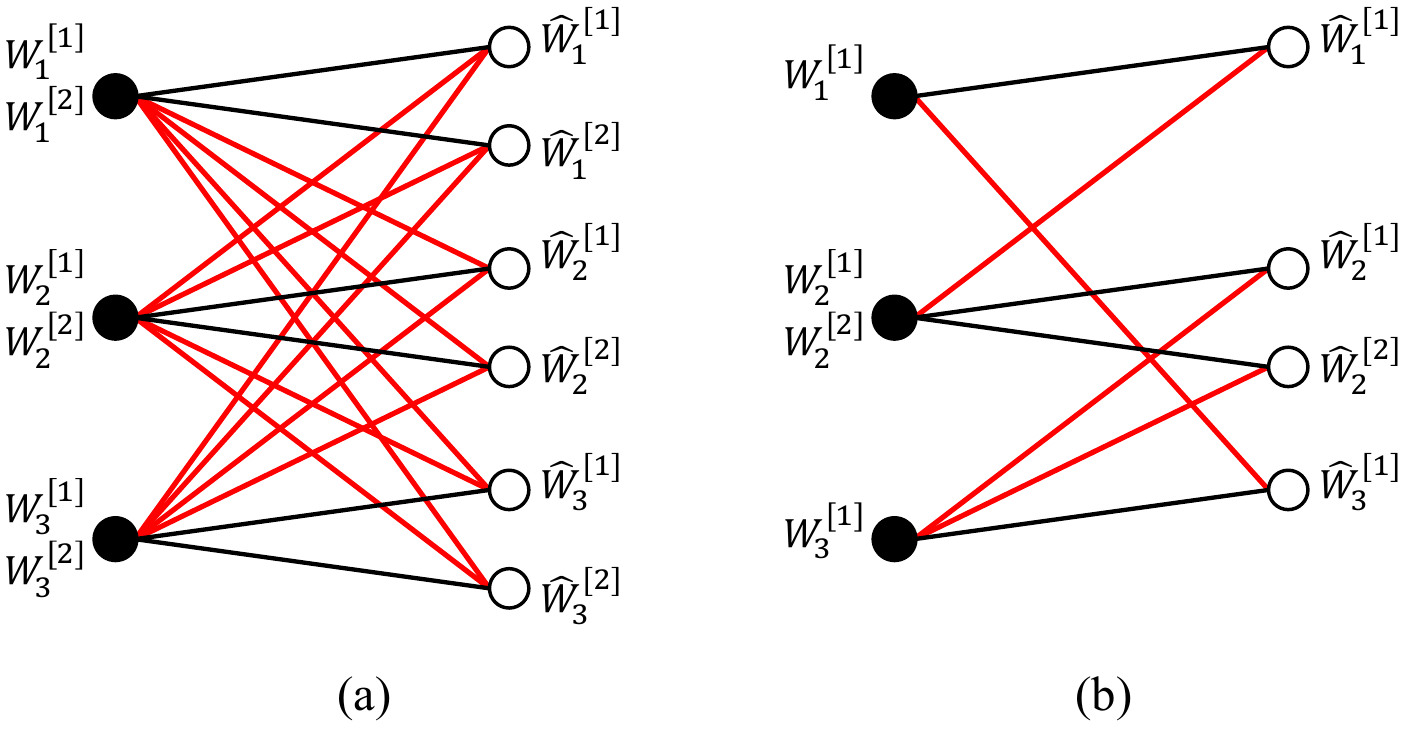}
\caption{\small
(a) 3-cell interfering broadcast channel with 2 users in each cell, and (b) 
the corresponding partially connected cyclic network associated with the sequences $(i_{1},i_{2},i_{3}) = (1,3,2)$ and $(l_{i_{1}},l_{i_{2}},l_{i_{3}}) = (1,1,2) $.}
\label{fig:IBC_cyclic}
\end{figure}
As argued in the previous part, the single-cell bounds in \eqref{eq:capacity_outer_1} follow from standard results.
Therefore, we focus on the multi-cell cyclic bounds in \eqref{eq:capacity_outer_2}.
It is evident that each such bound is identified by two sequences: 
$(i_{1},i_{2},\ldots,i_{m})$, which specifies participating cells and their cyclic order, and 
$(l_{i_{1}},l_{i_{2}},\ldots,l_{i_{m}})$, which specifies the number (and identities) of participating users in each participating cell.
In what follows, we fix such sequences, hence focusing on an arbitrary cyclic bound from  \eqref{eq:capacity_outer_2}.
It is also useful to recall that a modulo-$m$ operation is implicitly used on cell indices such that $i_{0} = i_{m}$ and $i_{m+1} = i_{1}$.
Next, we go through the following steps:
\begin{itemize}
\item Eliminate non-participating cells, non-participating users and their corresponding messages. 
\item For the remaining network, eliminate all interfering links except for links from BS-$i_{j-1}$ to 
UE-$(s_{i_{j}},i_{j})$, for all $j \in \langle m \rangle$ and $s_{i_{j}} \in \langle l_{i_{j}} \rangle$.
\end{itemize}
Applying the above steps yields a partially connected cyclic network, see Fig. \ref{fig:IBC_cyclic} for example.  
This new network is described by the following input-output relationship:
\begin{equation}
\label{eq:cyclic_IBC_system_model}
Y_{i_{j}}^{[s_{i_{j}}]} (t) = h_{i_{j} i_{j}}^{[s_{i_{j}}]} X_{i_{j}} (t) + h_{i_{j} i_{j - 1}}^{[s_{i_{j}}]} X_{i_{j-1}} (t)  + Z_{i_{j}}^{[s_{i_{j}}]}  (t).
\end{equation}
Since the above steps cannot hurt the rates of participating users, we restricted or attention to 
the channel in \eqref{eq:cyclic_IBC_system_model} for the purpose of deriving the corresponding cyclic outer bound.

Let us now focus on a single participating cell $i_{j}$. As there is no ambiguity, we refer to users by their first index only, e.g. UE-$(s_{i_{j}},i_{j})$
is referred to as user $s_{i_j}$.
We partition the set of participating users, i.e. $\langle l_{i_{j}} \rangle  $, into two subsets as follows: 
\begin{itemize}
\item  $\mathcal{L}_{i_{j}}'$: this consists of the strongest BC user $l_{i_{j}}$ and 
users which are more noisy than $l_{i_{j}}$, i.e.
\begin{equation}
\mathcal{L}_{i_{j}}'  \triangleq \left\{ l_{i_{j}}  \right\} \cup \left\{ s_{i_{j}} \in \langle l_{i_{j}} - 1 \rangle :  
 \alpha_{i_{j} i_{j}}^{[l_{i_j} ]} - \alpha_{i_{j} i_{j-1}}^{[l_{i_j} ]}  \geq   \alpha_{i_{j} i_{j}}^{[s_{i_j}]}   \right\}.
\end{equation} 
We use $l_{i_{j}}'$ to denote the cardinality of this subset,  i.e. $l_{i_{j}}' = | \mathcal{L}_{i_{j}}' |$.
Moreover, we label the user indices as $\big\{p_{i_{j}}(1),\ldots, p_{i_{j}}(l_{i_{j}}') \big\} \triangleq \mathcal{L}_{i_{j}}'$,
such that $p_{i_{j}}(1) < p_{i_{j}}(2) < \cdots < p_{i_{j}}(l_{i_{j}}') = l_{i_{j}}$.
\item $\mathcal{L}_{i_{j}}'' $: this consists of users which are not more noisy than $l_{i_{j}}$, i.e.
\begin{equation}
\label{eq:not_more_noisy_subset}
\mathcal{L}_{i_{j}}''  \triangleq \left\{ s_{i_{j}} \in \langle l_{i_{j}} - 1 \rangle :  
 \alpha_{i_{j} i_{j}}^{[l_{i_j} ]} - \alpha_{i_{j} i_{j-1}}^{[l_{i_j} ]} <  \alpha_{i_{j} i_{j}}^{[s_{i_j}]}   \right\}.
\end{equation} 
We use $l_{i_{j}}''$ to denote the cardinality of this subset,  i.e. $l_{i_{j}}'' = |\mathcal{L}_{i_{j}}''|$.
Moreover, we label the user indices as $ \big\{q_{i_{j}}(1),\ldots, q_{i_{j}}(l_{i_{j}}'') \big\} \triangleq \mathcal{L}_{i_{j}}''$,
such that $q_{i_{j}}(1) < q_{i_{j}}(2) < \cdots < q_{i_{j}}(l_{i_{j}}'')$.
\end{itemize}
Next, we highlight an intrinsic order which exists amongst users in the second subset $\mathcal{L}_{i_{j}}''$.
\begin{lemma}
\label{lemma:not_more_noisy_subset}
For all $s \in \{ 1, \ldots, l_{i_{j}}'' \}$, the following holds:
\begin{align}
\label{eq:condition_not_more_noisy_1}
& \alpha_{i_{j} i_{j}}^{[q_{i_j}(s+1)]}  - \alpha_{i_{j} i_{j-1}}^{[q_{i_j}(s+1)]}   < \alpha_{i_{j} i_{j}}^{[q_{i_j}(s)]} \ \ \text{and} \\
\label{eq:condition_not_more_noisy_2}
& \alpha_{i_{j} i_{j}}^{[q_{i_j}(s+1)]}  - 2 \alpha_{i_{j} i_{j-1}}^{[q_{i_j}(s+1)]}   \geq \alpha_{i_{j} i_{j}}^{[q_{i_j}(s)]}  -  \alpha_{i_{j} i_{j-1}}^{[q_{i_j}(s)]}.
\end{align}
In the above, we take $q_{i_j}(l_{i_j}'' + 1)$ to be equal to $l_{i_j}$.
\end{lemma}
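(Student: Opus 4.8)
The claim is that the users in the "not-more-noisy" subset $\mathcal{L}_{i_{j}}''$ inherit an internal structure: once we list them in increasing index order $q_{i_j}(1) < \cdots < q_{i_j}(l_{i_j}'')$ (with the convention $q_{i_j}(l_{i_j}''+1) = l_{i_j}$), consecutive users satisfy exactly the two branch-type inequalities that appear in the 2-cell analysis, namely \eqref{eq:condition_not_more_noisy_1} playing the role of a violated \eqref{eq:TIN_cond_2_cell_3_user_a} and \eqref{eq:condition_not_more_noisy_2} playing the role of a satisfied \eqref{eq:TIN_cond_2_cell_3_user_b}. My plan is to derive both inequalities directly from the definition of $\mathcal{L}_{i_{j}}''$ in \eqref{eq:not_more_noisy_subset} together with the ascending SNR ordering \eqref{eq:strength_order} and the TIN BC-type condition \eqref{eq:TIN_cond_1}, specialized to cell $i_j$ with interfering cell $i_{j-1}$.

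\textbf{Proving \eqref{eq:condition_not_more_noisy_1}.} This is the easier direction. I would argue it directly from membership in $\mathcal{L}_{i_{j}}''$. For $s+1 \leq l_{i_j}''$, the user $q_{i_j}(s+1)$ belongs to $\mathcal{L}_{i_{j}}''$, so by \eqref{eq:not_more_noisy_subset} it satisfies $\alpha_{i_ji_j}^{[l_{i_j}]} - \alpha_{i_j i_{j-1}}^{[l_{i_j}]} < \alpha_{i_ji_j}^{[q_{i_j}(s+1)]}$. The target inequality, however, compares $\alpha_{i_ji_j}^{[q_{i_j}(s+1)]} - \alpha_{i_j i_{j-1}}^{[q_{i_j}(s+1)]}$ against $\alpha_{i_ji_j}^{[q_{i_j}(s)]}$, so I cannot simply quote the definition; I would instead invoke \eqref{eq:TIN_cond_1} with the pair $(l_i', l_i) = (q_{i_j}(s), q_{i_j}(s+1))$, which (in the form where the second branch holds) gives $\alpha_{i_ji_j}^{[q_{i_j}(s+1)]} - \alpha_{i_ji_{j-1}}^{[q_{i_j}(s+1)]} \geq \alpha_{i_ji_j}^{[q_{i_j}(s)]} + (\alpha_{i_ji_{j-1}}^{[q_{i_j}(s+1)]} - \alpha_{i_ji_{j-1}}^{[q_{i_j}(s)]})$, and combine with the first branch to force a contradiction were \eqref{eq:condition_not_more_noisy_1} false. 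The boundary case $s+1 = l_{i_j}''+1$, where $q_{i_j}(s+1) = l_{i_j}$, is handled separately: here \eqref{eq:condition_not_more_noisy_1} reads $\alpha_{i_ji_j}^{[l_{i_j}]} - \alpha_{i_ji_{j-1}}^{[l_{i_j}]} < \alpha_{i_ji_j}^{[q_{i_j}(l_{i_j}'')]}$, which is precisely the defining property of $q_{i_j}(l_{i_j}'') \in \mathcal{L}_{i_{j}}''$.

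\textbf{Proving \eqref{eq:condition_not_more_noisy_2} and the main obstacle.} This is where the TIN BC-type condition \eqref{eq:TIN_cond_1} does the real work, and I expect it to be the crux. The condition \eqref{eq:TIN_cond_1} is a disjunction: for each pair $l_i' < l_i$, either $\alpha_{ii}^{[l_i]} \geq \alpha_{ij}^{[l_i]} + \alpha_{ii}^{[l_i']}$ or $\alpha_{ii}^{[l_i]} \geq 2\alpha_{ij}^{[l_i]} + \alpha_{ii}^{[l_i']} - \alpha_{ij}^{[l_i']}$. Applying it to the pair $(q_{i_j}(s), q_{i_j}(s+1))$ with $j \to i_{j-1}$, the first branch would give $\alpha_{i_ji_j}^{[q_{i_j}(s+1)]} - \alpha_{i_ji_{j-1}}^{[q_{i_j}(s+1)]} \geq \alpha_{i_ji_j}^{[q_{i_j}(s)]}$, contradicting the just-established \eqref{eq:condition_not_more_noisy_1}. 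Hence the \emph{second} branch must hold, which is exactly \eqref{eq:condition_not_more_noisy_2} after rearrangement. So the logical skeleton is: \eqref{eq:condition_not_more_noisy_1} kills the first disjunct of \eqref{eq:TIN_cond_1}, forcing the second disjunct, which is \eqref{eq:condition_not_more_noisy_2}. The main obstacle is purely bookkeeping: I must verify that \eqref{eq:TIN_cond_1} is legitimately applicable to consecutive $\mathcal{L}_{i_{j}}''$ indices (they are genuine user pairs with $q_{i_j}(s) < q_{i_j}(s+1)$ in the same cell, so \eqref{eq:TIN_cond_1} applies verbatim), and handle the boundary $q_{i_j}(l_{i_j}''+1) = l_{i_j}$ carefully, since there the index $l_{i_j}$ is the strongest participating user rather than a member of $\mathcal{L}_{i_{j}}''$ — but it is still a valid stronger user relative to $q_{i_j}(l_{i_j}'')$, so \eqref{eq:TIN_cond_1} still applies with $(l_i', l_i) = (q_{i_j}(l_{i_j}''), l_{i_j})$, and \eqref{eq:condition_not_more_noisy_1} at this boundary (established above from the definition) again forces the second branch.
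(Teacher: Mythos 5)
Your derivation of \eqref{eq:condition_not_more_noisy_2} and your handling of the boundary case $s = l_{i_j}''$ are sound and match the paper: at the boundary, \eqref{eq:condition_not_more_noisy_1} is literally the defining inequality \eqref{eq:not_more_noisy_subset} of membership of $q_{i_j}(l_{i_j}'')$ in $\mathcal{L}_{i_j}''$, and once \eqref{eq:condition_not_more_noisy_1} is established for a pair, the first branch of \eqref{eq:TIN_cond_1} for that pair is violated, so the second branch --- which is exactly \eqref{eq:condition_not_more_noisy_2} --- must hold.

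The gap is in your proof of \eqref{eq:condition_not_more_noisy_1} for the interior case $s+1 \leq l_{i_j}''$. You propose to invoke \eqref{eq:TIN_cond_1} for the pair $\big(q_{i_j}(s), q_{i_j}(s+1)\big)$ ``in the form where the second branch holds'' and combine it with the first branch to force a contradiction. This cannot work: \eqref{eq:TIN_cond_1} is a disjunction, and its \emph{first} branch for this pair is precisely the negation of \eqref{eq:condition_not_more_noisy_1}. If \eqref{eq:condition_not_more_noisy_1} were false, the disjunction would already be satisfied by its first branch, so you are not entitled to assume the second branch; moreover, the two branches are not mutually contradictory (indeed, when $\alpha_{i_j i_{j-1}}^{[q_{i_j}(s)]} \geq \alpha_{i_j i_{j-1}}^{[q_{i_j}(s+1)]}$, which is exactly the situation here by \eqref{eq:int_order_general_case}, the first branch implies the second), so even granting both yields no contradiction. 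The missing idea is that the contradiction must be routed through the \emph{strongest} user $l_{i_j}$ and the dichotomy between $\mathcal{L}_{i_j}'$ and $\mathcal{L}_{i_j}''$, which is what the paper does. Concretely: since $q_{i_j}(s+1) \in \mathcal{L}_{i_j}''$, the first branch of \eqref{eq:TIN_cond_1} for the pair $\big(q_{i_j}(s+1), l_{i_j}\big)$ fails, forcing its second branch,
\begin{equation}
\alpha_{i_j i_j}^{[l_{i_j}]} - 2\alpha_{i_j i_{j-1}}^{[l_{i_j}]} \;\geq\; \alpha_{i_j i_j}^{[q_{i_j}(s+1)]} - \alpha_{i_j i_{j-1}}^{[q_{i_j}(s+1)]}.
\end{equation}
If \eqref{eq:condition_not_more_noisy_1} were false for the pair $\big(q_{i_j}(s), q_{i_j}(s+1)\big)$, chaining gives
\begin{equation}
\alpha_{i_j i_j}^{[q_{i_j}(s)]} \;\leq\; \alpha_{i_j i_j}^{[q_{i_j}(s+1)]} - \alpha_{i_j i_{j-1}}^{[q_{i_j}(s+1)]} \;\leq\; \alpha_{i_j i_j}^{[l_{i_j}]} - 2\alpha_{i_j i_{j-1}}^{[l_{i_j}]} \;\leq\; \alpha_{i_j i_j}^{[l_{i_j}]} - \alpha_{i_j i_{j-1}}^{[l_{i_j}]},
\end{equation}
which is exactly the defining inequality of $\mathcal{L}_{i_j}'$, contradicting $q_{i_j}(s) \in \mathcal{L}_{i_j}''$ since the two subsets partition the participating users. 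Without this detour through $l_{i_j}$, the interior case of \eqref{eq:condition_not_more_noisy_1} does not follow, and since \eqref{eq:condition_not_more_noisy_2} in your argument rests on \eqref{eq:condition_not_more_noisy_1}, that part inherits the gap as well.
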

\begin{proof}
First, we observe that for all $s \in \{ 1, \ldots, l_{i_{j}}'' \}$, we have
\begin{align}
\label{eq:condition_not_more_noisy_1_2}
& \alpha_{i_{j} i_{j}}^{[l_{i_j}]}  - \alpha_{i_{j} i_{j-1}}^{[l_{i_j}]}   < \alpha_{i_{j} i_{j}}^{[q_{i_j}(s)]} \ \ \text{and} \\
\label{eq:condition_not_more_noisy_2_2}
& \alpha_{i_{j} i_{j}}^{[l_{i_j}]}  - 2 \alpha_{i_{j} i_{j-1}}^{[l_{i_j}]}   \geq \alpha_{i_{j} i_{j}}^{[q_{i_j}(s)]}  -  \alpha_{i_{j} i_{j-1}}^{[q_{i_j}(s)]},
\end{align}
where \eqref{eq:condition_not_more_noisy_1_2} holds by definition of the subset $\mathcal{L}_{i_{j}}'' $ in  \eqref{eq:not_more_noisy_subset},
while \eqref{eq:condition_not_more_noisy_2_2} holds due to the TIN condition in \eqref{eq:TIN_cond_1}.
Therefore, it follows that  \eqref{eq:condition_not_more_noisy_1} and  \eqref{eq:condition_not_more_noisy_2}  must hold for $s = l_{i_{j}}''$.
Next, suppose that \eqref{eq:condition_not_more_noisy_1} does not hold for some $s \in \langle l_{i_{j}}''  - 1\rangle$.
Then we must have the following:
\begin{align}
\alpha_{i_{j} i_{j}}^{[q_{i_j}(s)]} & \leq  \alpha_{i_{j} i_{j}}^{[q_{i_j}(s+1)]}  - \alpha_{i_{j} i_{j-1}}^{[q_{i_j}(s+1)]}   \\
& \leq 
\alpha_{i_{j} i_{j}}^{[l_{i_{j}}]}  - 2 \alpha_{i_{j} i_{j-1}}^{[l_{i_{j}}]}  \\
& \leq 
\alpha_{i_{j} i_{j}}^{[l_{i_{j}}]}  -  \alpha_{i_{j} i_{j-1}}^{[l_{i_{j}}]}  \\
 \implies &  q_{i_j}(s) \in \mathcal{L}_{i_j}'.
\end{align}
The above yields a contradiction, since $q_{i_j}(s) \in \mathcal{L}_{i_j}''$.
Therefore, \eqref{eq:condition_not_more_noisy_1}  must hold for all  $s \in \langle l_{i_{j}}''  - 1\rangle$.
Due to the TIN condition in \eqref{eq:TIN_cond_1}, it follows that \eqref{eq:condition_not_more_noisy_2_2} 
must also hold for all  $s \in \langle l_{i_{j}}''  - 1\rangle$.  
\end{proof}
In words, Lemma \ref{lemma:not_more_noisy_subset} states that any user $q_{i_j}(s+1)$ in $\mathcal{L}_{i_j}''$ is not less noisy than its preceding user $q_{i_j}(s)$,
as assuming the contrary implies that user $l_{i_j}$ is less noisy than user $q_{i_j}(s)$, which  contradicts the definition of $\mathcal{L}_{i_j}''$.
From Lemma \ref{lemma:not_more_noisy_subset}, it follows that for all $q_{i_j}(s)  \in  \mathcal{L}_{i_j}''$, we have
\begin{equation}
\label{eq:int_order_general_case}
\alpha_{i_{j}i_{j-1}}^{[q_{i_j}(s)]} > \alpha_{i_{j}i_{j-1}}^{[q_{i_j}(s+1)]}.
\end{equation}
The above order of interference levels generalizes  \eqref{eq:int_order_case2}, and can be shown using similar steps.

Next, we define the side information signal for cell $i_{j}$ as:
\begin{equation}
\label{eq:side_information_general}
S_{i_{j}}(t)  = h_{i_{j+1}i_j} X_{i_{j}} (t) + Z_{i_{j+1}} (t),
\end{equation}
where $Z_{i_{j+1}}^{n}$ is an i.i.d. AWGN sequence independent of all other signals, while $h_{i_{j+1}i_j} $ is defined as:
\begin{equation}
h_{i_{j+1}i_j} \triangleq
\begin{cases}
 h_{i_{j+1}i_j}^{[q_{i_{j+1}}(1)]}, &  l_{i_{j+1}}'' > 0 \\ 
  h_{i_{j+1}i_j}^{[l_{i_{j+1}}]},  & l_{i_{j+1}}'' = 0.
 \end{cases}
\end{equation}
We use $\alpha_{i_{j+1}i_j} $ to denote the channel strength level associated with $h_{i_{j+1}i_j} $, i.e. 
\begin{equation}
\alpha_{i_{j+1}i_j} \triangleq
\begin{cases}
 \alpha_{i_{j+1}i_j}^{[q_{i_{j+1}}(1)]}, &  l_{i_{j+1}}'' > 0 \\ 
  \alpha_{i_{j+1}i_j}^{[l_{i_{j+1}}]},  & l_{i_{j+1}}'' = 0.
 \end{cases}
\end{equation}
The side information signal $S_{i_{j}}^{n}$ in \eqref{eq:side_information_general}  is constructed by taking guidance from the points raised in Remark \ref{remark:2_cell_converse}.
In particular, when $\mathcal{L}_{i_{j+1}}''$ is empty, user $l_{i_{j+1}}$ of cell  $i_{j+1}$ is less noisy than all other users in the same cell, which in turn, may be eliminated 
from the picture. Therefore, $S_{i_{j}}^{n}$ contains interference caused to user $l_{i_{j+1}}$ of cell  $i_{j+1}$ in this case.
On the other hand, when $\mathcal{L}_{i_{j+1}}''$ is non-empty, user $q_{i_{j+1}}(1)$  of cell $i_{j+1}$ sees the strongest interference signal from cell $i_{j}$  
(see \eqref{eq:int_order_general_case}). Therefore, $S_{i_{j}}^{n}$ contains the dominant interference component, 
caused to user $q_{i_{j+1}}(1)$  of cell $i_{j+1}$  in this case. 

Next, we invoke the fact that user $l_{i_j}$ is less noisy than users in $\mathcal{L}_{i_{j}}'$
and Lemma \ref{lemma:more_capable}  to bound the sum-rate of users in $\mathcal{L}_{i_{j}}'$ through a single mutual information term as follows:
\begin{align}
n\sum_{s_{i_j} \in\mathcal{L}_{i_{j}}' } \big( R_{i_{j}}^{[s_{i_j}]}  - \epsilon \big) & \leq 
\sum_{s_{i_j} \in \mathcal{L}_{i_{j}}'}  I \big(W_{i_{j}}^{[s_{i_j}]} ; Y_{i_{j}}^{[s_{i_{j}}] n}   \big) \\
& = I \big(W_{i_{j}}^{[l_{i_j}]} ; Y_{i_{j}}^{[l_{i_j}] n}   \big)  +  \sum_{s \in \langle l_{i_{j}}' - 1 \rangle  }  I \big(W_{i_{j}}^{[p_{i_j}(s)]} ; Y_{i_{j}}^{[p_{i_j}(s)] n}   \big) \\
& \leq I \big(W_{i_{j}}^{[l_{i_j}]} ; Y_{i_{j}}^{[l_{i_j}] n}   \big)  +  \sum_{s \in \langle l_{i_{j}}' - 1 \rangle  }  I \big(W_{i_{j}}^{[p_{i_j}(s)]} ; Y_{i_{j}}^{[l_{i_j}] n}    \big)  \\
& \leq I \big(W_{i_{j}}^{[l_{i_j}]} ; Y_{i_{j}}^{[l_{i_j}] n}   \big)  +  \sum_{s \in \langle l_{i_{j}}' - 1 \rangle  }  I \big(W_{i_{j}}^{[p_{i_j}(s)]} ; Y_{i_{j}}^{[l_{i_j}] n}  |  W_{i_{j}}^{\{ p_{i_j}(s+1),\ldots,
p_{i_j}(l_{i_j}') \}}  \big)  \\
\label{eq:fano_general_1st_subset}
& \leq I \big(W_{i_{j}}^{\mathcal{L}_{i_{j}}'} ; Y_{i_{j}}^{[l_{i_j}] n}   \big).
\end{align}
Note that in the above, we have used $W_{i_{j}}^{\mathcal{L}}$ to denote the set of messages $\{W_{i_{j}}^{[s]} : s \in \mathcal{L} \}$, for 
some $\mathcal{L} \subseteq \langle L_{i_{j}} \rangle$.
Proceeding from \eqref{eq:fano_general_1st_subset}, we further enhance 
the users in $\mathcal{L}_{i_{j}}'$ by providing the side information signal $S_{i_{j}}^{n}$.
This leads to the following
\begin{align}
n\sum_{s_{i_j} \in \mathcal{L}_{i_{j}}' } \big( R_{i_{j}}^{[s_{i_j}]}  - \epsilon \big) & \leq   I \big(W_{i_{j}}^{\mathcal{L}_{i_{j}}'} ; Y_{i_{j}}^{[l_{i_j}] n} , S_{i_{j}}^{n}  \big) \\
& =   h\big(Y_{i_{j}}^{[l_{i_j}] n} | S_{i_{j}}^{n}  \big) + h\big( S_{i_{j}}^{n}  \big)  - h\big(Y_{i_{j}}^{[l_{i_j}] n} | W_{i_{j}}^{\mathcal{L}_{i_{j}}'}  \big)    - 
h\big(S_{i_{j}}^{n}  | Y_{i_{j}}^{[l_{i_j}] n} , W_{i_{j}}^{\mathcal{L}_{i_{j}}'}  \big)  \\
\label{eq:fano_general_sum_bound_1st_subset}
& \leq   h\big(Y_{i_{j}}^{[l_{i_j}] n} | S_{i_{j}}^{n}  \big) + h\big( S_{i_{j}}^{n}  \big)  - h\big(Y_{i_{j}}^{[l_{i_j}] n} | W_{i_{j}}^{\mathcal{L}_{i_{j}}'}  \big)    - 
h\big(Z_{i_{j+1}}^{n}   \big).
\end{align}
Now we turn to the second  subset of users $\mathcal{L}_{i_{j}}''$.
We use $W_{i_{j}}''^{(s)}$ to briefly denote the set of messages 
$\big\{ W_{i_{j}}^{\{ q_{i_j}(s), \ldots,q_{i_j}(l_{i_j}'') \}}, 
W_{i_{j}}^{\mathcal{L}_{i_{j}}'} \big\}$, for all $s \in \langle l_{i_j}'' \rangle$.
The sum-rate of users in $\mathcal{L}_{i_{j}}''$ is bounded  as
\begin{align}
n\sum_{s_{i_j} \in \mathcal{L}_{i_{j}}''} \big( R_{i_{j}}^{[s_{i_j}]}  - \epsilon \big) & \leq 
\sum_{s_{i_j} \in \mathcal{L}_{i_{j}}'' }  I \big(W_{i_{j}}^{[s_{i_j}]} ; Y_{i_{j}}^{[s_{i_{j}}] n}   \big) \\
& = \sum_{s \in \langle l_{i_{j}}'' \rangle  }  I \big(W_{i_{j}}^{[q_{i_j}(s)]} ; Y_{i_{j}}^{[q_{i_{j}}(s)] n}   \big)  \\
& \leq \sum_{s \in \langle l_{i_{j}}'' \rangle  }  I \big(W_{i_{j}}^{[q_{i_j}(s)]} ; Y_{i_{j}}^{[q_{i_{j}}(s)] n} | W_{i_{j}}''^{(s+1)}   \big)  \\
\label{eq:fano_general_sum_bound_2nd_subset}
& =  \sum_{s \in \langle l_{i_{j}}'' \rangle  }  \bigg[
h \big(Y_{i_{j}}^{[q_{i_{j}}(s)] n} 
| W_{i_{j}}''^{(s+1)}    \big) - 
h \big(Y_{i_{j}}^{[q_{i_{j}}(s)] n} | W_{i_{j}}''^{(s)}    \big) \bigg].
\end{align}
Note that in the above, we take $W_{i_{j}}''^{(l_{i_j}'' + 1)} $ to be  $W_{i_{j}}^{\mathcal{L}_{i_{j}}'} $.
Adding the bounds in \eqref{eq:fano_general_sum_bound_1st_subset} and \eqref{eq:fano_general_sum_bound_2nd_subset},
and after rearranging some terms, we obtain the following  bound on the sum-rate of users in cell $i_{j}$:
\begin{align}
\nonumber
n\sum_{s_{i_j} \in \langle l_{i_j} \rangle} \big( R_{i_{j}}^{[s_{i_j}]}  - \epsilon \big)  &  \leq 
h\big(Y_{i_{j}}^{[l_{i_j}] n} | S_{i_{j}}^{n}  \big)  - h\big(Z_{i_{j+1}}^{n}   \big) 
+ h\big( S_{i_{j}}^{n}  \big)  -  h\big( S_{i_{j-1}}^{n}  \big)   \\
\label{eq:fano_general_sum_bound_all_subsets}
& + \sum_{s \in \langle l_{i_{j}}'' \rangle  }    \bigg[  h \big(Y_{i_{j}}^{[q_{i_{j}}(s)] n} | W_{i_{j}}''^{(s+1)}   \big) - 
h \big(Y_{i_{j}}^{[q_{i_{j}}(s+1)] n} | W_{i_{j}}''^{(s+1)}    \big) \bigg].
\end{align}
Now we elaborate on the manner in which \eqref{eq:fano_general_sum_bound_all_subsets} is obtained.
First, consider the case where $\mathcal{L}_{i_{j}}''$  is empty. 
Here \eqref{eq:fano_general_sum_bound_all_subsets}  is obtained from \eqref{eq:fano_general_sum_bound_1st_subset} on its own, 
i.e. the term $h\big(Y_{i_{j}}^{[l_{i_j}] n} | W_{i_{j}}^{\mathcal{L}_{i_{j}}'}  \big) $ in 
\eqref{eq:fano_general_sum_bound_1st_subset} is equal to $h\big(S_{i_{j-1}}^{n}  \big) $ in this case, while the summation on the right-hand-side 
of \eqref{eq:fano_general_sum_bound_all_subsets}  is not present as it is taken over an empty set.
On the other hand, when $\mathcal{L}_{i_{j}}''$  is non-empty, 
the term $h \big(Y_{i_{j}}^{[q_{i_{j}}(1)] n} | W_{i_{j}}''^{(1)}    \big) $ in 
\eqref{eq:fano_general_sum_bound_2nd_subset} it equal to $h\big(S_{i_{j-1}}^{n}  \big) $ in this case,
while the term $h\big(Y_{i_{j}}^{[l_{i_j}] n} | W_{i_{j}}^{\mathcal{L}_{i_{j}}'}  \big) $ from \eqref{eq:fano_general_sum_bound_1st_subset}
now appears in the summation on the right-hand-side 
of \eqref{eq:fano_general_sum_bound_all_subsets} by taking $q_{i_{j}}(l_{i_{j}}'' + 1) = l_{i_j}$.

Having obtained a bound on the sum-rate of each cell $i_{j}$ in \eqref{eq:fano_general_sum_bound_all_subsets}, 
we construct a multi-cell cyclic bound by adding the sum-rate bounds for all cells in the sequence $(i_{1},\ldots,i_{m})$ as:
\begin{align}
\nonumber
& n\sum_{j \in \langle m \rangle} \sum_{s_{i_j} \in \langle l_{i_j} \rangle} \big( R_{i_{j}}^{[s_{i_j}]}  - \epsilon \big)     \\
\label{eq:multi_cell_bound_converse_1}
& \leq 
\sum_{j \in \langle m \rangle} \bigg[   h\big(Y_{i_{j}}^{[l_{i_j}] n} | S_{i_{j}}^{n}  \big)  - h\big(Z_{i_{j}}^{n}   \big)  
 + \sum_{s \in \langle l_{i_j}'' \rangle }    \Big[  h \big(Y_{i_{j}}^{[q_{i_{j}}(s)] n} | W_{i_{j}}''^{(s+1)}   \big) - 
h \big(Y_{i_{j}}^{[q_{i_{j}}(s+1)] n} | W_{i_{j}}''^{(s+1)}  \big) \Big]  \bigg] \\
\label{eq:multi_cell_bound_converse_2}
& \leq n \sum_{j \in \langle m \rangle}  \bigg[   \big(  \alpha_{i_j i_j}^{[l_{i_j}]} - \alpha_{i_{j+1} i_j} \big) 
+  \sum_{s \in \langle l_{i_j}'' \rangle }      \big(  \alpha_{i_j i_{j-1}}^{[ q_{i_{j}}(s) ]} - \alpha_{i_{j} i_{j-1}}^{[q_{i_{j}}(s+1)]} \big)
\bigg] \log(P) 
+ nO(1) \\
\label{eq:multi_cell_bound_converse_3}
& = n \sum_{j \in \langle m \rangle}  \bigg[   \big(  \alpha_{i_j i_j}^{[l_{i_j}]} - \alpha_{i_{j+1} i_j} \big) 
+  \sum_{s \in \langle l_{i_{j+1}}'' \rangle }      \big(  \alpha_{i_{j+1} i_{j}}^{[ q_{i_{j+1}}(s) ]} - \alpha_{i_{j+1} i_{j}}^{[q_{i_{j+1}}(s+1)]} \big)
\bigg]    \log(P) 
+ nO(1) \\
\label{eq:multi_cell_bound_converse_4}
& = n  \sum_{j \in \langle m \rangle}  \big(  \alpha_{i_j i_j}^{[l_{i_j}]} - \alpha_{i_{j+1} i_j}^{[l_{i_{j+1}}]}  \big) \log(P) 
+ nO(1) \\
\label{eq:multi_cell_bound_converse_5}
& = n  \sum_{j \in \langle m \rangle}  \big(  \alpha_{i_j i_j}^{[l_{i_j}]} - \alpha_{i_{j} i_{j-1}}^{[l_{i_{j}}]}  \big) \log(P)
+ nO(1).
\end{align}
In \eqref{eq:multi_cell_bound_converse_1}, each $h\big(Y_{i_{j}}^{[l_{i_j}] n} | S_{i_{j}}^{n}  \big)  - h\big(Z_{i_{j}}^{n}   \big)$
is bounded in a similar manner to \eqref{eq:2_user_IC_bounds_12}.
On the other hand, each $h \big(Y_{i_{j}}^{[q_{i_{j}}(s)] n} | W_{i_{j}}''^{(s)}   \big) - 
h \big(Y_{i_{j}}^{[q_{i_{j}}(s+1)] n} | W_{i_{j}}''^{(s)}  \big)$
is bounded using Lemma \ref{lemma:diff_entropies}, which applies in this case 
due to the observations in Lemma \ref{lemma:not_more_noisy_subset} and \eqref{eq:int_order_general_case}.
By combining these bounds, we obtain \eqref{eq:multi_cell_bound_converse_2}.
Proceeding from \eqref{eq:multi_cell_bound_converse_2}, \eqref{eq:multi_cell_bound_converse_3} is obtained by applying a cyclic shift to 
the cell indices in the inner summation.
On the other hand, 
 \eqref{eq:multi_cell_bound_converse_4} is obtained from \eqref{eq:multi_cell_bound_converse_3} by considering two cases for each $j$: 
the case  $l_{i_{j+1}}'' = 0$ for which we have we have $\alpha_{i_{j+1} i_j} = \alpha_{i_{j+1} i_j}^{[l_{i_{j+1}}]} $,
and the case  $l_{i_{j+1}}'' > 0 $  for which we have $\alpha_{i_{j+1} i_j} = \alpha_{i_{j+1} i_j}^{[q_{i_{j+1}} (1)]} $
and  $ \sum_{s \in \langle l_{i_{j+1}}'' \rangle }   \big(  \alpha_{i_{j+1} i_{j}}^{[ q_{i_{j+1}}(s) ]} - \alpha_{i_{j+1} i_{j}}^{[q_{i_{j+1}}(s+1)]} \big) = 
\alpha_{i_{j+1} i_j}^{[q_{i_{j+1}} (1)]}  - \alpha_{i_{j+1} i_j}^{[l_{i_{j+1}} ]}   $.
Both cases evidently lead to the same result in \eqref{eq:multi_cell_bound_converse_4} for each $j$.
Finally, the desired sum-rate bound in  \eqref{eq:multi_cell_bound_converse_5} is obtained by applying a cyclic shift, in the opposite direction this time, to the cell indices of interference strength levels.  
This concludes the proof of Theorem \ref{theorem:outer_bound} and this section. 
\section{Conclusion}
\label{sec:conclusion}
In this paper, we established a GDoF-based uplink-downlink duality of multi-cell TIN.
On the achievability side, we showed that when restricting to a single-cell transmission strategy in each cell, with power control and treating inter-cell interference as noise,  
the corresponding achievable GDoF regions (TINA regions) for the IBC and its IMAC are identical.
On the converse side, we showed that the TINA region for the IBC is optimal in the TIN regime, identified for the IMAC in \cite{Joudeh2019a}.
Therefore, multi-cell TIN is optimal for both the IBC and IMAC in the  TIN regime of \cite{Joudeh2019a}.
In deriving the outer bound, we established a new notion of redundancy order amongst users in the same cell.
We showed that in the GDoF sense, the identified TIN conditions preserve the redundancy order of users in each cell of the IBC, while other known orders due to degradedness and less noisiness  are not preserved in general under inter-cell interference.

Theoretical GDoF-based TIN results for the $K$-user IC in \cite{Geng2015} have inspired a number efficient practical power allocation and link scheduling algorithms for D2D networks \cite{Geng2016,Yi2016,Geng2018,Naderializadeh2014}.
An interesting direction for future research would be to leverage the theoretical results in this work  and \cite{Joudeh2019a} to design new scheduling and power control  algorithms for cellular networks.
Another interesting direction is to investigate the optimality of multi-cell TIN, for both the IBC and IMAC, under finite precision CSIT. 
In this case, IA gains achieved in the CTIN regime will most likely collapse, as suggested by bounds based on aligned images (AI) \cite{Davoodi2016,Davoodi2017}. 
Hence, we envisage that multi-cell TIN will be optimal for both the IMAC and IBC in the entire CTIN regime under finite precision CSIT, which is analogous to a recent counterpart result for the regular IC in \cite{Chan2019}.
\appendix
\section{Proof of Lemma \ref{lemma:power_allocation_duality}}
\label{appebdix:proof_of_duality_lemma}
\subsection{Proof of $\mathcal{D}_{\mathrm{TINA}}^{\mathrm{IBC}}(\bm{\pi},\mathbf{r}) \subseteq \mathcal{D}_{\mathrm{TINA}}^{\mathrm{IMAC}}(\bm{\pi},\bar{\mathbf{r}})$}
Consider an arbitrary GDoF tuple
$\mathbf{d} \in \mathcal{D}^{\mathrm{IBC}}_{\mathrm{TINA}}(\bm{\pi},\mathbf{r})$.
The components of $\mathbf{d}$ must satisfy
\begin{equation}
\label{eq:IBC_GDoF per user_gamma}
d_{k}^{[\pi_{k}(l_{k})]}  \leq
\left(  \alpha_{kk}^{[\pi_{k}(l_{k})]} + r_{k}^{[\pi_{k}(l_{k})]}   - \gamma_{k}^{[\pi_{k}(l_{k})]} \right)^{+},
\end{equation}
where $\gamma_{k}^{[\pi_{k}(l_{k})]} $ is defined in \eqref{eq:IBC_gamma}.
Now consider the power allocation $\bar{\mathbf{r}}$ for the dual IMAC,
where  $ \bar{r}_{k}^{[\pi_{k}(l_{k})]} = -\gamma_{k}^{[\pi_{k}(l_{k})]}$ for every $(l_{k},k) \in \mathcal{K}$.
Using  $(\bm{\pi},\bar{\mathbf{r}})$,
we achieve the set of GDoF tuples given by
$\mathcal{D}^{\mathrm{IMAC}}_{\mathrm{TINA}}(\bm{\pi},\bar{\mathbf{r}})$ over the dual IMAC,
where for every $\bar{\mathbf{d}}$ in such set, each component $\bar{d}_{k}^{[\pi_{k}(l_{k})]}$ satisfies
\begin{equation}
\label{eq:duality_IMAC_GDoF_r}
\bar{d}_{k}^{[\pi_{k}(l_{k})]}  \leq
\left(  \alpha_{kk}^{[\pi_{k}(l_{k})]} - \gamma_{k}^{[\pi_{k}(l_{k})]}
  -  \bar{\gamma}_{k}^{[\pi_{k}(l_{k})]} \right)^{+}.
\end{equation}
By comparing \eqref{eq:duality_IMAC_GDoF_r} and \eqref{eq:IBC_GDoF per user_gamma},
it is evident that $\mathcal{D}^{\mathrm{IBC}}_{\mathrm{TINA}}(\bm{\pi},\mathbf{r}) \subseteq
\mathcal{D}^{\mathrm{IMAC}}_{\mathrm{TINA}}(\bm{\pi},\bar{\mathbf{r}})$
holds if the inequality
\begin{equation}
\label{eq:duality_IMAC_GDoF_UB_1_0}
r_{k}^{[\pi_{k}(l_{k})]} \leq   -  \bar{\gamma}_{k}^{[\pi_{k}(l_{k})]}
\end{equation}
holds for all $(l_{k},k) \in \mathcal{K}$.
Therefore, we focus on showing that this is the case in what follows.

To this end, we start by equivalently expressing the inequality in \eqref{eq:duality_IMAC_GDoF_UB_1_0}  as
\begin{equation}
\label{eq:duality_IMAC_GDoF_UB_1}
r_{k}^{[\pi_{k}(l_{k})]} \leq
\min \bigl\{0, \min_{l_{k}':l_{k}' < l_{k}} \{ \gamma_{k}^{[\pi_{k}(l_{k}')]} - \alpha_{kk}^{[\pi_{k}(l_{k}')]} \}  ,\min_{(l_{j},j):j \neq k}
 \{ \gamma_{j}^{[l_{j}]} - \alpha_{jk}^{[l_{j}]} \}   \bigr\}.
\end{equation}
As $r_{k}^{[\pi_{k}(l_{k})]}  \leq 0$, we only need to show that the inequality in \eqref{eq:duality_IMAC_GDoF_UB_1}
holds for the two remaining terms inside the $\min\{0,\cdot,\cdot\}$.
We start by showing that $r_{k}^{[\pi_{k}(l_{k})]}  \leq \gamma_{k}^{[\pi_{k}(l_{k}')]} - \alpha_{kk}^{[\pi_{k}(l_{k}')]}$,
for all $l_{k}' < l_{k}$, i.e.
\begin{align}
\nonumber
\gamma_{k}^{[\pi_{k}(l_{k}')]} - \alpha_{kk}^{[\pi_{k}(l_{k}')]} & =
\max
\Bigl\{ \max_{l_{k}'':l_{k}''>l_{k}'} \{ \! r_{k}^{[\pi_{k}(l_{k}'')]} \! \} , \max_{m_{k}:m_{k} \geq l_{k}' } \bigl\{
 \bigl( \max_{(l_{j},j):j \neq k} \{ \alpha_{kj}^{[\pi_{k}(m_{k})]}
 +   r_{j}^{[l_{j}]} \} \bigr)^{+} - \alpha_{kk}^{[\pi_{k}(m_{k})]}  \bigr\}    \Bigr\} \\
 & \geq \max_{l_{k}'':l_{k}''>l_{k}'} \{  r_{k}^{[\pi_{k}(l_{k}'')]} \} \\
\label{eq:duality_IMAC_GDoF_UB_2}
\implies & \gamma_{k}^{[\pi_{k}(l_{k}')]} - \alpha_{kk}^{[\pi_{k}(l_{k}')]} \geq r_{k}^{[\pi_{k}(l_{k})]}, \ \forall l_{k} > l_{k}'.
\end{align}
Next, we show that
$r_{k}^{[\pi_{k}(l_{k})]}  \leq \gamma_{j}^{[l_{j}]}  -  \alpha_{jk}^{[l_{j}]}$, for all
$(l_{j},j) \in \mathcal{K}$ with $j \neq k$.
As
$l_{j} = \pi_{j}(l_{j}')$, for some $l_{j}' \in \langle L_{j} \rangle$, we may write
\begin{align}
\label{eq:duality_IMAC_GDoF_UB_3_1}
\gamma_{j}^{[l_{j}]}  -  \alpha_{jk}^{[l_{j}]}
&\geq  \alpha_{jj}^{[l_{j}]} +
\max_{m_{j}:m_{j} \geq l_{j}' } \bigl\{
  \bigl( \max_{(l_{i},i):i \neq j} \{ \alpha_{ji}^{[\pi_{j}(m_{j})]}
 +   r_{i}^{[l_{i}]} \} \bigr)^{+} - \alpha_{jj}^{[\pi_{j}(m_{j})]}  \bigr\}  - \alpha_{jk}^{[l_{j}]}   \\
& \geq  \alpha_{jj}^{[l_{j}]} +
 \bigl( \max_{(l_{i},i):i \neq j} \{ \alpha_{ji}^{[\pi_{j}(l_{j}')]}
 +   r_{i}^{[l_{i}]} \} \bigr)^{+}  - \alpha_{jj}^{[\pi_{j}(l_{j}')]}  - \alpha_{jk}^{[l_{j}]}   \\
\label{eq:duality_IMAC_GDoF_UB_4}
&=   \max_{(l_{i},i):i \neq j} \{ \alpha_{ji}^{[l_{j}]} + r_{i}^{[l_{i}]}   \} - \alpha_{jk}^{[l_{j}]}  \\
\label{eq:duality_IMAC_GDoF_UB_5}
& \geq r_{k}^{[\pi_{k}(l_{k})]}.
\end{align}
In \eqref{eq:duality_IMAC_GDoF_UB_3_1}, we bound $\gamma_{j}^{[\pi_{j}(l_{j}')]}$ by taking the second term in its outmost
$\max\{\cdot,\cdot\}$ (see \eqref{eq:IBC_gamma}).

From \eqref{eq:duality_IMAC_GDoF_UB_2} and \eqref{eq:duality_IMAC_GDoF_UB_5}, we conclude that the inequality in \eqref{eq:duality_IMAC_GDoF_UB_1} holds for all $(l_{k},k) \in \mathcal{K}$, and therefore $\mathcal{D}^{\mathrm{IBC}}_{\mathrm{TINA}}(\bm{\pi},\mathbf{r}) \subseteq \mathcal{D}^{\mathrm{IMAC}}_{\mathrm{TINA}}(\bm{\pi},\bar{\mathbf{r}})$. This completes this part of the proof.
\subsection{Proof of $\mathcal{D}_{\mathrm{TINA}}^{\mathrm{IMAC}}(\bm{\pi},\bar{\mathbf{r}}) \subseteq
\mathcal{D}_{\mathrm{TINA}}^{\mathrm{IBC}}(\bm{\pi},\mathbf{r}) $}
To facilitate this part of the proof, we start by imposing a simplifying restriction.
In particular, we modify the IMAC TIN scheme in Section \ref{subsec:TIN_IMAC}
by restricting the power control policy such that
\begin{equation}
\label{eq:duality_condition_IMAC_order}
\alpha_{kk}^{[\pi_{k}(l_{k}'')]} + \bar{r}_{k}^{[\pi_{k}(l_{k}'')]} \geq   \alpha_{kk}^{[\pi_{k}(l_{k}')]} + \bar{r}_{k}^{[\pi_{k}(l_{k}')]} , \
\forall l_{k}'' > l_{k}', \ k \in \langle K \rangle.
\end{equation}
While restricting the TIN scheme should, by definition, lead to a possibly smaller TINA region compared to
$\mathcal{D}_{\mathrm{TINA}}^{\mathrm{IMAC}}$,
the restriction in \eqref{eq:duality_condition_IMAC_order} turns out to  be harmless.
Intuitively, for any $l_{k}'' > l_{k}'$, achieving a non-zero GDoF $d_{k}^{[\pi_{k}(l_{k}'')]} > 0$
naturally requires the signal of UE-$\big(\pi_{k}(l_{k}''),k\big)$ to be received
at a higher power level compared to the signal of  UE-$\big(\pi_{k}(l_{k}'),k\big)$,
specifically as the former preceded the latter in the succussive decoding order.
A formal proof showing that \eqref{eq:duality_condition_IMAC_order} has no influence is given at the end of this appendix.
We proceed while assuming, without loss of generality, that the conditions in \eqref{eq:duality_condition_IMAC_order}
holds for all considered IMAC power allocations.

Now let us consider an arbitrary GDoF tuple $\bar{\mathbf{d}} \in \mathcal{D}^{\mathrm{IMAC}}_{\mathrm{TINA}}(\bm{\pi},\bar{\mathbf{r}})$
for some feasible $(\bm{\pi},\bar{\mathbf{r}})$.
Recall that the components of $\bar{\mathbf{d}}$ must satisfy
\begin{equation}
\label{eq:IMAC_GDoF per user_gamma}
\bar{d}_{k}^{[\pi_{k}(l_{k})]}  \leq
\left(  \alpha_{kk}^{[\pi_{k}(l_{k})]} + \bar{r}_{k}^{[\pi_{k}(l_{k})]}
  -  \bar{\gamma}_{k}^{[\pi_{k}(l_{k})]} \right)^{+}
\end{equation}
where $\bar{\gamma}_{k}^{[\pi_{k}(l_{k})]}$ is defined in \eqref{eq:IMAC_gamma}.
Now consider the IBC power allocation $\mathbf{r}$, where $r_{k}^{[\pi_{k}(l_{k})]} =  - \bar{\gamma}_{k}^{[\pi_{k}(l_{k})]}$
for every $(l_{k},k) \in \mathcal{K}$.
Using $(\bm{\pi},\mathbf{r})$, we achieve the set of GDoF tuples given by $\mathcal{D}^{\mathrm{IBC}}_{\mathrm{TINA}}(\bm{\pi},\mathbf{r})$,
over the IBC,
where for every $\mathbf{d}$ is this set, each component $d_{k}^{[\pi_{k}(l_{k})]}$ must satisfy
\begin{equation}
\label{eq:duality_IBC_GDoF_beta}
d_{k}^{[\pi_{k}(l_{k})]}  \leq
\left(  \alpha_{kk}^{[\pi_{k}(l_{k})]} - \bar{\gamma}_{k}^{[\pi_{k}(l_{k})]}   - \gamma_{k}^{[\pi_{k}(l_{k})]} \right)^{+}.
\end{equation}
By examining  \eqref{eq:IMAC_GDoF per user_gamma} and \eqref{eq:duality_IBC_GDoF_beta}, it is readily seen that
$\mathcal{D}^{\mathrm{IMAC}}(\bm{\pi},\bar{\mathbf{r}}) \subseteq \mathcal{D}^{\mathrm{IBC}}(\bm{\pi},\mathbf{r})$
holds if for all $(l_{k},k) \in \mathcal{K}$, the following inequality holds:
\begin{multline}
\label{eq:duality_IBC_GDoF_UB_2}
 \bar{r}_{k}^{[\pi_{k}(l_{k})]} \leq - \gamma_{k}^{[\pi_{k}(l_{k})]} \ \implies \\
 \alpha_{kk}^{[\pi_{k}(l_{k})]} \! + \!  \bar{r}_{k}^{[\pi_{k}(l_{k})]} \! \leq \!
\min
\Bigl\{  \min_{l_{k}'':l_{k}''>l_{k}} \{  \bar{\gamma}_{k}^{[\pi_{k}(l_{k}'')]}  \} , \! \! \min_{m_{k}:m_{k} \geq l_{k} } \! \! \bigl\{
 \alpha_{kk}^{[\pi_{k}(m_{k})]}- \bigl( \max_{(l_{j},j):j \neq k} \{ \alpha_{kj}^{[\pi_{k}(m_{k})]}
 -   \bar{\gamma}_{j}^{[l_{j}]} \} \bigr)^{+}  \bigr\}  \!  \Bigr\}
\end{multline}
Therefore, we focus on proving \eqref{eq:duality_IBC_GDoF_UB_2} throughout the remainder of this part.

We start by showing that $\alpha_{kk}^{[\pi_{k}(l_{k})]} + \bar{r}_{k}^{[\pi_{k}(l_{k})]} \leq \bar{\gamma}_{k}^{[\pi_{k}(l_{k}'')]}$,
for all $l_{k}''>l_{k}$. In particular, we have
\begin{align}
\label{eq:duality_IBC_GDoF_UB_2_0}
\bar{\gamma}_{k}^{[\pi_{k}(l_{k}'')]} & = \max \bigl\{0, \max_{l_{k}':l_{k}' < l_{k}'' } \{ \alpha_{kk}^{[\pi_{k}(l_{k}')]} + \bar{r}_{k}^{[\pi_{k}(l_{k}')]} \}  ,\max_{(l_{j},j):j \neq k}
 \{ \alpha_{jk}^{[l_{j}]} + \bar{r}_{j}^{[l_{j}]} \}   \bigr\} \\
 & \geq \max_{l_{k}':l_{k}' < l_{k}'' } \{ \alpha_{kk}^{[\pi_{k}(l_{k}')]} + \bar{r}_{k}^{[\pi_{k}(l_{k}')]} \} \\
 & \geq \alpha_{kk}^{[\pi_{k}(l_{k})]} + \bar{r}_{k}^{[\pi_{k}(l_{k})]}
\end{align}
where \eqref{eq:duality_IBC_GDoF_UB_2_0} follows from the definition of $\bar{\gamma}_{k}^{[\pi_{k}(l_{k}'')]} $ in
\eqref{eq:IMAC_gamma}.
Next, it remains to show that
\begin{equation}
\label{eq:duality_IBC_GDoF_UB_3}
\alpha_{kk}^{[\pi_{k}(l_{k})]} + \bar{r}_{k}^{[\pi_{k}(l_{k})]}  \leq
 \alpha_{kk}^{[\pi_{k}(m_{k})]}- \bigl( \max_{(l_{j},j):j \neq k} \{ \alpha_{kj}^{[\pi_{k}(m_{k})]}
 -   \bar{\gamma}_{j}^{[l_{j}]} \} \bigr)^{+}
\end{equation}
holds for all $m_{k} \geq l_{k}$.
To this end, we observe that we may express the right-hand-side of \eqref{eq:duality_IBC_GDoF_UB_3} as
\begin{multline}
\label{eq:duality_IBC_GDoF_UB_4}
\alpha_{kk}^{[\pi_{k}(m_{k})]}- \bigl( \max_{(l_{j},j):j \neq k} \{ \alpha_{kj}^{[\pi_{k}(m_{k})]}
 -   \bar{\gamma}_{j}^{[l_{j}]} \} \bigr)^{+}
 =  \alpha_{kk}^{[\pi_{k}(m_{k})]}- \max \bigl\{ 0, \max_{(l_{j},j):j \neq k} \{ \alpha_{kj}^{[\pi_{k}(m_{k})]}
 -   \bar{\gamma}_{j}^{[l_{j}]} \} \bigr\} \\
 = \min \bigl\{ \alpha_{kk}^{[\pi_{k}(m_{k})]} , \alpha_{kk}^{[\pi_{k}(m_{k})]} + \min_{(l_{j},j):j \neq k} \{ \bar{\gamma}_{j}^{[l_{j}]} -
\alpha_{kj}^{[\pi_{k}(m_{k})]} \} \bigr\}.
\end{multline}
Next, we invoke the assumption in \eqref{eq:duality_condition_IMAC_order}, from which
 $\alpha_{kk}^{[\pi_{k}(l_{k})]} + \bar{r}_{k}^{[\pi_{k}(l_{k})]}  \leq \alpha_{kk}^{[\pi_{k}(m_{k})]}$ holds
 for all $m_{k} \geq l_{k}$.
Hence, proving \eqref{eq:duality_IBC_GDoF_UB_3} reduces to showing that
$\alpha_{kk}^{[\pi_{k}(l_{k})]} + \bar{r}_{k}^{[\pi_{k}(l_{k})]} \leq \alpha_{kk}^{[\pi_{k}(m_{k})]} + \bar{\gamma}_{j}^{[l_{j}]} -
\alpha_{kj}^{[\pi_{k}(m_{k})]}$ holds for all $(l_{j},j) \in \mathcal{K}$ with $j \neq k$, as seen from \eqref{eq:duality_IBC_GDoF_UB_4}.
For this purpose, we write
\begin{align}
\label{eq:duality_IBC_GDoF_UB_5}
\alpha_{kk}^{[\pi_{k}(m_{k})]} + \bar{\gamma}_{j}^{[l_{j}]}  -  \alpha_{kj}^{[\pi_{k}(m_{k})]}   &  \geq
\alpha_{kk}^{[\pi_{k}(m_{k})]} + \max_{(l_{i},i): i \neq j} \{\alpha_{ij}^{[l_{i}]} + \bar{r}_{i}^{[l_{i}]} \} - \alpha_{kj}^{[\pi_{k}(m_{k})]}  \\
\label{eq:duality_IBC_GDoF_UB_6}
& \geq \alpha_{kk}^{[\pi_{k}(m_{k})]}  + \bar{r}_{k}^{[\pi_{k}(m_{k})]}   \\
\label{eq:duality_IBC_GDoF_UB_7}
& \geq \alpha_{kk}^{[\pi_{k}(l_{k})]}  + \bar{r}_{k}^{[\pi_{k}(l_{k})]}.
\end{align}
In \eqref{eq:duality_IBC_GDoF_UB_5}, we bound $\bar{\gamma}_{j}^{[l_{j}]}$ below by taking the third term in its outmost $\max\{0,\cdot,\cdot\}$
(see its definition in \eqref{eq:IMAC_gamma}).
On the other hand,
the inequality in \eqref{eq:duality_IBC_GDoF_UB_6} is obtained by setting
$(l_{i},i) = (\pi_{k}(m_{k}),k)$ in  \eqref{eq:duality_IBC_GDoF_UB_5},
while the inequality in \eqref{eq:duality_IBC_GDoF_UB_7} holds due to the assumption in \eqref{eq:duality_condition_IMAC_order}.

As \eqref{eq:duality_IBC_GDoF_UB_2} holds for all $(l_{k},k) \in \mathcal{K}$,
we conclude that $\mathcal{D}^{\mathrm{IMAC}}(\bm{\pi},\bar{\mathbf{r}}) \subseteq \mathcal{D}^{\mathrm{IBC}}(\bm{\pi},\mathbf{r}) $,
which completes this part of the proof. To complete the proof,
we now justify the assumption in \eqref{eq:duality_condition_IMAC_order}.
\subsection{Justification for \eqref{eq:duality_condition_IMAC_order}}
To show that the restriction in \eqref{eq:duality_condition_IMAC_order} is harmless,
consider a feasible strategy $(\bm{\pi},\bar{\mathbf{r}})$, and suppose that the contrary of \eqref{eq:duality_condition_IMAC_order}
holds for a pair of UEs in cell $k$. That is, we have
\begin{equation}
\label{eq:GDoF_duality_MAC_just_0}
\alpha_{kk}^{[\pi_{k}(l_{k}'')]} + \bar{r}_{k}^{[\pi_{k}(l_{k}'')]}  <   \alpha_{kk}^{[\pi_{k}(l_{k}')]} + \bar{r}_{k}^{[\pi_{k}(l_{k}')]}
\end{equation}
for some $l_{k}'' > l_{k}'$.
Denoting the set of GDoF tuples achieved through this fixed strategy by $\mathcal{D}_{\mathrm{TINA}}^{\mathrm{IMAC}}(\bm{\pi},\bar{\mathbf{r}})$,
we observe that for any $\bar{\mathbf{d}} \in \mathcal{D}_{\mathrm{TINA}}^{\mathrm{IMAC}}(\bm{\pi},\bar{\mathbf{r}})$,
the component $\bar{d}_{k}^{[\pi_{k}(l_{k}'')]}$ must satisfy
\begin{equation}
\label{eq:GDoF_duality_MAC_just}
\bar{d}_{k}^{[\pi_{k}(l_{k}'')]}   \leq \left( \alpha_{kk}^{[\pi_{k}(l_{k}'')]} + \bar{r}_{k}^{[\pi_{k}(l_{k}'')]} -
\big(\alpha_{kk}^{[\pi_{k}(l_{k}')]} + \bar{r}_{k}^{[\pi_{k}(l_{k}')]} \big) \right)^{+}
= 0
\end{equation}
where the inequality \eqref{eq:GDoF_duality_MAC_just} follows directly from \eqref{eq:IMAC_GDoF per user} in Section \ref{subsec:TIN_IMAC},
while the equality in \eqref{eq:GDoF_duality_MAC_just} holds due to \eqref{eq:GDoF_duality_MAC_just_0}.
Therefore, for any $\bar{\mathbf{d}} \in \mathcal{D}_{\mathrm{TINA}}^{\mathrm{IMAC}}(\bm{\pi},\bar{\mathbf{r}})$,
we must have $\bar{d}_{k}^{[\pi_{k}(l_{k}'')]} = 0$ whenever \eqref{eq:GDoF_duality_MAC_just_0} holds.

Now consider an alternative strategy $(\tilde{\bm{\pi}},\tilde{\mathbf{r}})$, which is a modification of $(\bm{\pi},\bar{\mathbf{r}})$
such that:
\begin{itemize}
\item  The decoding order of UE-$\big(\pi_{k}(l_{k}''),k\big)$ and UE-$\big(\pi_{k}(l_{k}'),k\big)$ from the original strategy
$(\bm{\pi},\bar{\mathbf{r}})$ is swapped in the modified strategy, while maintaining the decoding orders
of all other UEs. That is, we set
$\tilde{\pi}_{k}(l_{k}'') = \pi_{k}(l_{k}')$ and $\tilde{\pi}_{k}(l_{k}') = \pi_{k}(l_{k}'')$.
\item We set $\tilde{r}_{k}^{[\tilde{\pi}_{k}(l_{k}')]} = -\infty$,
while maintaining the power allocation for all remaining UEs.
\end{itemize}
Next, we show that any GDoF tuple achieved through the original strategy is also achievable through the modified strategy, i.e.
$\mathcal{D}_{\mathrm{TINA}}^{\mathrm{IMAC}}(\bm{\pi},\bar{\mathbf{r}}) \subseteq
\mathcal{D}_{\mathrm{TINA}}^{\mathrm{IMAC}}(\tilde{\bm{\pi}},\tilde{\mathbf{r}})$.

To this end, we first observe that for any GDoF tuple $\tilde{\mathbf{d}} \in
\mathcal{D}_{\mathrm{TINA}}^{\mathrm{IMAC}}(\tilde{\bm{\pi}},\tilde{\mathbf{r}})$,
the components corresponding to UEs in cell $k$ satisfy $\tilde{d}_{k}^{[\tilde{\pi}_{k}(l_{k}')]} = 0$
and
\begin{equation}
\label{eq:GDoF_duality_MAC_just_2}
\tilde{d}_{k}^{[\tilde{\pi}_{k}(l_{k})]}   \leq
\biggl( \!  \alpha_{kk}^{[\tilde{\pi}_{k}(l_{k})]}  +  \tilde{r}_{k}^{[\tilde{\pi}_{k}(l_{k})]}
  -   \max \bigl\{ \! 0, \max_{l_{k}^{\star}:l_{k}^{\star} < l_{k},l_{k}^{\star} \neq l_{k}'} \{
  \alpha_{kk}^{[\tilde{\pi}_{k}(l_{k}^{\star})]} + \tilde{r}_{k}^{[\tilde{\pi}_{k}(l_{k}^{\star})]} \}  ,\max_{(l_{j},j):j \neq k}
 \{ \alpha_{jk}^{[l_{j}]} +  \tilde{r}_{j}^{[l_{j}]} \}  \! \bigr\}   \biggr)^{+},
\end{equation}
where the latter holds for all $l_{k} \in \langle L_{k} \rangle \setminus \{l_{k}' \}$.
Recalling that $\pi_{k}(l_{k}'') = \tilde{\pi}_{k}(l_{k}')$, it is evident that the zero GDoF
achieved by UE-$\big(\tilde{\pi}_{k}(l_{k}'),k\big)$ is unchanged across the two strategies.
For all remaining UEs in cell $k$, by comparing \eqref{eq:IMAC_GDoF per user} and \eqref{eq:GDoF_duality_MAC_just_2},
it can be seen that GDoF components achieved using the original strategy $(\bm{\pi},\bar{\mathbf{r}})$
are also achievable under the modified strategy  $(\tilde{\bm{\pi}},\tilde{\mathbf{r}})$,
as such UEs experience the same inter-cell interference and less intra-cell interference under the latter strategy.
Furthermore, by extending this reasoning to UEs in cells indexed by
$i$, for all $i \in \langle K \rangle \setminus \{k\} $, we see that the corresponding GDoF components
achieved using $(\bm{\pi},\bar{\mathbf{r}})$ are also achievable using $(\tilde{\bm{\pi}},\tilde{\mathbf{r}})$,
as power allocations in cells
$i \in \langle K \rangle \setminus \{k\}$ are unaltered,
while the transmit power of cell $k$ is reduced in the modified strategy $(\tilde{\bm{\pi}},\tilde{\mathbf{r}})$.
Therefore, we have
$\mathcal{D}_{\mathrm{TINA}}^{\mathrm{IMAC}}(\bm{\pi},\bar{\mathbf{r}}) \subseteq
\mathcal{D}_{\mathrm{TINA}}^{\mathrm{IMAC}}(\tilde{\bm{\pi}},\tilde{\mathbf{r}})$.

The above argument is applied, recursively, to all pairs of UEs in all cells that violate the conditions in \eqref{eq:duality_condition_IMAC_order}.
Therefore, we end up with a strategy that satisfies the order in
\eqref{eq:duality_condition_IMAC_order} and achieve a set of GDoF tuples that contains
$\mathcal{D}_{\mathrm{TINA}}^{\mathrm{IMAC}}(\bm{\pi},\bar{\mathbf{r}})$.
This completes the proof of Lemma \ref{lemma:power_allocation_duality}.
\section{Interference Alignment in the CTIN Regime}
\label{appendix:IA}
Here we consider the $2$-cell, $3$-user network in Fig. \ref{fig:signal_levels_TIN}(a).
We show that structured codes and IA achieve GDoF gains over TIN in the sub-regime where the CTIN conditions hold but the TIN conditions do not. 
In what follows, we assume that the IC-type conditions in \eqref{eq:CTIN_cond_2_cell_3_user_2} and \eqref{eq:CTIN_cond_2_cell_3_user_3} hold.
Moreover, to ensure that we are strictly in the CTIN regime and not in the TIN regime, we assume\footnote{The case where 
$\beta_{1} - \beta_{2} =  \alpha_{1} - \alpha_{2}$ is discussed at the end of this appendix.}
\begin{align}
\label{eq:CTIN_cond_1_2_cell_3_user_example}
\beta_{1} - \beta_{2} & < \alpha_{1}  \  \text{and} \ \beta_{1} - 2\beta_{2} < \alpha_{1} - \alpha_{2} \\
\label{eq:CTIN_cond_2_2_cell_3_user_example}
\beta_{1} - \beta_{2} & >  \alpha_{1} - \alpha_{2}.
\end{align}
For ease of exposition, we further focus on the case where the interference level seen by the weaker BC user is no less than the interference level seen by the stronger BC user, 
i.e.  $\alpha_{2} \geq \beta_{2}$. 
Similar arguments can be constructed for the other case, where $\alpha_{2} < \beta_{2}$.

Using TIN and power control in an altruistic fashion, transmit powers are adjusted such that user $b$ and user $c$ receive no interference above noise levels. 
Note, however, since we have assumed $\alpha_{2} \geq \beta_{2}$, 
the lowest $\alpha_{2} - \beta_{2}$ levels at user $a$ are occupied by interference from transmitter $1$ (see Fig. \ref{fig:signal_levels_IA}(a) and (c)).
This strategy achieves the sum-GDoF in \eqref{eq:sum_GDoF_users_a_b_c_TIN}, which we rewrite here as 
\begin{equation}
\label{eq:sum_GDoF_TIN_2_cell_3_user}
d_{\mathrm{TINA}}^{\mathrm{IBC}} = (\beta_{1} - \gamma_{1}) + (\gamma_{2} - \beta_{2}).
\end{equation}
As we are in the CTIN regime, it follows from Theorem \ref{theorem:TIN_IBC}  that \eqref{eq:sum_GDoF_TIN_2_cell_3_user} is the maximum sum-GDoF
achievable using TIN. 
Next, we define the following quantity
\begin{equation}
\gamma_{\mathrm{IA}} \triangleq \min \big\{ (\alpha_{1} - \alpha_{2}) - (\beta_{1} - 2\beta_{2}),  (\beta_{1} - \beta_{2}) - (\alpha_{1} - \alpha_{2})  \big\}.
\end{equation}
From \eqref{eq:CTIN_cond_1_2_cell_3_user_example} and \eqref{eq:CTIN_cond_2_2_cell_3_user_example}, we know that $\gamma_{\mathrm{IA}}  > 0$.
In what follows, we show that under the above-described conditions, IA yields 
a strict GDoF gain of $\gamma_{\mathrm{IA}}$ over TIN, achieving a sum-GDoF of
\begin{equation}
\label{eq:sum_GDoF_IA_2_cell_3_user}
d_{\mathrm{IA}}^{\mathrm{IBC}} = d_{\mathrm{TINA}}^{\mathrm{IBC}}  + \gamma_{\mathrm{IA}}. 
\end{equation}
In showing the achievability of \eqref{eq:sum_GDoF_IA_2_cell_3_user}, we restrict ourselves to an intuitive 
exposition using  the notion of signal levels. 
Realizing such signal levels in the Gaussian setting of interest is achieved using multilevel lattice  codes (see, for example, \cite{Bresler2010,Jafar2010}).
Next, we treat each of the two cases that determine the quantity $\gamma_{\mathrm{IA}} $ separately.
We point to the illustrative examples of these two cases given in Fig. \ref{fig:signal_levels_IA}(b) and (d), which help in visualizing the following arguments. 
\begin{figure}[h]
\centering
\includegraphics[width = 1.0\textwidth]{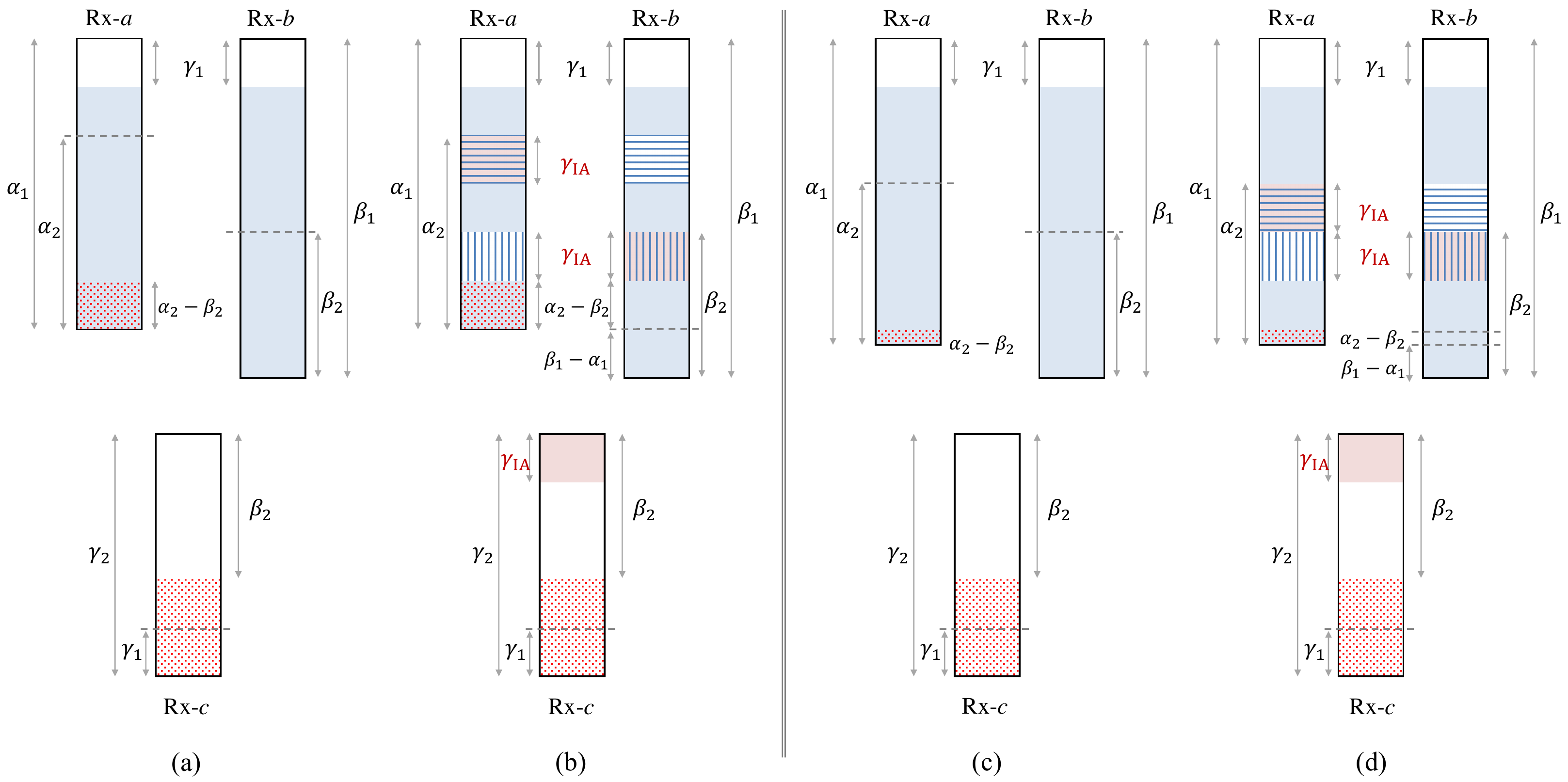}
\caption{\small Signal power levels for two instances of the 2-cell, 3-user network in the CTIN regime. Levels in white, blue and red are empty, $X_{1}$ and $X_{2}$, respectively.  Left: $(\alpha_{1} - \alpha_{2}) - (\beta_{1} - 2\beta_{2}) \leq (\beta_{1} - \beta_{2}) - (\alpha_{1} - \alpha_{2})$, (a) TIN, and (b) IA. Right:  $(\alpha_{1} - \alpha_{2}) - (\beta_{1} - 2\beta_{2}) > (\beta_{1} - \beta_{2}) - (\alpha_{1} - \alpha_{2})$, (c) TIN, and (d) IA.}
\label{fig:signal_levels_IA}
\end{figure}
\begin{enumerate}
\item $(\alpha_{1} - \alpha_{2}) - (\beta_{1} - 2\beta_{2}) \leq (\beta_{1} - \beta_{2}) - (\alpha_{1} - \alpha_{2}) $:
We start from the altruistic TIN scheme in Fig. \ref{fig:signal_levels_IA}(a).
It is useful to examine the lower $\beta_{2}$ signal levels received by user $b$.
These signal levels can be partitioned into three parts: 1) the lowest $(\beta_{1} - \alpha_{1})$ levels, which are received by user $b$ but not by user $a$,
2) the middle $(\alpha_{2} - \beta_{2})$ levels, adjacent to signal levels of user $a$ which are corrupted by interference from the TIN scheme,
and 3) the upper $\beta_{2} - (\beta_{1} - \alpha_{1}) - (\alpha_{2} - \beta_{2}) = \gamma_{\mathrm{IA}} $ levels.\footnote{It is worthwhile noting that since $(\beta_{1} - \alpha_{1})  \geq 0$, $(\alpha_{2} - \beta_{2}) \geq 0$ and $\gamma_{\mathrm{IA}} > 0$, such partition of the lowest $\beta_{2}$ levels of user $b$ exists, with the third part consisting of a strictly positive number of signal levels.} An illustration is shown in Fig. \ref{fig:signal_levels_IA}(b). 

We proceed by assuming that in addition to the lower $(\gamma_{2} - \beta_{2})$ signal levels employed by transmitter $2$, this transmitter further uses its upper 
$\gamma_{\mathrm{IA}} = (\alpha_{1} - \alpha_{2}) - (\beta_{1} - 2\beta_{2})$ levels to transmit an additional signal, denoted by $U_{2}$, to user $c$.
By doing so, user $c$ now achieves a GDoF of $(\gamma_{2} - \beta_{2}) + \gamma_{\mathrm{IA}}$.
Nevertheless, the $\gamma_{\mathrm{IA}}$  signal levels of user $b$ in the third part of the above partition are now corrupted by interference from   $U_{2}$.
We denote such $\gamma_{\mathrm{IA}}$ signal levels of $X_{1}$ by the signal $U_{1}$. 
It follows that user $b$  achieves a GDoF of $(\beta_{1} - \gamma_{1})  - \gamma_{\mathrm{IA}}$.
Next, we show that user $a$ can compensate for this loss of GDoF by decoding $U_{1}$. 
In particular, we show that while $U_{1}$ and $U_{2}$ align at user $b$, they are received  separately at   
user $a$.

To this end, we look at the lower $\alpha_{2}$ signal levels of user $a$, which can be partitioned into three parts: 1) the lowest  $(\alpha_{2} - \beta_{2})$ levels,
which are corrupted by interference from the TIN scheme, 2) the middle $\gamma_{\mathrm{IA}}$ levels, containing $U_{1}$
which user $a$ wishes to decode, and 3) the upper $\alpha_{2} - (\alpha_{2} - \beta_{2}) - \gamma_{\mathrm{IA}} = (\beta_{1} - \beta_{2}) - (\alpha_{1} - \alpha_{2})$ levels. 
It can be verified that such partition exists.
Moreover, since we are considering the case where $(\beta_{1} - \beta_{2}) - (\alpha_{1} - \alpha_{2}) \geq \gamma_{\mathrm{IA}}$,
it follows that $U_{2}$ is received entirely in the third part of the above partition, 
and hence does not overlap with the  $\gamma_{\mathrm{IA}}$ signal levels of $U_{1}$, received in the middle part of the partition.
Therefore, user $a$ can decode $U_{1}$ and achieve a GDoF of $\gamma_{\mathrm{IA}}$.
By adding the $3$ individual GDoF contributions, we achieve the sum-GDoF of $d_{\mathrm{IA}}^{\mathrm{IBC}}$ in \eqref{eq:sum_GDoF_IA_2_cell_3_user}.
\item $(\alpha_{1} - \alpha_{2}) - (\beta_{1} - 2\beta_{2}) >  (\beta_{1} - \beta_{2}) - (\alpha_{1} - \alpha_{2}) $:
Similar to the previous case, we start from the altruistic TIN scheme in Fig. \ref{fig:signal_levels_IA}(c), and then proceed to
assume that transmitter $2$ uses its upper $\gamma_{\mathrm{IA}} =  (\beta_{1} - \beta_{2}) - (\alpha_{1} - \alpha_{2})$ signal levels  to transmit 
an additional signal $U_{2}$ to user $c$.
Therefore, user $c$ achieves a GDoF of $(\gamma_{2} - \beta_{2}) + \gamma_{\mathrm{IA}}$. 

We now partition the lower $\beta_{2}$ signal levels of user $b$ into: 1) the lowest $(\beta_{1} - \alpha_{1})$ levels, which are not received by user $a$,  
2) the middle $(\alpha_{2} - \beta_{2})$ levels,
adjacent to the lower levels at user $a$ which are corrupted by interference from the TIN scheme,  and 3) 
the upper $(\alpha_{1} - \alpha_{2}) - (\beta_{1} - 2\beta_{2})$ levels.
We are interested in the upper $\gamma_{\mathrm{IA}} $ levels of the third part of this partition, 
in which user $b$ receives a part of $X_{1}$, denoted by $U_{1}$, corrupted by interference from $U_{2}$.
Due to this interference, user $b$  achieves a GDoF of $(\beta_{1} - \gamma_{1})  - \gamma_{\mathrm{IA}}$.

Considering user $a$, the received signal levels are partitioned into: 1) the lowest $(\alpha_{2} - \beta_{2})$ levels, which are corrupted by interference from the TIN scheme, 
2) the middle $(\alpha_{1} - \alpha_{2}) - (\beta_{1} - 2\beta_{2})$ levels, of which the upper $\gamma_{\mathrm{IA}} $ levels contain $U_{1}$,
and 3) the upper $(\beta_{1} - \beta_{2}) - (\alpha_{1} - \alpha_{2})$,
which contain interference from $U_{2}$.
It can be verified that the above partition exists.
Moreover, it can be seen that $U_{1}$ and $U_{2}$ are received by user $a$ through non-overlapping signal levels. 
Therefore, user $a$  achieves a GDoF of $ \gamma_{\mathrm{IA}}$ by decoding $U_{1}$.
It follows that the sum-GDoF of $d_{\mathrm{IA}}^{\mathrm{IBC}}$ in \eqref{eq:sum_GDoF_IA_2_cell_3_user} is achieved in this case as well.
\end{enumerate}
The key step in the above discussion is identifying those signal levels of $X_{1}$ and $X_{2}$ which align at user $b$, yet are received separately (with no overlap) at user $a$.
Signal levels with such property open the door for IA, which surpasses TIN in the CTIN regime as seen above. 
\begin{remark}
For the case where $\beta_{1} - \beta_{2} = \alpha_{1} - \alpha_{2}$,  we have $\gamma_{\mathrm{IA}} = 0$, and hence  $d_{\mathrm{IA}}^{\mathrm{IBC}} $
collapses to $d_{\mathrm{TINA}}^{\mathrm{IBC}}$.
In this case, the above-described scheme, with IA over signal power levels, achieves no GDoF gain over TIN.
A similar situation arises for the dual uplink 2-cell, 3-user network  in \cite{Gherekhloo2016}, where it was also observed that  IA over signal power levels fails to surpass 
TIN wherever $\beta_{1} - \beta_{2} = \alpha_{1} - \alpha_{2}$.
Alternatively, it was shown that a scheme which employs phase alignment, instead of signal level alignment,
achieves a strict GDoF gain over TIN when  $\beta_{1} - \beta_{2} = \alpha_{1} - \alpha_{2}$, except for a set of channel coefficients of measure zero  (see \cite[Rem. 11]{Gherekhloo2016}).
We envisage that, in a similar fashion, phase alignment achieves a strict improvement over TIN in the downlink case as well.
\hfill $\lozenge$
\end{remark}
\section{Proofs of Lemma \ref{lemma:more_capable} and Lemma \ref{lemma:diff_entropies}}
\label{appendix:proof_lemmas_more_capable_diff_entropies}
To gain some insights, we start our treatment of  Lemma \ref{lemma:more_capable} and Lemma \ref{lemma:diff_entropies} 
by looking through the lens of the well-known Avestimehr-Diggavi-Tse (ADT) linear deterministic model \cite{Avestimehr2011}.
This model separates  signal power levels, e.g. as the ones in Fig. \ref{fig:signal_levels_TIN} and Fig. \ref{fig:signal_levels_IA}, into 
parallel, non-interacting bit levels.
As  noted and observed through a number of previous works, the ADT  model is particularly useful for deriving and understanding TIN GDoF results,
as the TIN GDoF framework tends to be insensitive to details not captured by this model \cite{Geng2016,Geng2018,Sun2016,Gherekhloo2016}.
The proof steps that we develop next in the context of the ADT model are then 
translated to the original Gaussian model. 
\begin{figure}[h]
\centering
\includegraphics[width = 0.8\textwidth]{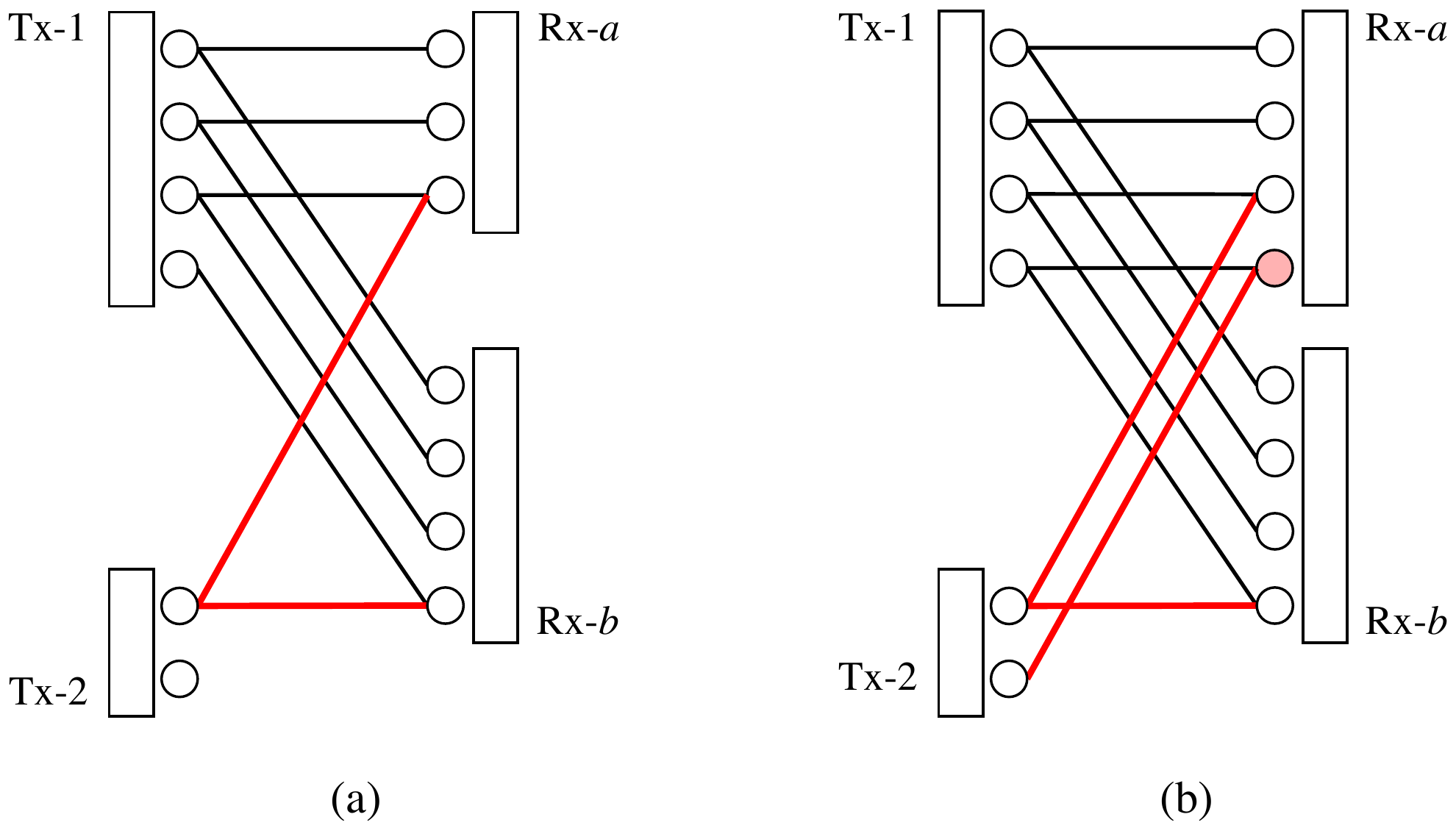}
\caption{\small
ADT linear deterministic network examples:
(a) $(m_{1},m_{2},n_{1},n_{2}) = (3,1,4,1)$, hence $n_{1} - n_{2} \geq m_{1}$, and
(b) $(m_{1},m_{2},n_{1},n_{2}) = (4,2,4,1)$,  hence $n_{1} - n_{2} < m_{1}$ and $n_{1}  - 2n_{2} \geq m_{1}  - m_{2} $. 
Note that the most-significant-bit occupies the highest bit level of each signal. In example (a),   receiver $b$ is more capable than receiver $a$.
In example (b), after removing the lowest bit level from the observation of receiver $a$ (highlighted in red), the channel reduces to the one in example (a), 
where receiver $b$ is more capable than receiver $a$.}
\label{fig:ADT_model}
\end{figure}
\subsection{ADT Linear Deterministic Model}
In the ADT deterministic model, for each $i \in \{1,2\}$, the channel strength levels $\alpha_{i}$ and $\beta_{i}$
map into $m_{i}$ and $n_{i}$, respectively, where $m_{i}$ and $n_{i}$ are non-negative integers, $m_{1} \geq m_{2}$ and $n_{1} \geq n_{2}$. 
The signal model corresponding to \eqref{eq:signal_model_y_a} and \eqref{eq:signal_model_y_b}
is therefore described by 
\begin{align}
\label{eq:ADT_model_1}
\mathbf{y}_{a}  & =  \mathbf{S}^{q - m_{1}}  \mathbf{x}_{1} \oplus \mathbf{S}^{q- m_{2}} \mathbf{x}_{2}  \\
\label{eq:ADT_model_2}
\mathbf{y}_{b}  & =   \mathbf{S}^{q- n_{1}} \mathbf{x}_{1} \oplus \mathbf{S}^{q- n_{2}} \mathbf{x}_{2}.
\end{align}
In the above, $\mathbf{y}_{a}$, $\mathbf{y}_{b}$, $\mathbf{x}_{1}$, $\mathbf{x}_{2}$ are binary column vectors of length $q$ each,
$\mathbf{S}$ is a down-shift matrix of size $q \times q$, and $q \triangleq \max\{m_{1},m_{2},n_{1},n_{2}\}$ 
(see \cite{Avestimehr2011,Bresler2008} for a detailed exposition of the ADT model).

Next, we look at the two regimes of interest, that correspond to the conditions in  Lemma \ref{lemma:more_capable} and 
Lemma \ref{lemma:diff_entropies}.
We observe that in such regimes, we have $q = n_{1}$.
For ease of exposition, we assume that $W$ and $\mathbf{x}_{1}$ are equivalent when treating Lemma \ref{lemma:more_capable},
while we ignore the conditioning on $W$ when treating Lemma \ref{lemma:diff_entropies}.
Moreover, we consider only a single use of the deterministic channel, and we 
use $\mathbf{x}(i:j)$ to denote  the vector comprising entries of  $\mathbf{x}$ which are indexed by $\langle i :  j \rangle$.
\begin{enumerate}
\item $n_{1} - n_{2} \geq m_{1}$: 
Suppose that both receiver $a$ and receiver $b$ wish to decode $\mathbf{x}_{1}$, while $\mathbf{x}_{2}$ is seen as interference.
Receiver $a$ is enhanced by providing the  interfering signal $\mathbf{x}_{2}$, leading to:
\begin{align}
I \big( \mathbf{x}_{1} ; \mathbf{y}_{a} \big) & \leq  I \big( \mathbf{x}_{1} ; \mathbf{y}_{a} |  \mathbf{x}_{2} \big) \\ 
& = H \big( \mathbf{y}_{a} |  \mathbf{x}_{2} \big) \\
& =  H \big( \mathbf{x}_{1}(1:m_{1}) \big).
\end{align}
On the other hand, since $n_{1} \geq m_{1} + n_{2}$, the observation of receiver $b$ may be expressed as
\begin{equation}
\mathbf{y}_{b} = \left[ 
\begin{array}{c}
\mathbf{x}_{1}(1:m_{1})  \\
\mathbf{y}_{b}(m_{1}+1 : n_{1})
\end{array}
\right].
\end{equation}
As the upper $m_{1}$ bit levels of $\mathbf{y}_{b} $ are received free of  interference, 
all information about $\mathbf{x}_{1}$ contained in $\mathbf{y}_{a} $  can be retrieved from $\mathbf{y}_{b} $.
Formally, we have the following:
\begin{align}
I \big( \mathbf{x}_{1} ; \mathbf{y}_{b} \big) & =  I \big( \mathbf{x}_{1} ;  \mathbf{x}_{1}(1:m_{1}) \big)
+ I \big( \mathbf{x}_{1} ;   \mathbf{y}_{b}(m_{1}+1 : n_{1}) | \mathbf{x}_{1}(1:m_{1})  \big)  \\ 
& \geq   H \big( \mathbf{x}_{1}(1:m_{1}) \big) \\
\label{eq:more_capable_determanistic}
& \geq I \big( \mathbf{x}_{1} ; \mathbf{y}_{a} \big).
\end{align}
Therefore, receiver $b$ is \emph{more capable} than receiver $a$ in this regime (see Fig. \ref{fig:ADT_model}(a)).
\item $n_{1} - 2n_{2} \geq m_{1} - m_{2} $ and $n_{2} \leq m_{2}$:
Here we have $m_{1} \geq m_{2} - n_{2} \geq 0$.
Therefore, $\mathbf{y}_{a}$ may be split into upper $m_{1} - (m_{2} - n_{2})$ bit levels and lower
$(m_{2} - n_{2})$ bit levels, that is:
\begin{equation}
\mathbf{y}_{a} = \left[ 
\begin{array}{c}
\mathbf{y}_{a}(1:m_{1}-m_{2}+n_{2}) \\
\mathbf{y}_{a}(m_{1}-m_{2}+n_{2}+1 : m_{1})
\end{array}
\right] =
\left[ 
\begin{array}{c}
\mathbf{y}_{a}' \\
\mathbf{y}_{a}''
\end{array}
\right].
\end{equation}
The difference of entropies $H\big( \mathbf{y}_{a} \big)  - H\big( \mathbf{y}_{b} \big) $
is hence bounded as 
\begin{align}
H\big( \mathbf{y}_{a} \big)  - H\big( \mathbf{y}_{b} \big) & =  
H\big(  \mathbf{y}_{a}',\mathbf{y}_{a}'' \big)   - H\big( \mathbf{y}_{b} \big) \\
& \leq  
H\big( \mathbf{y}_{a}'' \big)  + H\big( \mathbf{y}_{a}' \big)   - H\big( \mathbf{y}_{b}\big)  \\
\label{eq:proof_diff_lemma_det_inequality_0}
& \leq  (m_{2} - n_{2}) + H\big( \mathbf{y}_{a}' \big)   - H\big( \mathbf{y}_{b} \big).
\end{align}
The problem of interest reduces to bounding $H\big( \mathbf{y}_{a}' \big)   - H\big( \mathbf{y}_{b} \big)$,
where $\mathbf{y}_{a}' $  is a new observation comprising the upper $m_{1} - (m_{2} - n_{2})$ bit levels of $\mathbf{y}_{a} $.
We may express  $\mathbf{y}_{a}' $ and $\mathbf{y}_{b}$ as:
\begin{equation}
\mathbf{y}_{a}' = \left[ 
\begin{array}{c}
\mathbf{x}_{1}(1:m_{1} - m_{2})  \\
\mathbf{x}_{1}(m_{1} - m_{2} + 1 : m_{1} - m_{2} + n_{2}) \oplus  
\mathbf{x}_{2}( 1 : n_{2})
\end{array}
\right]
\end{equation}
\begin{equation}
\mathbf{y}_{b} = \left[ 
\begin{array}{c}
\mathbf{x}_{1}(1:n_{1} - n_{2})  \\
\mathbf{x}_{1}(n_{1} - n_{2} + 1 : n_{1})  \oplus \mathbf{x}_{2}(1 : n_{2}) 
\end{array}
\right].
\end{equation}
For this new channel, suppose that $\mathbf{x}_{1}$ is the desired signal and $\mathbf{x}_{2}$ is the interfering signal.
It follows that $\mathbf{y}_{a}' $ sees signal and interference levels of $m_{1}' = m_{1} - m_{2} + n_{2}$ and $m_{2}' = n_{2}$, respectively, 
while $\mathbf{y}_{b} $ sees signal and interference levels of $n_{1}$ and $n_{2}$, respectively.
Since $n_{1} \geq m_{1} - m_{2} + 2n_{2} \Leftrightarrow n_{1} - n_{2} \geq m_{1}'$,  it follows from \eqref{eq:more_capable_determanistic}
in the previous part that receiver  $b$ is more capable than the new receiver $a'$ (with observation $\mathbf{y}_{a}'$),
that is:
\begin{equation}
I\big(\mathbf{x}_{1} ; \mathbf{y}_{a}' \big)   \leq
I\big( \mathbf{x}_{1} ; \mathbf{y}_{b} \big).
\end{equation}
This is used to bound $H\big( \mathbf{y}_{a}' \big)   - H\big( \mathbf{y}_{b} \big)$ as follows:
\begin{align}
H\big( \mathbf{y}_{a}' \big)   - H\big( \mathbf{y}_{b} \big) &  = I\big(\mathbf{x}_{1},\mathbf{x}_{2} ; \mathbf{y}_{a}' \big)   - 
I\big( \mathbf{x}_{1},\mathbf{x}_{2} ; \mathbf{y}_{b} \big) \\
\label{eq:proof_diff_lemma_det_inequality}
 & =  I\big(\mathbf{x}_{1} ; \mathbf{y}_{a}' \big)   - 
I\big( \mathbf{x}_{1} ; \mathbf{y}_{b} \big) \\
\label{eq:proof_diff_lemma_det_inequality_1}
 & \leq  0
\end{align}
where in the above, \eqref{eq:proof_diff_lemma_det_inequality} holds since 
$I\big(\mathbf{x}_{2} ; \mathbf{y}_{a}' | \mathbf{x}_{1}  \big)  = I\big(\mathbf{x}_{2} ; \mathbf{y}_{b} |  \mathbf{x}_{1} \big)  = 
H\big(\mathbf{x}_{2}(1:n_{2}) \big)$.
From \eqref{eq:proof_diff_lemma_det_inequality_0} and \eqref{eq:proof_diff_lemma_det_inequality_1}, 
we obtain the desired bound $H\big( \mathbf{y}_{a}\big)   - H\big( \mathbf{y}_{b} \big) \leq (m_{2} - n_{2}) $.

To summarize the above, after removing the lower $(m_{2} - n_{2})$ bit levels from the observation $\mathbf{y}_{a}$, which contribute at most 
$(m_{2} - n_{2})$ bits to the difference of entropies  $H\big( \mathbf{y}_{a}\big)   - H\big( \mathbf{y}_{b} \big)$, 
we are left with a new channel with outputs $\mathbf{y}_{a}'$ and $\mathbf{y}_{b}$.
In this new channel, receiver $b$ is more capable than the new receiver $a'$ with respect to the desired signal $\mathbf{x}_{1}$, 
and both receiver see the same level of interference from the interfering signal $\mathbf{x}_{2}$. 
Therefore, this new channel cannot contribute positively to the difference 
of entropies (see Fig. \ref{fig:ADT_model}(b)). 
\end{enumerate}
Next, we translate the insights gained from the deterministic model to the original Gaussian model.
\subsection{Proof of Lemma \ref{lemma:more_capable}}
Reverting t back to the original Gaussian setting of interest,
we start by observing that the term  $I(W ; Y_{a}^{n} ) $ is bounded above as
\begin{align}
I(W ; Y_{a}^{n} ) & \leq I(W ; Y_{a}^{n} |  X_{2}^{n} )  \\
\label{eq:I_diff_proof_UB}
& =  I(W ; a_{1}X_{1}^{n} + Z_{a}^{n} ),
\end{align}
which follows from the independence of $W$ and $X_{2}^{n}$.
Next, we bound $I(W ; Y_{b}^{n} )$ below as
\begin{align}
\label{eq:I_diff_proof_LB_0}
I(X_{1}^{n} ; Y_{b}^{n} ) & = I\Big(W  ; \frac{b_{1}}{b_{2}}X_{1}^{n} + X_{2}^{n} + \frac{1}{b_{2}}Z_{b}^{n} \Big) \\
\label{eq:I_diff_proof_LB_1}
& \geq I\Big( W  ; \frac{b_{1}}{b_{2}}X_{1}^{n} + X_{2}^{n} + Z_{b}^{n} \Big) \\
\label{eq:I_diff_proof_LB_2}
& \geq I\Big( W ; \frac{b_{1}}{b_{2}}X_{1}^{n} \! + \! Z_{b}^{n}  \Big) \! - \! I( X_{2}^{n} ; X_{2}^{n} \! + \!  Z_{b}^{n} 
) \\
\label{eq:I_diff_proof_LB_3}
& \geq I (W ; a_{1}X_{1}^{n} \! + \! Z_{b}^{n} ) \!-\! I(X_{2}^{n} ; X_{2}^{n} \! + \! Z_{b}^{n} ).
\end{align}
In the above, the inequality \eqref{eq:I_diff_proof_LB_1} holds due to $|b_{2}|^{2} \geq 1$ and $Z_{b}(t) \sim\mathcal{N}_{\mathbb{C}}(0,1)$, 
i.e. the output signal in  \eqref{eq:I_diff_proof_LB_1} is a (stochastically) degraded version of the output signal in \eqref{eq:I_diff_proof_LB_0}. 
Using a similar argument,  the inequality \eqref{eq:I_diff_proof_LB_3} holds due to $\frac{ |b_{1}|^{2} }{ |b_{2}|^{2}  } \geq |a_{1}|^{2}$.
It remains to justify the inequality in
\eqref{eq:I_diff_proof_LB_2}, which is obtained from  the chain rule as follows
\begin{align}
I\Big(W ,X_{2}^{n} ; \frac{b_{1}}{b_{2}}X_{1}^{n} + X_{2}^{n} + Z_{b}^{n} \Big) & = 
I\Big(W  ; \frac{b_{1}}{b_{2}}X_{1}^{n} + X_{2}^{n} + Z_{b}^{n} \Big)  +
I\Big(X_{2}^{n} ;  \frac{b_{1}}{b_{2}}X_{1}^{n} + X_{2}^{n} + Z_{b}^{n}  | W \Big)   \\
&= I\Big( W ; \frac{b_{1}}{b_{2}}X_{1}^{n} \! + \! Z_{b}^{n} \Big)   +
I\Big(X_{2}^{n} ; \frac{b_{1}}{b_{2}}X_{1}^{n} \! +  \! X_{2}^{n} \! +  \! Z_{b}^{n} \Big)  \\
& \geq I\Big( W ; \frac{b_{1}}{b_{2}}X_{1}^{n} + Z_{b}^{n} \Big) \\
\implies I\Big(W  ; \frac{b_{1}}{b_{2}}X_{1}^{n} + X_{2}^{n} + Z_{b}^{n} \Big) 
& \geq   I\Big( W ; \frac{b_{1}}{b_{2}}X_{1}^{n} + Z_{b}^{n} \Big) -  I\Big(X_{2}^{n} ;  \frac{b_{1}}{b_{2}}X_{1}^{n} + X_{2}^{n} + Z_{b}^{n}  | W \Big)  \\
\label{eq:I_diff_proof_LB_4}
& \geq   I\Big( W ; \frac{b_{1}}{b_{2}}X_{1}^{n} + Z_{b}^{n} \Big) -  I\Big(X_{2}^{n} ;  \frac{b_{1}}{b_{2}}X_{1}^{n} + X_{2}^{n} + Z_{b}^{n}  | W , X_{1}^{n}\Big)  \\
& =  I\Big( W ; \frac{b_{1}}{b_{2}}X_{1}^{n} + Z_{b}^{n} \Big) -  I\Big(X_{2}^{n} ;  X_{2}^{n} + Z_{b}^{n}  \Big) .
\end{align}
Note that the inequality in \eqref{eq:I_diff_proof_LB_4} holds due to the independence of $X_{2}^{n}$ and $X_{1}^{n}$. 

Equipped with \eqref{eq:I_diff_proof_UB} and \eqref{eq:I_diff_proof_LB_3}, we directly obtain
\begin{align}
I(W ; Y_{a}^{n} ) - I( W  ; Y_{b}^{n} )  & \leq I(X_{2}^{n} ; X_{2}^{n} + Z_{b}^{n} )  \\
\label{eq:lemma_I_diff_no_W}
& \leq n,
\end{align}
where \eqref{eq:lemma_I_diff_no_W} follows from the capacity of the Gaussian channels under an average power constraint.  
\subsection{Proof of Lemma \ref{lemma:diff_entropies}}
\label{appendix:subsection_proof_lemma_diff}
For the proof of this lemma, we start by omitting the conditioning on $W$  in \eqref{eq:diff_entropies} for brevity, 
and then incorporate it at the end. 
We define a degraded version of $Y_{a}$ given by
\begin{align}
Y_{a}'(t) & = g\big[ a_{1}X_{1}(t) +  a_{2}X_{2}(t) \big] + Z_{a}'(t) \\
           & = a_{1}'X_{1}(t) +  a_{2}'X_{2}(t)  + Z_{a}'(t)
\end{align}
where $g = \sqrt{P^{-(\alpha_{2} - \beta_{2})}}$ and $ Z_{a}'(t) \sim \mathcal{N}_{\mathbb{C}}(0,1)$.
Due to $\alpha_{2} \geq \beta_{2} \geq 0$, we have $|g|^{2} \leq 1$, 
from which the degradedness of $Y_{a}'$ with respect to $Y_{a}$ follows.
Striking an analogy with the deterministic model discussed earlier, 
$Y_{a}'$  can be seen as the upper $\alpha_{1} - (\alpha_{2} - \beta_{2})$ signal levels of $Y_{a}$.

For brevity, we define $X^{n} \triangleq \big(X_{1}^{n},X_{2}^{n}\big)$, which we use in the following sequence of inequalities:
\begin{align}
h\big(Y_{a}^{ n}\big)  - h \big(Y_{b}^{n} \big) & = I\big(X^{n};Y_{a}^{ n}\big)  - I \big(X^{n};Y_{b}^{n} \big) \\
\label{eq:diff_ha_hb_1}
& = I\big(X^{n};Y_{a}^{n},Y_{a}'^{n}\big)  - I \big(X^{n};Y_{b}^{n} \big) \\
& = I\big(X^{n};Y_{a}^{n}|Y_{a}'^{n}\big) + I\big(X^{n};Y_{a}'^{n}\big)  - I \big(X^{n};Y_{b}^{n} \big) \\
\label{eq:diff_ha_hb_2}
& = h\big(Y_{a}^{n}|Y_{a}'^{n}\big) - h\big(Z_{a}^{n}\big) + h\big(Y_{a}'^{n}\big)  - h \big(Y_{b}^{n} \big).
\end{align}
As remarked above, the vector input
$X$ and the scalar outputs $Y_{a}$ and $Y_{a}'$ form a (stochastically) degraded Gaussian BC,
from which we have $I\big(X^{n};Y_{a}'^{n} | Y_{a}^{n}\big) = 0$, and therefore \eqref{eq:diff_ha_hb_1} holds.
Next, we separately bound the two differences of differential entropies in \eqref{eq:diff_ha_hb_2}.
\begin{itemize}
\item
Starting with the first difference $h\big(Y_{a}^{n}|Y_{a}'^{n}\big) - h\big(Z_{a}^{n}\big)$, we have
\begin{align}
\label{eq:diff_ha_hb_1st_0}
h\big(Y_{a}^{n}|Y_{a}'^{n}\big) - h\big(Z_{a}^{n}\big)  & \leq \sum_{t = 1}^{n} \left[ h\big(Y_{a}(t)|Y_{a}'(t)\big) - h\big(Z_{a}(t)\big) \right] \\
\label{eq:diff_ha_hb_1st_1}
& \leq n \left[ h\big(Y_{a}^{\mathrm{G}}|Y_{a}'^{\mathrm{G}}\big) - h\big(Z_{a}\big) \right] \\
\label{eq:diff_ha_hb_1st_2}
& = n \log \left( \sigma^{2}_{Y_{a}^{\mathrm{G}}|Y_{a}'^{\mathrm{G}}} \right)
\end{align}
where $Y_{a}^{\mathrm{G}}$ and $Y_{a}'^{\mathrm{G}}$ denote the outputs $Y_{a}$ and $Y_{a}'$, respectively,
when the inputs are drawn from Gaussian distributions as $X_{i} = X_{i}^{\mathrm{G}} \sim \mathcal{N}_{\mathbb{C}}(0,1)$, for all $i \in \{1,2\}$.
The inequality in \eqref{eq:diff_ha_hb_1st_1} follows from \cite[Lem. 1]{Annapureddy2009}
and the i.i.d. noise assumption, where $Z_{a} \sim Z_{a}(t)$.
This Gaussianity of signals leads to \eqref{eq:diff_ha_hb_1st_2}, where
the conditional variance $\sigma^{2}_{Y_{a}^{\mathrm{G}}|Y_{a}'^{\mathrm{G}}} $  is defined as:
\begin{equation}
\label{eq:cond_var}
\sigma^{2}_{Y_{a}^{\mathrm{G}}|Y_{a}'^{\mathrm{G}}} \triangleq
\E \big[ |Y_{a}^{\mathrm{G}}|^{2} \big] - \frac{\E \big[ Y_{a}^{\mathrm{G}}Y_{a}'^{\mathrm{G}\ast} \big]
\E \big[ Y_{a}'^{\mathrm{G}}Y_{a}^{\mathrm{G}\ast} \big]}{\E \big[ |Y_{a}'^{\mathrm{G}}|^{2} \big]}.
\end{equation}
Next, we wish to calculate the terms constituting \eqref{eq:cond_var}
and bound the variance.
To this end, and for convenience, we express $Y_{a}^{\mathrm{G}}$ and $Y_{a}'^{\mathrm{G}}$ compactly as:
\begin{align}
Y_{a}^{\mathrm{G}} \triangleq \mathbf{a}^{\Hrm} X^{\mathrm{G}} + Z_{a}
\ \ \text{and} \ \
Y_{a}'^{\mathrm{G}} \triangleq g\mathbf{a}^{\Hrm} X^{\mathrm{G}} + Z_{a}'
\end{align}
where $\mathbf{a}$ and $X^{\mathrm{G}}$ are both column vectors defined as:
\begin{equation}
\mathbf{a} \triangleq 
\left[ 
\begin{array}{c}
a_{1}^{\ast}  \\
a_{2}^{\ast}
\end{array}
\right] 
\quad \text{and} \quad
X^{\mathrm{G}} \triangleq 
\left[ 
\begin{array}{c}
X_{1}^{\mathrm{G}} \\
X_{2}^{\mathrm{G}}
\end{array}
\right].
\end{equation}
It therefore follows that
\begin{align}
\sigma^{2}_{Y_{a}^{\mathrm{G}}|Y_{a}'^{\mathrm{G}}} & = 
1 + \| \mathbf{a} \|^{2} - \frac{|g|^{2} \| \mathbf{a} \|^{2} \| \mathbf{a} \|^{2}}{1 + |g|^{2} \| \mathbf{a} \|^{2}} \\
& = \frac{1 + |g|^{2} \| \mathbf{a} \|^{2} +  \| \mathbf{a} \|^{2}}{1 + |g|^{2} \| \mathbf{a} \|^{2}} \\
& \leq \frac{3\| \mathbf{a} \|^{2}}{1 + |g|^{2} \| \mathbf{a} \|^{2}}  \\
& \leq 3 P^{\alpha_{2} - \beta_{2}}.
\end{align}
Combining the above, we obtain
\begin{equation}
\label{eq:diff_ha_hb_1st}
h\big(Y_{a}^{n}|Y_{a}'^{n}\big) - h\big(Z_{a}^{n}\big) \leq n(\alpha_{2} - \beta_{2})\log(P) + n\log(3).
\end{equation}
\item Moving on to the second difference $h\big(Y_{a}'^{n}\big) - h \big(Y_{b}^{n} \big)$,
we first recall that the channel coefficients in $Y_{a}'$ and $Y_{b}$ have the following gains: $|a_{1}'|^{2} = P^{\alpha_{1} - (\alpha_{2} - \beta_{2})}$,
$|a_{2}'|^{2} = P^{\beta_{2}}$, $|b_{1}|^{2} = P^{\beta_{1}}$ and $|b_{2}|^{2} = P^{\beta_{2}}$.
Therefore, these coefficients satisfy
\begin{align}
1 \leq |a_{1}'|^{2} \leq \frac{|b_{1}|^{2}}{|b_{2}|^{2}}   & \impliedby  0 \leq \alpha_{1} - \alpha_{2} + \beta_{2} \leq \beta_{1} - \beta_{2} \\
1 \leq |a_{2}'|^{2} = |b_{2}|^{2}   & \impliedby  0 \leq \beta_{2}.
\end{align}
Building upon the insights gained from the deterministic model, 
by taking $X_{1}$ to be a desired signal and $X_{2}$ to be an interfering signal, 
a receiver observing $Y_{b}$  is less noisy (hence more capable) than a receiver observing  $Y_{a}'$, up to a constant gap. 
Hence, from Lemma \ref{lemma:more_capable}, we have 
\begin{equation}
\label{eq:more_capable_proof_diff}
I\big( X_{1}^{n}; Y_{a}'^{n}\big)  - I\big(X_{1}^{n}; Y_{b}^{n} \big)  \leq n. 
\end{equation}
Moreover, since both receivers see the same level of interference, we have 
\begin{equation}
\label{eq:same_int_proof_diff}
I\big( X_{2}^{n}; Y_{a}'^{n} |  X_{1}^{n} \big)  - I\big(X_{2}^{n}; Y_{b}^{n} | X_{1}^{n} \big)  = I\big( X_{2}^{n}; a_{2}' X_{2}^{n} + Z_{a}'^{n} \big)  - 
I\big(X_{2}^{n} ; b_{2} X_{2}^{n} + Z_{b}^{n} \big) = 0.
\end{equation}
Combining  \eqref{eq:more_capable_proof_diff} and \eqref{eq:same_int_proof_diff}, we obtain
\begin{align}
h\big(Y_{a}'^{n}\big)  - h\big(Y_{b}^{n} \big)  & = I\big( X^{n}; Y_{a}'^{n}\big)  - I\big(X^{n}; Y_{b}^{n} \big)  \\
& = I\big( X_{1}^{n}; Y_{a}'^{n}\big)  - I\big(X_{1}^{n}; Y_{b}^{n} \big)  + I\big( X_{2}^{n}; Y_{a}'^{n} |  X_{1}^{n} \big)  - I\big(X_{2}^{n}; Y_{b}^{n} | X_{1}^{n} \big)  \\
\label{eq:diff_ha_hb_2nd}
& \leq n.
\end{align}
\end{itemize}
From \eqref{eq:diff_ha_hb_2}, \eqref{eq:diff_ha_hb_1st} and \eqref{eq:diff_ha_hb_2nd}, we obtain the bound
\begin{equation}
\label{eq:diff_entropies_no_W}
h\big(Y_{a}^{ n}\big)  - h \big(Y_{b}^{n} \big) \leq
n(\alpha_{2} - \beta_{2})\log(P) + n\log(6).
\end{equation}
The only remaining part is to incorporate the conditioning on $W$ into  \eqref{eq:diff_entropies_no_W}.
For this purpose, we highlight the dependency of the outputs on $X_{1}^{n}$ as $Y_{a}^{n}(X_{1}^{n})$
and $Y_{b}^{n}(X_{1}^{n})$. 
We proceed as 
\begin{align}
h\big(Y_{a}^{ n} (X_{1}^{n}) | W \big)  - h \big(Y_{b}^{n} (X_{1}^{n})  | W \big)  & = 
\int_{w} \Big[ h\big(Y_{a}^{ n} (X_{1}^{n})  | W = w \big)  - h \big(Y_{b}^{n} (X_{1}^{n}) | W = w \big)  \Big] dF(w) \\
\label{eq:diff_entropies_with_W_proof}
& = 
\int_{w} 
\Big[ h\big(Y_{a}^{ n} (X_{1w}^{n}) \big)  - h \big(Y_{b}^{n} (X_{1w}^{n}) \big)  \Big] dF(w) \\
\label{eq:diff_entropies_with_W_proof_2}
& \leq \Big[ n(\alpha_{2} - \beta_{2})\log(P) + n\log(6) \Big] \cdot \int_{w} dF(w) \\ 
\label{eq:diff_entropies_with_W_proof_3}
& =  n(\alpha_{2} - \beta_{2})\log(P) + n\log(6),
\end{align}
where  $X_{1w}^{n} \sim X_{1}^{n}|\{W = w\}$, i.e. $X_{1w}^{n}$ is drawn from the same distribution of $X_{1}^{n}$ given $W = w$,
and \eqref{eq:diff_entropies_with_W_proof}  follows since $W$ may only change the outputs through $X_{1}^{n}$ 
(see the Markov chain in \eqref{eq:Markov_chain_lemmas}). 
Finally, we observe that or every $w$, the difference of entropies in \eqref{eq:diff_entropies_with_W_proof} is bounded above as in \eqref{eq:diff_entropies_no_W}.
Therefore, the bound in \eqref{eq:diff_entropies_with_W_proof_2} (and \eqref{eq:diff_entropies_with_W_proof_3}) holds, which completes the proof. 
\bibliographystyle{IEEEtran}
\bibliography{References}
\end{document}